\theoremstyle{plain}
\newtheorem{theorem}{Theorem}[section]
\newtheorem{definition}[theorem]{Definition}
\newtheorem{claim}[theorem]{Claim}
\newtheorem{lemma}[theorem]{Lemma}
\newtheorem*{claim*}{Claim}
\newtheorem{remark}[theorem]{Remark}
\newcommand{\ftp}[1]{\left\lfloor#1\right\rfloor}
\begin{document}

\title{Locally-iterative $(\Delta+1)$-Coloring in Sublinear (in $\Delta$) Rounds}

\author{Xinyu Fu}
\email{xyfu@smail.nju.edu.cn}

\author{Yitong Yin}
\email{yinyt@nju.edu.cn}

\author{Chaodong Zheng}
\email{chaodong@nju.edu.cn}

\affiliation{
	\department{State Key Laboratory for Novel Software Technology}
	\institution{Nanjing University}
	\country{China}
}

\begin{abstract}
Distributed graph coloring is one of the most extensively studied problems in distributed computing. There is a canonical family of distributed graph coloring algorithms known as the \emph{locally-iterative} coloring algorithms, first formalized in [Szegedy and Vishwanathan, STOC~'93]. In such algorithms, every vertex iteratively updates its own color according to a predetermined function of the current coloring of its local neighborhood. Due to the simplicity and naturalness of its framework, locally-iterative coloring algorithms are of great significance both in theory and practice.

In this paper, we give a locally-iterative $(\Delta+1)$-coloring algorithm with runtime $O(\Delta^{3/4}\log\Delta)+\log^*{n}$, using messages of size $O(\log{n})$. This is the first locally-iterative $(\Delta+1)$-coloring algorithm with sublinear-in-$\Delta$ runtime, and answers the main open question raised by previous best result [Barenboim, Elkin, and Goldberg, JACM~'21].
The key component of our algorithm is a new locally-iterative procedure that transforms an $O(\Delta^2)$-coloring to a $(\Delta+O(\Delta^{3/4}\log\Delta))$-coloring in $o(\Delta)$ time.
As an application of our result, we also devise a self-stabilizing algorithm for $(\Delta+1)$-coloring with $O(\Delta^{3/4}\log\Delta)+\log^*{n}$ stabilization time, using $O(\log{n})$-bit messages. To the best of our knowledge, this is the first self-stabilizing algorithm for $(\Delta+1)$-coloring in the CONGEST model with sublinear-in-$\Delta$ stabilization time.
\end{abstract}


\maketitle


\section{Introduction}\label{sec:intro}

Distributed graph coloring is one of the most fundamental and extensively studied problems in distributed computing~\cite{luby86,alon86,linial87,cole86,goldberg87,szegedy93,kuhn06,barenboim13,barenboim14,fraigniaud16,barenboim16sublinear,barenboim16locality,harris16,chang18,maus20,barenboim21,ghaffari22}.
As a locally checkable labeling problem, distributed graph coloring is widely considered to be one of the benchmark problems for answering the fundamental question ``what can be computed locally''~\cite{naor93}. The problem also has a wide range of applications, including channel allocation, scheduling, etc.~\cite{guellati10,kuhn09}.

Given a graph $G=(V,E)$ and a palette $Q$ of $q=|Q|$ colors, a \emph{$q$-coloring} is a mapping $\phi:V\to Q$. 
A $q$-coloring is \emph{proper} if $\phi(u)\neq \phi(v)$ for every edge $(u,v)\in E$. Distributed graph coloring is often studied in the synchronous message-passing model~\cite{peleg00}. 
Here, a communication network is represented by an $n$-vertex graph $G=(V,E)$ with maximum degree $\Delta$. Each vertex $v\in V$ hosts a processor and each edge $(u,v)\in E$ denotes a communication link between $u$ and $v$. Each $v\in V$ has a unique identifier $id(v)$ from the set $[n]=\{0,1,\cdots,n-1\}$. In each synchronous \emph{round}, vertices perform local computation and exchange messages with their neighbors. The time complexity of an algorithm is the maximum number of rounds required for all vertices to terminate.

There is a natural family of distributed graph coloring algorithms known as the \emph{locally-iterative coloring algorithms}, introduced by Szegedy and Vishwanathan~\cite{szegedy93}. Throughout the execution of such algorithms, a proper coloring of the network graph is maintained and updated from round to round. Moreover, in each round, for each vertex $v\in V$, its next color is computed from its current color and the current colors of its neighbors $N(v)=\{u\in V\mid(u,v)\in E)\}$.

\begin{definition}
[\textbf{Locally-iterative Coloring Algorithms}]
\label{def:locally-iter-alg}
In the synchronous message-passing model, an algorithm for graph coloring is said to be \emph{locally-iterative} if it maintains a sequence of proper colorings $\phi_t$ of the input $G=(V,E)$ such that:
\begin{itemize}
	\item The initial coloring $\phi_0$ is constructed locally in the sense that, for every vertex $v$, its initial color $\phi_0(v)$ is computed locally from $id(v)$.
	\item In each round $t\ge 1$, every vertex $v$ computes its next color $\phi_{t}(v)$ based only on its current color $\phi_{t-1}(v)$ along with the multiset of colors $\{\phi_{t-1}(u)\mid u\in N(v)\}$ appearing in $v$'s neighborhood.
	Particularly, in each round $t\geq 1$, every vertex $v$ only broadcasts $\phi_{t-1}(v)$ to its neighbors.
\end{itemize}
\end{definition}

\begin{remark}
[Uniformity]
\label{remark:uniform-alg}
By our definition, in a locally-iterative coloring algorithm $\mathcal{A}$, in every round $t\geq 1$, the colors of vertices are updated according to a \emph{uniform} rule which is oblivious to the current round number and the identity of the vertex running it. Formally,
\begin{displaymath}
\phi_{t}(v)\gets\textsc{Update}_\mathcal{A}\left(\phi_{t-1}(v),\{\phi_{t-1}(u)\mid u\in N(v)\}\right).
\end{displaymath}
\end{remark}

Due to the simplicity and naturalness of its framework, locally-iterative algorithms are of great significance both in theory and practice. Indeed, in computer science and even physics, many classical algorithms with a wide range of applications are locally-iterative in nature, such as distance-vector routing~\cite{bertsekas92} and belief propagation~\cite{pearl88}. In this paper, we seek fast locally-iterative algorithms that can compute a proper $(\Delta+1)$-coloring.

A heuristic $\Omega(\Delta\log\Delta+\log^* n)$ lower bound for locally-iterative $(\Delta+1)$-coloring algorithms was proposed by Szegedy and Vishwanathan in \cite{szegedy93}, and it holds unless there exists ``a very special type of coloring that can be very efficiently reduced''. For a long time, this bound matched the fastest algorithm~\cite{kuhn06}. Nevertheless, such ``special'' type of coloring was found in a recent breakthrough: Barenboim, Elkin, and Goldberg~\cite{barenboim21} devised a locally-iterative $(\Delta+1)$-coloring algorithm with $O(\Delta)+\log^*n$ runtime, breaking the long-standing barrier.

On the other hand, the landscape for general distributed graph coloring algorithms is somewhat different. According to \cite{barenboim21}, all $(\Delta+1)$-coloring algorithms developed before 2009 are locally-iterative, including retrospectively, those before Szegedy and Vishwanathan's work~\cite{goldberg87,linial87}. After 2009, a series of algorithms that are not locally-iterative were developed, achieving linear-in-$\Delta$~\cite{barenboim09linear,kuhn09,barenboim14} or even sublinear-in-$\Delta$~\cite{barenboim16sublinear,fraigniaud16,maus20} runtime.
Notice that by encoding these general algorithms' complete internal states as ``colors'', it is possible to simulate some of them in a locally-iterative manner. However, this approach has several limitations and drawbacks (we will elaborate more on this later), and a major one being large message size due to large state space.

Therefore, an important question---which is also the main open problem raised in \cite{barenboim21}---is, can one compute a proper $(\Delta+1)$-coloring with a locally-iterative algorithm in $o(\Delta)+\log^*n$ time, using only small messages?

\subsection{Our results}

We answer the above question affirmatively, which is formally stated in the following theorem.

\begin{theorem}
[\textbf{Efficient Locally-iterative Coloring Algorithm}]
\label{thm:alg-main}
There exists a locally-iterative coloring algorithm such that, for any input graph with $n$ vertices and maximum degree $\Delta$, produces a proper $(\Delta+1)$-coloring within $O(\Delta^{3/4}\log{\Delta})+\log^*{n}$ rounds, using messages of $O(\log{n})$ bits.
\end{theorem}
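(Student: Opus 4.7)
The plan is to compose three locally-iterative phases whose runtimes add up to $O(\Delta^{3/4}\log\Delta)+\log^{*}n$, with the middle phase being the new technical contribution advertised in the abstract.

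First I would produce a proper $O(\Delta^{2})$-coloring in $O(\log^{*}n)$ rounds. Starting from the initial $n$-coloring given by the identifiers, Linial's reduction (iterated a constant number of times) compresses the palette down to $O(\Delta^{2})$ colors while remaining locally-iterative with $O(\log n)$-bit messages. This is standard and fits \Cref{def:locally-iter-alg} because each round's update is a fixed function of the current color and the multiset of neighbor colors.

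The heart of the proof is the second phase: a locally-iterative transformation of the $O(\Delta^{2})$-coloring into a $(\Delta+O(\Delta^{3/4}\log\Delta))$-coloring in $O(\Delta^{3/4}\log\Delta)$ rounds. The idea I would pursue is to view each color as a structured pair $(a,b)$ drawn from a grid whose two dimensions are tuned to the $\Delta^{3/4}$ time budget, together with a small reserve palette of size $O(\Delta^{3/4}\log\Delta)$. The update rule, applied uniformly per \Cref{remark:uniform-alg}, asks each vertex to try to move either into the small "target" range $[\Delta]$ via an unused coordinate, or, when blocked, into an unused reserve color witnessed by inspecting the multiset of neighbor colors. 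Properness must be preserved at every round, so the rule only commits to a new color when the neighborhood visibly does not conflict. Progress is tracked by a potential function that charges each vertex by how far its color is from the target palette; I would show that in each round a $\Theta(\Delta^{-3/4}/\log\Delta)$ fraction of potential is discharged, giving the claimed runtime.

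The hard part, as I see it, is reconciling the strict memorylessness of \Cref{def:locally-iter-alg} with the need for multi-round coordination such as handshakes or request/grant protocols that non-locally-iterative sublinear algorithms exploit. The only state carried across rounds is the current color itself, so any auxiliary information (phase counters, tentative proposals, tie-breaking tags) has to be encoded directly into the color. The $O(\Delta^{2})$ initial palette provides roughly $\log\Delta$ extra bits of "slack" which is exactly enough to encode one round's protocol step inside a color; a careful invariant must then ensure that the encoded meta-information is always consistent with a proper coloring of $G$, and that decoding the neighborhood multiset recovers enough to make a safe, progress-guaranteed local decision. A potential-function or fractional-matching-style argument, rather than a randomized one, will be needed since the algorithm is deterministic.

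Finally, once a $(\Delta+O(\Delta^{3/4}\log\Delta))$-coloring is in hand, I would remove the $O(\Delta^{3/4}\log\Delta)$ "excess" colors one at a time by a standard locally-iterative peeling rule: a vertex whose color is currently the largest excess color in its closed neighborhood recolors itself with the smallest color in $[\Delta+1]$ unused by its neighbors, while all other vertices keep their colors. Each excess color is eliminated in $O(1)$ rounds by the usual charging argument, so this phase also takes $O(\Delta^{3/4}\log\Delta)$ rounds, uses $O(\log n)$-bit messages, and complies with \Cref{def:locally-iter-alg}. Summing the three phases yields the $O(\Delta^{3/4}\log\Delta)+\log^{*}n$ bound of \Cref{thm:alg-main}.
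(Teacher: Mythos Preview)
Your three-phase skeleton matches the paper exactly, and your descriptions of the Linial phase and the final one-color-per-round peeling phase are correct and essentially identical to what the paper does. The gap is entirely in the middle phase: what you propose there is a hope, not a mechanism, and the specific mechanism you gesture at (a potential function that loses a $\Theta(\Delta^{-3/4}/\log\Delta)$ fraction per round, or a fractional-matching-style charging) is not how the paper makes progress and it is unclear how one would make such an argument go through under the strict memorylessness of \Cref{def:locally-iter-alg}.

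Concretely, the paper's second phase does \emph{not} argue by potential discharge. It encodes each color as a quadruple $\langle a,b,c,d\rangle$ and works internally on \emph{improper} colorings: first a single round uses a $\Delta$-union-$(\Delta^{1/4}{+}1)$-cover-free family to make the $a$-coordinate a $\Delta^{1/4}$-defective $\lambda^{2}$-coloring with $\lambda=\Theta(\Delta^{3/4}\log\Delta)$ a prime. The core stage then runs the modular walk $a\leftarrow \hat a\cdot\lambda+((\hat a+\tilde a)\bmod\lambda)$ (where $a=\hat a\lambda+\tilde a$) until a vertex sees at most $\Delta^{1/4}$ neighbors with matching $\tilde a$ but different $\hat a$, at which point it drops $\hat a$ and commits to $\tilde a\in[\lambda]$. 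The time bound is a collision-counting argument over $GF(\lambda)$: any fixed neighbor can force $\tilde a$-equality in at most two of the first $\lambda$ rounds, so by pigeonhole some round has at most $2\Delta/\lambda\le\Delta^{1/4}$ blockers. Throughout, the $b$-coordinate is maintained via ordinary $\Delta^{1/4}$- and $2\Delta^{1/4}$-cover-free families so that the pair $\langle a,b\rangle$ stays a \emph{proper} coloring, and the $c$-coordinate records the arbdefective orientation. A separate transition-out stage (degree-$2$ polynomials over $GF(\mu)$ in the style of Barenboim's sublinear algorithm) then maps the resulting $\tilde O(\Delta^{5/4})$ proper coloring into $\Delta+O(\Delta^{3/4}\log\Delta)$ colors.

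The ideas you are missing, and that your proposal would need to supply, are: (i) that the ``auxiliary state'' encoded in the color is specifically a bounded-defect coloring, not an ad hoc tag; (ii) the quadratic $\lambda^{2}\to\lambda$ reduction via the prime-modulus walk, whose analysis is a per-neighbor degree bound rather than a global potential; and (iii) the systematic use of cover-free families on the secondary coordinate to keep the externally visible coloring proper at every step. Without these, the sketch in your second paragraph does not yet constitute a plan that could be carried out.
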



\noindent Our algorithm is the first locally-iterative $(\Delta+1)$-coloring algorithm achieving a runtime with sublinear dependency on the maximum degree $\Delta$. The core of this algorithm is a locally-iterative procedure that transforms a proper $O(\Delta^2)$-coloring to a proper $(\Delta+O(\Delta^{3/4}\log\Delta))$-coloring within $o(\Delta)$ rounds.
Inside this procedure we work on special proper colorings that correspond to (arb)defective colorings, and reduce the number of used colors quadratically in a locally-iterative fashion.
Combine this procedure with Linial's well-known color-reduction procedure~\cite{linial87} and the folklore reduce-one-color-per-round procedure, we obtain the complete algorithm.

\paragraph{An application in self-stabilizing coloring}

Fault-tolerance is another central topic in distributed computing. \emph{Self-stabilization}, a concept coined by Edsger W.\ Dijkstra~\cite{dijkstra74}, is a property that, roughly speaking, guarantees a distributed system starting from an arbitrary state eventually converges to a desired behavior.
This concept is regarded as ``a milestone in work on fault tolerance'' by Leslie Lamport~\cite{lamport85}.
Indeed, over the last four decades, lots of self-stabilizing distributed algorithms have been devised~\cite{dolev00,altisen19}, and several of them have seen practical applications~\cite{datta94,chen05}.

In this paper, we adopt the same self-stabilizing setting assumed by Barenboim, Elkin, and Goldenberg~\cite{barenboim21}. In this setting, the memory of each vertex consists of two parts: the immutable \emph{read-only memory (ROM)} and the mutable \emph{random access memory (RAM)}. The ROM part is faultless but cannot change during execution; and it may be used to store hard-wired data such as vertex identity and graph parameters, as well as the program code. The RAM part on the other hand, can change during algorithm execution; and it is for storing the internal states of the algorithm. The RAM part is subject to error, controlled by an adversary called \emph{Eve}. In particular, at any moment during the execution, the adversary can examine the entire memory (including both ROM and RAM) of all vertices, and then make arbitrary changes to the RAM part of all vertices.

An algorithm is \emph{self-stabilizing} if it can still compute a proper solution once the adversary stops disrupting its execution. Specifically, assuming that $T_0$ is the last round in which the adversary makes any changes to vertices' RAM areas, if it is always guaranteed that by the end of round $T_0+T$ a desired solution (e.g., a proper $(\Delta+1)$-coloring in our context) is produced by the algorithm, then the algorithm is self-stabilizing with \emph{stabilization time} $T$.

With suitable adjustments, we are able to transform our locally-iterative coloring algorithm into a self-stabilizing one, whose guarantees are formally stated in the following theorem.
\begin{theorem}
[\textbf{Efficient Self-stabilizing Coloring Algorithm}]
\label{thm:alg-self-stab}
There exists a self-stabilizing coloring algorithm such that, for any input graph with $n$ vertices and maximum degree $\Delta$, produces a proper $(\Delta+1)$-coloring with $O(\Delta^{3/4}\log{\Delta})+\log^*{n}$ stabilization time, using messages of $O(\log{n})$ bits.
\end{theorem}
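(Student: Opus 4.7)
The plan is to exploit the defining feature of a locally-iterative algorithm: each vertex's state is completely captured by its current color, and the next color is a pure function of the closed-neighborhood colors. So to self-stabilize the algorithm of Theorem~\ref{thm:alg-main}, the only real task is to handle an arbitrary coloring that the adversary may install in RAM, together with the coordination of the algorithm's distinct phases (Linial's initial color reduction, our main $O(\Delta^2)\to(\Delta+O(\Delta^{3/4}\log\Delta))$ reduction, and the folklore reduce-by-one loop). Since the algorithm is locally-iterative, every piece of internal state that must persist across rounds can be packaged as part of the color; combined with the ROM-resident identifier, this already provides the primitives needed for a reset.

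Concretely, I would augment each color with a phase counter $p\in\{0,1,\ldots,T\}$, where $T=O(\Delta^{3/4}\log\Delta)+\log^*n$ is the runtime from Theorem~\ref{thm:alg-main}; the pair $(p,c)$ still fits in $O(\log n)$ bits, preserving the message-size bound. In each round, vertex $v$ first runs a battery of one-hop consistency checks on the augmented colors it observes: do all neighbors share $p(v)$? is the coloring proper in $v$'s neighborhood? does $v$'s color lie in the palette prescribed for phase $p(v)$? and do any structural invariants required at phase $p(v)$ (e.g.\ defect bounds on the underlying arbdefective coloring) hold locally? If any check fails, $v$ resets itself to the canonical initial state $(0,id(v))$; otherwise $v$ applies the ordinary phase-$p(v)$ update rule from Theorem~\ref{thm:alg-main} and sets $p(v)\leftarrow p(v)+1 \bmod (T+1)$. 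The modular phase counter makes the algorithm perpetual, which is the standard setup for self-stabilization.

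The timing analysis then splits into two claims. First, any configuration that passes every vertex's local checks is, by design, indistinguishable from a legitimate intermediate configuration of the original algorithm, so by Theorem~\ref{thm:alg-main} a proper $(\Delta+1)$-coloring emerges within at most $T$ further rounds. Second, starting from an arbitrary adversarial RAM state, I would show that within $O(T)$ rounds every vertex has entered such a clean configuration: inconsistencies present at round $T_0$ are either detected immediately (triggering a reset through phase $0$ and thus re-entering the algorithm from its valid initial state) or are subsumed by a genuine partial execution of Theorem~\ref{thm:alg-main}.

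The main obstacle is this second claim, because the adversary can plant large connected pockets of state that look locally consistent but are globally spurious, so that no vertex inside such a pocket raises any alarm on its own. Overcoming this requires choosing the per-phase invariants of the original algorithm so that each is fully verifiable from a one-hop view; with that property in hand, any pocket surviving all local checks is itself a legitimate partial execution of the original algorithm, and the remaining reset-propagation across pocket boundaries is absorbed by the $O(T)$ bound. Combining the two claims yields stabilization time $O(\Delta^{3/4}\log\Delta)+\log^*n$ with $O(\log n)$-bit messages, as required.
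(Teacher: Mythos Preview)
Your proposal has two genuine gaps that the paper had to work around explicitly.

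First, the phase-synchronization check (``do all neighbors share $p(v)$?'') is fatal to the time bound. If the adversary plants a single locally-bad vertex in an otherwise consistent configuration, that vertex resets to phase~$0$; in the next round its neighbors see a phase mismatch and reset; the wave spreads one hop per round. On a graph of diameter $D$ this costs $\Omega(D)$ rounds, and $D$ can be $\Theta(n)$, far exceeding $O(\Delta^{3/4}\log\Delta)+\log^*n$. Your sentence ``reset-propagation across pocket boundaries is absorbed by the $O(T)$ bound'' is exactly where this breaks. The paper avoids any global phase counter: phases are encoded by which color interval $I_1,I_2,I_3$ a vertex currently occupies, and the algorithm is specifically engineered so that vertices at different stages can coexist without triggering resets at their neighbors. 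A reset at $v$ simply sends $v$ back to the Linial phase while its neighbors continue; no cascade occurs.

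Second, your hinge assumption---``choosing the per-phase invariants of the original algorithm so that each is fully verifiable from a one-hop view''---is precisely the step that does \emph{not} go through for the locally-iterative algorithm of Theorem~\ref{thm:alg-main}. The core stage maintains a $(2\Delta^{1/4})$-arbdefective coloring whose edge orientations are encoded by the integer $c(v)$ (larger $c$ points to smaller $c$). The adversary can install locally-consistent but unbounded $c$ values, and there is no one-hop check that rules this out; the paper states explicitly that ``the adversary can employ a certain strategy to grow the $c$ values indefinitely.'' Their fix is to replace $c(v)$ by a per-neighbor bit vector $T_v$, with $v$ sending $T_v[u]$ only to $u$. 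This makes the orientation invariant locally checkable (one tests $T_v[u]+T_u[v]\neq 0$ and $|\{u:T_v[u]=1\}|\le 2\Delta^{1/4}$), but it also means the self-stabilizing algorithm is no longer locally-iterative in the sense of Definition~\ref{def:locally-iter-alg}: different messages go to different neighbors. So the very premise of your plan---that the Theorem~\ref{thm:alg-main} algorithm's state is fully captured by a single broadcast color---fails for the invariant that matters most.
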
%
\noindent To the best of our knowledge, in the CONGEST model, this is the first self-stabilizing algorithm for $(\Delta+1)$-coloring with sublinear-in-$\Delta$ stabilization time.

\paragraph{Reconfigurable locally-iterative coloring and self-stabilization}

In adopting the locally-iterative algorithm to the self-stabilizing setting, we cope with a strong level of ``asynchrony'' among vertices, as the adversary can manipulate vertices' internal states. We have also crafted an error-correction procedure ensuring that once the adversary stops disrupting algorithm execution, any vertex with an ``improper'' state will be detected instantly and resets itself to some proper state.

Interestingly, we find that if a locally-iterative algorithm supports above ``reconfiguration'' (that is, state resetting upon detecting illicit status), and if the algorithm's correctness is still enforced under such reconfiguration, then the locally-iterative algorithm in consideration can be modified into a self-stabilizing algorithm with relative ease, and limited or no complexity overhead will be imposed. Formally, we propose the notion of \emph{reconfigurable locally-iterative coloring algorithms}. 

\begin{definition}
[\textbf{Reconfigurable Locally-iterative Coloring Algorithms}]
\label{def:strong-locally-iter-alg}
A locally-iterative coloring algorithm $\mathcal{A}$ is \emph{reconfigurable} if it satisfies the following properties:
\begin{itemize}
	\item The algorithm has a build-in function $\textsc{Legit}_{\mathcal{A}}$ that, upon inputting a vertex $v$'s color and the colors of $v$'s neighbors, outputs a single bit indicating whether $v$'s status is legit or illicit.
	\item For each vertex $v$, its status after initialization is legit:
	$\textsc{Legit}_{\mathcal{A}}\left(\phi_0(v),\{\phi_0(u)\mid u\in N(v)\}\right)=1.$
	\item Normal execution maintains legitimacy. Moreover, when external interference occur, resetting illicit vertices to initial colors resumes legitimacy. Specifically, for any round $t\geq 1$, let $V_{\phi_{t-1}}$ denote the set of vertices that have illicit status in (a not necessarily proper) coloring $\phi_{t-1}$, if every vertex updates its color using the following rule, then in $\phi_{t}$ all vertices' status are legit.
	\begin{equation*}
		\phi_{t}(v)\gets\left\{
			\begin{aligned}
			& \phi_0(v) & \textrm{if } & v\in V_{\phi_{t-1}}, \\
			& \textsc{Update}_\mathcal{A}\left(\phi_{t-1}(v),\{\phi_{t-1}(u)\mid u\in N(v)\}\right) & \textrm{if } & v\in V\setminus V_{\phi_{t-1}}.
			\end{aligned}
		\right.
	\end{equation*}
\end{itemize}
\end{definition}

The locally-iterative coloring algorithm proposed by Barenboim, Elkin, and Goldenberg~\cite{barenboim21} supports reconfiguration. Indeed, they are able to derive a self-stabilizing algorithm that is also locally-iterative with stabilization time $O(\Delta+\log^*{n})$. On the other hand, for our locally-iterative algorithm to handle the strong asynchrony brought by reconfiguration, in designing the self-stabilizing algorithm, we have to allow vertices to send different messages to different neighbors, violating the ``broadcast'' requirement of locally-iterative algorithms.
Nonetheless, one could also imagine an ``edge orientation'' version of locally-iterative algorithms, in which each vertex $v$ maintains a state $\sigma^{(v\to u)}$ for each incident edge $(v,u)$. By exchanging $\sigma^{(v\to u)}$ with neighbor $u$ for each edge $(v,u)$ in every round, vertex $v$ can update $\sigma^{(v\to u)}$ and its color $\phi(v)$. That is:
\begin{align*}
\sigma^{(v\to u)}_t & \gets \textsc{Update}_\mathcal{A} \left( \sigma^{(v\to u)}_{t-1}, \sigma^{(u\to v)}_{t-1}, \left\{\left(\sigma^{(v\to w)}_{t-1},\sigma^{(w\to v)}_{t-1}\right)\mid w\in N(v)\setminus\{u\}\right\} \right), \\
\phi_t(v) & \gets \textsc{ComputeColor}_{\mathcal{A}} \left(\left\{\sigma^{(v\to w)}_{t}\mid w\in N(v)\right\}\right).
\end{align*}
Such ``edge orientation'' locally-iterative algorithms are common in physics (e.g., belief propagation).
Meanwhile, under this alternative ``edge orientation'' interpretation, our coloring algorithm supports reconfiguration, allowing it to be modified to a self-stabilizing algorithm with relative ease.



\subsection{Related Work}\label{sec-app:related-work}

The study of distributed graph coloring dates back to the early days of distributed computing. Cole and Vishkin~\cite{cole86} initiated the study of distributed graph coloring on basic graphs such as rings and paths, and developed a deterministic $3$-coloring algorithm with running time $\log^*n + O(1)$. Goldberg and Plotkin~\cite{goldberg87parallel} devised the first $(\Delta+1)$-coloring algorithm for general graphs with $2^{O(\Delta)}+O(\log^* n)$ running time. Linial~\cite{linial87} devised an algorithm that computes an $O(\Delta^2)$-coloring using $\log^*n+O(1)$ time, implying an $O(\Delta^2+\log^*n)$ time $(\Delta+1)$-coloring algorithm. Szegedy and Vishwanathan~\cite{szegedy93} introduced the notion of locally-iterative graph coloring and derived a randomized algorithm along with a heuristic lower bound, this latter bound is attained by Kuhn and Wattenhofer's algorithm~\cite{kuhn06}. All works mentioned above are locally-iterative. Before this paper, the fastest locally-iterative $(\Delta+1)$-coloring algorithm is from Barenboim, Elkin, and Goldenberg~\cite{barenboim21}, which has an $O(\Delta)+\log^*n$ running time.

If we loose the restriction on being locally-iterative, $(\Delta+1)$-coloring algorithms with linear-in-$\Delta$ runtime were first proposed in \cite{barenboim09linear,kuhn09,barenboim14}, then faster algorithms with sublinear-in-$\Delta$ runtime were also discovered~\cite{barenboim16sublinear,fraigniaud16,barenboim21,maus20}. The current best upper bound for $(\Delta+1)$-coloring focusing on $\Delta$-dependency is devised by Maus and Tonoyan~\cite{maus20}, achieving a running time of $O(\sqrt{\Delta\log\Delta}+\log^*n)$. For randomized algorithms, in 1986, the seminal work of Luby~\cite{luby86} and Alon, Noga, Babai~\cite{alon86} showed that distributed $(\Delta+1)$-coloring can be solved within $O(\log{n})$ rounds. Barenboim, Elkin, Pettie and Schneider~\cite{barenboim16locality} improved this bound to $O(\log\Delta)+2^{O(\sqrt{\log\log n})}$. Some improvements have been obtained on this upper bound while maintaining the term $2^{O(\sqrt{\log\log n})}$~\cite{harris16, chang18}. This term is improved to $\text{poly}(\log\log n)$ with the use of better network decomposition techniques~\cite{rozhovn20}. More recently, the upper bound is improved to $O(\log^3\log n)$ in both the CONGEST and the LOCAL model by Ghaffari and Kuhn~\cite{ghaffari22}, and by Halld\'{o}rsson, Nolin, Tonoyan~\cite{halldorsson21}. There are also deterministic algorithms focusing on $n$-dependency. Rozho\v{n} and Ghaffari~\cite{rozhovn20} derived the first $\text{poly}(\log{n})$ rounds algorithm with runtime $O(\log^7 n)$ using network decomposition. It is reduced to $O(\log^5 n)$ with improvements on network decomposition~\cite{ghaffari21}. Very recently, this bound is improved by Ghaffari and Kuhn~\cite{ghaffari22} to $O(\log^3 n)$ rounds, without using network decomposition.

Distributed graph coloring is also extensively studied in the context of self-stabilization~\cite{dolev00,altisen19}. There are algorithms devised for coloring bipartite graphs~\cite{sur93,kosowski06}, planar graphs~\cite{ghosh93,huang05}, and general graphs~\cite{goddard04,hedetniemi03,gradinariu00}. See \cite{guellati10} for a survey on results obtained before 2010.
In \cite{barenboim21}, Barenboim, Elkin and Goldenberg devised the first sublinear-in-$n$ self-stabilizing $(\Delta+1)$-coloring algorithm that works in the CONGEST model, achieving $O(\Delta+\log^*n)$ stabilization time. We improve this bound to sublinear-in-$\Delta$ in this paper.

\section{Preliminaries}\label{sec:model-and-preliminary}

\paragraph{Graph coloring}

Let $G=(V,E)$ be an undirected graph.
Let $q>0$ be a positive integer and $Q$ be a \emph{palette} of $q=|Q|$ colors. A \emph{$q$-coloring} $\phi: V\to Q$ of graph $G$ assigns each vertex $v\in V$ one of the $q$ colors from $Q$, and is said to be:
\begin{itemize}
	\item \emph{proper} if  $\phi(u)\neq\phi(v)$ for every edge $(u,v)\in E$;
	\item \emph{$d$-defective} if for every $v$, the number of neighbors $u\in N(v)$ with $\phi(u)=\phi(v)$ is at most $d$;
	\item \emph{$a$-arbdefective} if we can define an orientation for each edge such that the out-degree of the oriented graph induced by each color class is at most $a$.
\end{itemize}

\paragraph{Cover-free set systems}

The existence of \emph{$\Delta$-cover-free set systems} is crucial for Linial's celebrated coloring algorithm~\cite{linial87}. We use such set systems in our algorithms extensively as well.

\begin{definition}\label{def:cover-free-set-system}
Let $U\neq\emptyset$ be a finite set and $\Delta>0$ be an integer. Set system $\mathcal{F}\subseteq 2^{U}$ with ground set $U$ is \emph{$\Delta$-cover-free} if for every $S_0\in\mathcal{F}$ and every $\Delta$ sets $S_1,\cdots,S_{\Delta}\in\mathcal{F}\setminus\{S_0\}$, it holds that $S_0\nsubseteq\bigcup_{i=1}^{\Delta}S_i$.
\end{definition}

\begin{theorem}[Erd\H{o}s, Frankl, F\"{u}redi \cite{erdos85}]\label{thm:set-families}
For any integers $n>\Delta>0$, there exists $m\in\mathbb{N}^+$ satisfying
\begin{displaymath}
	m\le
	\begin{cases}
		4(\Delta+1)^2\log^2{n} & \text{if }n> 8(\Delta+1)^3,\\
		4(\Delta+1)^2 & \text{if }n\le 8(\Delta+1)^3,
	\end{cases}
\end{displaymath}
such that for every finite set $U$ of size $|U|\ge m$, there exists a $\Delta$-cover-free set system $\mathcal{F}\subseteq 2^{U}$ of size $|\mathcal{F}|=n$ with ground set $U$.
\end{theorem}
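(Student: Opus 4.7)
The plan is to split Theorem~\ref{thm:set-families} into its two regimes and address each with a different technique: a random construction when $n > 8(\Delta+1)^3$, and an explicit polynomial construction when $n \leq 8(\Delta+1)^3$. In both cases it suffices to exhibit a $\Delta$-cover-free family of $n$ sets over a ground set of exactly size $m$, since the clause ``for every $U$ with $|U| \geq m$'' then follows by relabeling and padding with unused elements.

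For the dense regime, I would set $m := \lceil 4(\Delta+1)^2 \log^2 n \rceil$ and sample $n$ subsets $S_1,\ldots,S_n$ of $U$ by placing each element into each $S_i$ independently with probability $p = 1/(\Delta+1)$. For a fixed index $i_0$ and a fixed $\Delta$-set $I \subseteq [n] \setminus \{i_0\}$, the events ``$x \in S_{i_0}$ and $x \notin S_i$ for all $i \in I$'' are mutually independent across $x \in U$, each of probability $p(1-p)^\Delta \geq \frac{1}{e(\Delta+1)}$ via the standard estimate $(1-\frac{1}{\Delta+1})^\Delta \geq e^{-1}$. Hence
\[
\Pr\!\left[S_{i_0}\subseteq \bigcup_{i\in I}S_i\right] \;\leq\; \left(1-\frac{1}{e(\Delta+1)}\right)^m \;\leq\; \exp\!\left(-\frac{m}{e(\Delta+1)}\right),
\]
and a union bound over the $\leq n^{\Delta+1}$ choices of $(i_0, I)$ drives the total failure probability below $1$ as soon as $m \geq e(\Delta+1)^2 \ln n$, which is well within the prescribed budget. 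Note that any coincidence $S_i = S_j$ is itself a cover-free violation, so the surviving family automatically has $n$ distinct elements.

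For the sparse regime, I would use a Kautz--Singleton construction. Pick a prime power $q$ with $q > 2\Delta$ and $q = O(\Delta)$ (supplied by Bertrand's postulate), set $U := \mathrm{GF}(q) \times \mathrm{GF}(q)$, and to each polynomial $f \in \mathrm{GF}(q)[x]$ of degree less than $3$ associate the set $S_f := \{(x, f(x)) : x \in \mathrm{GF}(q)\}$. Two distinct such polynomials coincide in at most $2$ points, so $|S_f \cap S_g| \leq 2$ whenever $f \neq g$. Consequently, for any $S_{f_0}$ and any $\Delta$ distinct other sets $S_{f_1},\ldots,S_{f_\Delta}$,
\[
\Bigl| S_{f_0} \cap \bigcup_{i=1}^{\Delta} S_{f_i} \Bigr| \;\leq\; 2\Delta \;<\; q \;=\; |S_{f_0}|,
\]
so $S_{f_0} \not\subseteq \bigcup_i S_{f_i}$. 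This yields a $\Delta$-cover-free family of $q^3 = \Omega((\Delta+1)^3)$ sets over a ground set of size $q^2 = O((\Delta+1)^2)$; trimming to $n \leq 8(\Delta+1)^3$ sets completes the construction.

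The main obstacle is matching the stated constants precisely. The probabilistic analysis in fact yields $O((\Delta+1)^2 \log n)$, so the exponent $2$ on $\log n$ in the dense-regime bound is generous and absorbs any slack. The delicate point is the sparse regime: obtaining the constant $4(\Delta+1)^2$ exactly forces a prime power into the narrow window $(2\Delta, 2(\Delta+1)]$, which does not always contain one, so either a mild constant relaxation via Bertrand's postulate, or a more refined algebraic family (e.g., Reed--Solomon codes with a carefully tuned rate), is needed to recover the stated coefficient. Beyond this constant-chasing, both halves of the argument are routine.
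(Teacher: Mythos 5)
This statement is quoted from Erd\H{o}s--Frankl--F\"{u}redi and is not proved anywhere in the paper, so there is no in-paper argument to compare against; I can only assess your proposal on its own terms. Your two-regime decomposition is the standard route to this result in the distributed-coloring literature (a probabilistic/counting argument for large $n$, and a Reed--Solomon/Kautz--Singleton polynomial family for $n$ polynomial in $\Delta$), and the dense-regime half is complete and correct: the independence across elements, the bound $p(1-p)^{\Delta}\ge\frac{1}{e(\Delta+1)}$, and the union bound over at most $n^{\Delta+1}$ pairs $(i_0,I)$ all check out, and $4(\Delta+1)^2\log^2 n\ge e(\Delta+1)^2\ln n$ holds comfortably in the regime $n>8(\Delta+1)^3$. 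Your observation that a coincidence $S_i=S_j$ is itself a violation (so the surviving family has $n$ distinct sets) is a detail worth having stated explicitly.

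The one genuine issue is the one you already name: in the sparse regime your construction needs $q^3\ge n$ (forcing $q\ge 2(\Delta+1)$ when $n$ is near $8(\Delta+1)^3$) together with $q^2\le 4(\Delta+1)^2$ (forcing $q\le 2(\Delta+1)$), so $q=2(\Delta+1)$ must itself be a prime power, which fails unless $\Delta+1$ is a power of two; raising the polynomial degree only worsens the constraint $q>(\deg)\cdot\Delta$, and Bertrand's postulate only delivers $m\le 16(\Delta+1)^2$. So as written you prove the theorem with a larger constant in the second case, not the stated $4(\Delta+1)^2$. For the purposes of this paper the discrepancy is harmless---the theorem is used only to drive the recursion $n_i=4(\Delta+1)^2\log^2 n_{i-1}$ and the terminal bound $n_{r^*}=O(\Delta^2)$, and a constant-factor change in $m$ propagates only into the hidden constants of $m_1,m_2,m_3,\lambda,\mu,\tau$---but if you want the literal statement you would need either the exact prime-power condition or a different algebraic family; otherwise, state and prove the version with the relaxed constant.
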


Barenboim, Elkin, and Kuhn~\cite{barenboim14} generalized $\Delta$-cover-free set systems to a notion of \emph{$\Delta$-union-$(\rho+1)$-cover-free set systems}, and proved their existence for reasonably small parameters. Our algorithms utilize such generalized cover-free set systems as well.

\begin{definition}\label{def:cover-free-set-system-extension}
Let $U\neq\emptyset$ be a finite set and $\Delta,\rho$ be two positive integers. Set system $\mathcal{F}\subseteq 2^{U}$ with ground set $U$ is \emph{$\Delta$-union-$(\rho+1)$-cover-free} if for every $S_0\in\mathcal{F}$ and every $\Delta$ sets $S_1,\cdots,S_{\Delta}\in\mathcal{F}\setminus\{S_0\}$, there exists at least one element $x\in S_0$ that appears in at most $\rho$ sets among $S_1,S_2,\cdots,S_{\Delta}$, that is,
$$|\{i\mid x\in S_i, 1\le i\le \Delta \}| \leq \rho.$$
\end{definition}

\begin{theorem}[Theorem 3.9 of \cite{barenboim14}]\label{thm:set-families-extension}
For any integers $n>\Delta>\rho>0$, there exists a $\Delta$-union-$(\rho+1)$-cover-free set family $\mathcal{F}\subseteq 2^{U}$ of size $|\mathcal{F}|=n$ with ground set $U$ satisfying
$$|U|\leq 4\cdot\left(\frac{\Delta+1}{\rho+1}\right)^2\cdot\log^2{n}.$$
\end{theorem}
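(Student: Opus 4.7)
The plan is to prove this theorem by a direct probabilistic construction, extending Linial's argument for ordinary $\Delta$-cover-free set systems. I would let $m=|U|$ be a parameter and, for a probability $p\in(0,1)$ to be chosen, independently include each ground-set element in each of the $n$ candidate sets $S_1,\dots,S_n$ with probability $p$. The goal is then to show that for suitable $m$ and $p$, the resulting random family is $\Delta$-union-$(\rho+1)$-cover-free with positive probability, which proves existence.

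Fix any tuple $(S_0,S_1,\dots,S_\Delta)$ of distinct indices. The ``good event'' for this tuple is that some $x\in S_0$ appears in at most $\rho$ of $S_1,\dots,S_\Delta$. By independence across the $n$ sets, for a single $x\in U$ the probability it witnesses goodness equals $p\cdot\Pr[\mathrm{Bin}(\Delta,p)\le\rho]$, and by independence across the $m$ elements the probability of failure (no such witness exists) is at most $\exp\bigl(-m\cdot p\cdot\Pr[\mathrm{Bin}(\Delta,p)\le\rho]\bigr)$. Union-bounding over the at most $n^{\Delta+1}$ such ordered tuples, it suffices to pick $m,p$ so that $m\cdot p\cdot\Pr[\mathrm{Bin}(\Delta,p)\le\rho]>(\Delta+1)\ln n$.

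To optimize the product $p\cdot\Pr[\mathrm{Bin}(\Delta,p)\le\rho]$ I would set $p=(\rho+1)/(\Delta+1)$, so that the mean $\Delta p$ of $\mathrm{Bin}(\Delta,p)$ is just below $\rho+1$, placing its median at $\rho$ or nearby. A standard median-bound then gives $\Pr[\mathrm{Bin}(\Delta,p)\le\rho]\ge\tfrac12$, so the product is $\Omega((\rho+1)/(\Delta+1))$. Plugging this back yields that any $m=\Omega((\Delta+1)^2\log n/(\rho+1))$ suffices, and this is absorbed into the claimed bound $4((\Delta+1)/(\rho+1))^2\log^2 n$ with a $\log n/(\rho+1)$ slack factor to spare.

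The main obstacle will be tightly lower-bounding the binomial tail $\Pr[\mathrm{Bin}(\Delta,p)\le\rho]$ at the critical choice $p=(\rho+1)/(\Delta+1)$, where $\rho$ sits right at the distribution's mode and generic Chernoff-type concentration does not directly apply in the helpful direction. I would address this either by the symmetry/median argument above, or by the direct comparison $\Pr[\mathrm{Bin}(\Delta,p)\le\rho]\ge\Pr[\mathrm{Bin}(\Delta,p)=\rho]$ together with a Stirling-style lower bound on $\binom{\Delta}{\rho}p^{\rho}(1-p)^{\Delta-\rho}$. A secondary, mechanical obstacle is tracking the constant precisely enough that the final ground-set size matches the stated $4((\Delta+1)/(\rho+1))^2\log^2 n$ rather than merely $O(((\Delta+1)/(\rho+1))^2\log^2 n)$, which will boil down to choosing $m$ to beat the $(\Delta+1)\ln n$ exponent by a comfortable margin.
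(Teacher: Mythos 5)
Your probabilistic scheme is internally sound as far as it goes: the per-tuple failure bound $\exp\bigl(-m\cdot p\cdot\Pr[\mathrm{Bin}(\Delta,p)\le\rho]\bigr)$, the union bound over at most $n^{\Delta+1}$ tuples, and the median technicality you flag are all fine or fixable. The genuine gap is in the final comparison. Optimizing $p\cdot\Pr[\mathrm{Bin}(\Delta,p)\le\rho]$ caps that product at $\Theta((\rho+1)/(\Delta+1))$ (for $p$ much larger than $(\rho+1)/(\Delta+1)$ the lower tail decays exponentially in $\rho+1$, faster than $p$ grows), so the best ground-set size this argument can deliver is $m=\Theta\bigl((\Delta+1)^2\log n/(\rho+1)\bigr)$. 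The ratio of the stated bound $4((\Delta+1)/(\rho+1))^2\log^2 n$ to your $m$ is $\Theta(\log n/(\rho+1))$, so your ``slack factor to spare'' is only slack when $\rho+1=O(\log n)$; when $\rho+1\gg\log n$ it is a deficit and your $m$ strictly exceeds the bound you are supposed to prove. This regime is not a corner case for this paper: the theorem is applied with $\rho=\Delta^{1/4}$ and $n$ replaced by $n_{r^*}=O(\Delta^2)$, i.e.\ $\rho+1\gg\log n_{r^*}$, exactly where your bound fails (it would give $m_1=\Theta(\Delta^{7/4}\log\Delta)$ rather than $O(\Delta^{3/2}\log^2\Delta)$, which would degrade $\lambda$ and the advertised runtime).

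The stated bound comes from a deterministic polynomial (Reed--Solomon-style) construction --- this is what Theorem 3.9 of Barenboim, Elkin, and Kuhn does, and the paper reproduces the same construction explicitly in the appendix when building $\mathcal{F}_a$ for the self-stabilizing algorithm. Pick a prime $q$ with $\frac{\Delta+1}{\rho+1}\log n<q\le 2\cdot\frac{\Delta+1}{\rho+1}\log n$ (Bertrand), let $U=[q]\times[q]$, and set $S_i=\{(x,P_i(x))\mid x\in[q]\}$ for $n$ distinct polynomials $P_i$ of degree at most $\log n$ over $GF(q)$. Any two sets intersect in at most $\log n$ points, so if every element of $S_0$ were covered by at least $\rho+1$ of $S_1,\dots,S_\Delta$, double counting would give $(\rho+1)q\le\sum_{i=1}^{\Delta}|S_0\cap S_i|\le\Delta\log n$, contradicting the choice of $q$. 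This averaging over the $q$ elements of $S_0$, powered by the worst-case pairwise-intersection bound $\log n$, is what the independent-coins model cannot replicate for large $\rho$ and is the missing idea in your proposal.
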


\section{The Locally-iterative Coloring Algorithm}\label{sec:alg-framework}

\paragraph{A natural but unsuccessful attempt}

Recall there are general $(\Delta+1)$-coloring algorithms with sublinear-in-$\Delta$ runtime, such as \cite{barenboim16sublinear,fraigniaud16,maus20,barenboim21}. Imagine taking a general coloring algorithm and ``encode'' the complete internal state of a vertex as its ``color''. Then, by exchanging internal states with neighbors, it seems one could simulate a general coloring algorithm in a locally-iterative manner. Unfortunately, this approach has several limitations and drawbacks.

The first drawback is large message size. During conversion, all variables that carry over multiple rounds in the general algorithm have to be encoded. For complex algorithms (e.g., \cite{fraigniaud16}), the space required to store these variables---which corresponds to the message size for the converted locally-iterative algorithm---will be large. Another limitation is that the conversion works only if the general algorithm is ``broadcast'' in nature. Specifically, if the general algorithm sends different messages to different neighbors, then in the converted algorithm, for a vertex $v$, knowing neighbor $u$'s internal state is insufficient: though $v$ can compute the set of messages $u$ will send, $v$ does not know which one is targeted for it. Lastly, the conversion may also pose additional requirements. For example, many algorithms use time to synchronize vertices' behavior (e.g., \cite{barenboim16sublinear}), thus in the converted algorithms we must encode round number. This implies we must know an upper bound on the running of the algorithm (so as to allocate proper number of bits to encode round number), which could depend on various parameters. Notice that the general algorithm may be oblivious of these parameters, yet for the conversion to work these parameters have to be known at prior.

\paragraph{Our approach}

\begin{figure}[t!]
\centering
\includegraphics[width=0.82\linewidth]{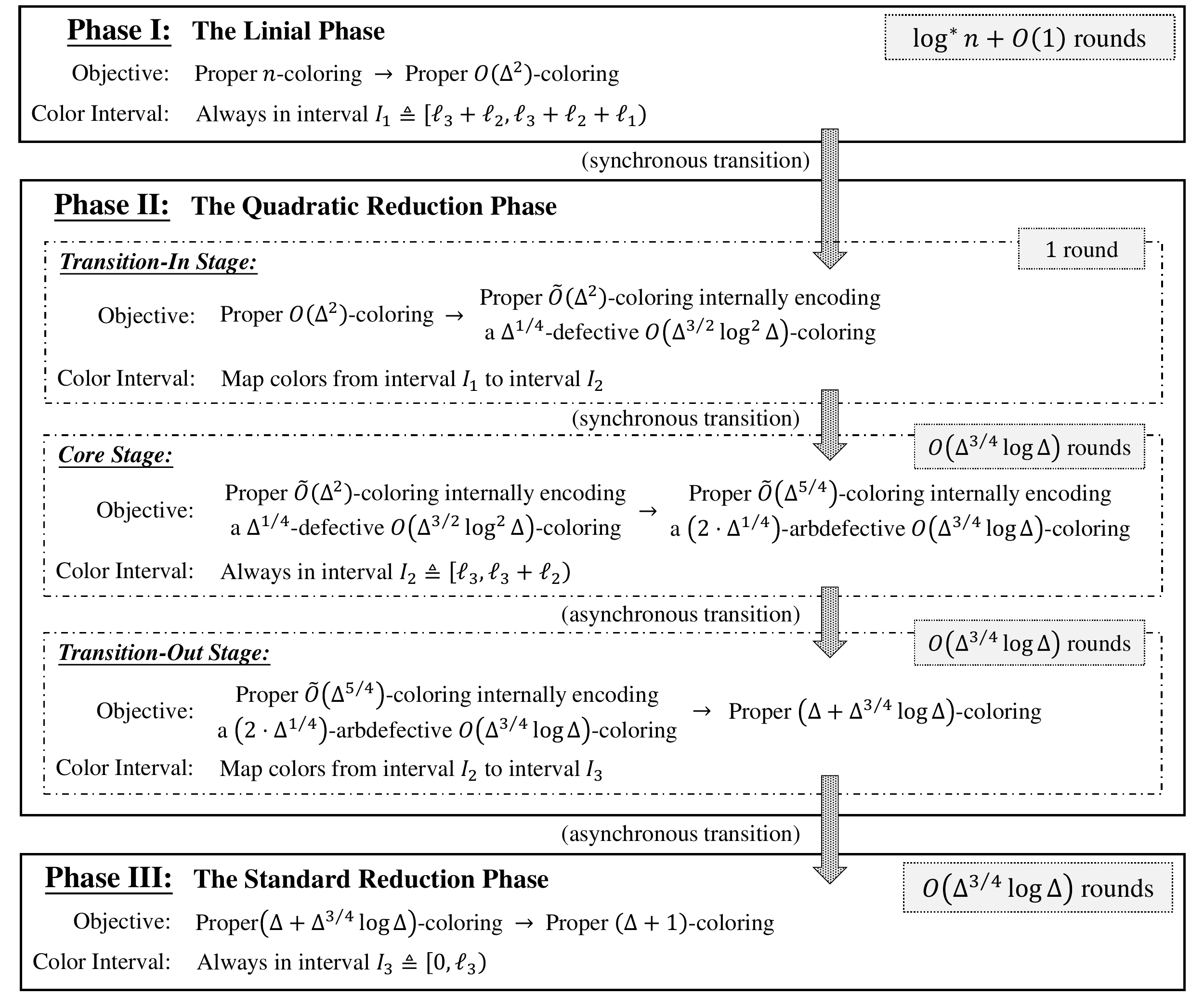}
\vspace{-2ex}
\caption{Structure of the locally-iterative $(\Delta+1)$-coloring algorithm.}\label{fig:alg-structure}
\vspace{-3ex}
\end{figure}

Our algorithm employs a three \emph{phases} framework that is used by many distributed coloring algorithms (e.g., \cite{linial87,kuhn06,barenboim16sublinear,barenboim21}). Specifically, in our algorithm: (1) the first ``Linial phase'' transforms an $n$-coloring to an $O(\Delta^2)$-coloring in $\log^*{n}+O(1)$ rounds; (2) the second ``quadratic reduction phase'' transforms an $O(\Delta^2)$-coloring to an $(\Delta+O(\Delta^{3/4}\log\Delta))$-coloring in $O(\Delta^{3/4}\log{\Delta})$ rounds; and (3) the last ``standard reduction phase'' transforms an $(\Delta+O(\Delta^{3/4}\log\Delta))$-coloring to a $(\Delta+1)$-coloring in $O(\Delta^{3/4}\log\Delta)$ rounds.

Inside the quadratic reduction phase are two \emph{transition stages} and one \emph{core stage}. In the first transition stage, which is the \emph{transition-in stage}, vertices use one round to transform Linial phase's proper $O(\Delta^2)$-coloring to a proper $\tilde{O}(\Delta^2)$-coloring which internally encodes a $\Delta^{1/4}$-defective $O(\Delta^{3/2}\log^2\Delta)$-coloring.%
\footnote{Throughout the paper, we use $\tilde{O}(\cdot)$ to hide poly-logarithmic terms in $\Delta$ (but \emph{not} in $n$) in the standard ${O}(\cdot)$ notation.}
Then, in the core stage, vertices use $O(\Delta^{3/4}\log\Delta)$ rounds to transform the proper $\tilde{O}(\Delta^2)$-coloring to a proper $\tilde{O}(\Delta^{5/4})$-coloring. Internally, the core stage is transforming the $\Delta^{1/4}$-defective $O(\Delta^{3/2}\log^2\Delta)$-coloring to a $(2\cdot\Delta^{1/4})$-arbdefective $O(\Delta^{3/4}\log\Delta)$-coloring. Lastly, in the second transition stage, which is the \emph{transition-out stage}, vertices use another $O(\Delta^{3/4}\log\Delta)$ rounds to transform the proper $\tilde{O}(\Delta^{5/4})$-coloring to a proper $(\Delta+O(\Delta^{3/4}\log\Delta))$-coloring.

We stress that, during execution, although our algorithm internally is working on improper colorings such as defective coloring and arbdefective coloring, with the help of cover-free set systems and coding, we ensure the coloring vertices produce at the end of each round is proper.

Being locally-iterative also means our algorithm cannot depend on the current round number to determine which phase it is in. To solve this issue, we assign each phase an interval so that vertices running that phase will have colors in the corresponding interval. By assigning disjoint intervals to different phases, vertices can correctly determine its progress by observing its current color.
More specifically, the intervals used by the three phases are $I_1$, $I_2$, and $I_3$, respectively, where
\begin{gather*}
|I_1|=\ell_1\text{, }|I_2|=\ell_2\text{, and }|I_3|=\ell_3, \\
I_1\triangleq[\ell_3+\ell_2,\ell_3+\ell_2+\ell_1)\text{, }I_2\triangleq[\ell_3,\ell_3+\ell_2)\text{, and }I_3\triangleq[0,\ell_3).
\end{gather*}
To give the precise values for $\ell_1,\ell_2,\ell_3$, we first define three integers and three primes numbers:
\begin{gather*}
m_1 \triangleq 4\Delta^{3/2}\log^2(n_{r^*}) \text{,~~} m_2 \triangleq 4\sqrt{\Delta}\log^2(n_{r^*}) \text{,~~and~~} m_3 \triangleq 16\sqrt{\Delta}\log^2(\lambda^2 m_2);\\
\lambda \in (\sqrt{m_1}+1,2(\sqrt{m_1}+1)] \text{,~~} \mu \in (\sqrt{\Delta}+\sqrt{m_3},2(\sqrt{\Delta}+\sqrt{m_3})] \text{,~~and~~} \tau \in (\sqrt{m_3},2\sqrt{m_3}].
\end{gather*}
Due to the Bertrand-Chebyshev theorem~\cite{chebyshev1852}, prime numbers $\lambda,\mu,\tau$ must exist. Then, we set:
\begin{align*}
|I_1| &= \ell_1=n+O\left(\log^*n\cdot\Delta^2\cdot\log^2{n}\right),\\
|I_2| &= \ell_2=2\lambda^3(\mu+1)\cdot m_3=O\left(\Delta^{13/4}\log^5\Delta\right),\\
|I_3| &= \ell_3=\Delta+(2\sqrt{m_3}+1)\cdot\mu=\Delta+O\left(\Delta^{3/4}\log{\Delta}\right).
\end{align*}

\Cref{fig:alg-structure} provides a graphical overview of the algorithm structure.

Before presenting the algorithm in detail, we state the key guarantees enforced by each phase.

\begin{lemma}[\textbf{Linial Phase}]\label{lemma:phase1-property-simple}
By the end of round $r^*=\log^*{n}+O(1)$, all vertices have completed the Linial phase, producing a proper coloring $\phi_{r^*}$ where $\phi_{r^*}(v)\in[\ell_3+\ell_2,\ell_3+\ell_2+O(\Delta^2))\subseteq I_1$ for every vertex $v$. Moreover, $\phi_{t}$ is proper for every round $t\in[1,r^*]$.
\end{lemma}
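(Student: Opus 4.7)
The plan is to implement Linial's color-reduction procedure iteratively inside $I_1$ by partitioning $I_1$ into disjoint consecutive sub-blocks $J_0, J_1, \ldots, J_{r^*}$ whose sizes match the palette sizes produced by successive applications of Linial's reduction. Concretely, set $n_0 = n$ and $n_{i+1} = 4(\Delta+1)^2 \log^2 n_i$; by \Cref{thm:set-families}, for each $i$ there is a $\Delta$-cover-free set system $\mathcal{F}_i$ of size $n_i$ over a ground set $U_i$ of size at most $n_{i+1}$. I would lay out the sub-blocks $J_i$ of size $n_i$ consecutively from the top of $I_1$ downward, so that the last sub-block $J_{r^*}$ occupies exactly $[\ell_3 + \ell_2,\, \ell_3 + \ell_2 + n_{r^*})$ with $n_{r^*} = O(\Delta^2)$; the total length is $\sum_i n_i \leq n + O(\log^* n \cdot \Delta^2 \log^2 n) = \ell_1$, so the layout fits. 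The initial coloring $\phi_0$ is defined by any injective encoding of $id(v) \in [n]$ into the top block $J_0$, and is proper because identifiers are distinct.

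The uniform $\textsc{Update}$ rule is then: given the current color $c$ and the multiset $C_N$ of neighbors' colors, determine the unique $i$ with $c \in J_i$; if $i = r^*$, return $c$ (vertices that have exited the Linial phase remain fixed); otherwise decode $c$ as the set $S_c \in \mathcal{F}_i$ it indexes, decode each $c' \in C_N \cap J_i$ as $S_{c'} \in \mathcal{F}_i$, invoke the $\Delta$-cover-free property of \Cref{def:cover-free-set-system} on the at most $\Delta$ neighbor-sets to obtain the least element $x \in S_c \setminus \bigcup_{c' \in C_N \cap J_i} S_{c'}$, and return the encoding of $x$ as a color in $J_{i+1}$. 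This rule is uniform and reads only $v$'s current color and its neighbors' multiset of colors, as required by \Cref{def:locally-iter-alg}.

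I would then prove by induction on $t \in [1, r^*]$ that every vertex's color lies in $J_t$ and that $\phi_t$ is proper. Synchronization is automatic: because the rule is uniform and deterministic, if every vertex starts round $t$ with a color in $J_{t-1}$ then every vertex ends the round with a color in $J_t$, so neighbors are always at the same iteration level when the cover-free guarantee is applied. Properness of $\phi_t$ then follows because for every edge $(u,v)$ the element $v$ chooses lies in $S_{\phi_{t-1}(v)} \setminus S_{\phi_{t-1}(u)}$ and hence differs from the element $u$ chooses. Iterating the recurrence $n_{i+1} = 4(\Delta+1)^2 \log^2 n_i$ stabilizes $n_i$ at $O(\Delta^2)$ after $r^* = \log^* n + O(1)$ rounds by the standard iterated-logarithm argument, with the terminal bound coming from the second branch of \Cref{thm:set-families} once $n_i \leq 8(\Delta+1)^3$.

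The main obstacle I expect is the bookkeeping that pins down the encoding/decoding between colors and cover-free sets at each level and the verification that the concatenation of sub-blocks fits in $\ell_1$ addresses while aligning $J_{r^*}$ with the promised final range $[\ell_3+\ell_2,\, \ell_3+\ell_2+O(\Delta^2))$; once those details are settled, uniformity, properness preservation, and the $\log^* n + O(1)$ round bound follow directly from the cover-free property and the classical analysis of Linial's algorithm.
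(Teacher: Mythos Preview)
Your proposal is correct and follows essentially the same approach as the paper: partition $I_1$ into consecutive sub-intervals of sizes $n_0,n_1,\ldots,n_{r^*}$ (placed from the top of $I_1$ downward so that the final sub-interval is $[\ell_3+\ell_2,\ell_3+\ell_2+n_{r^*})$), initialize $\phi_0(v)$ by encoding $id(v)$ into the top block, and in each round apply a $\Delta$-cover-free family over the next sub-interval to step all vertices down one level; the paper states the resulting invariant (Lemma~\ref{lemma:phase1-property}) and says it is ``immediate by an induction on $t$,'' which is exactly your inductive argument.
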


\begin{lemma}[\textbf{Quadratic Reduction Phase}]\label{lemma:phase2-property-simple}
By the end of round $r^*+2+3\lambda$, all vertices have completed the quadratic reduction phase, producing a proper coloring $\phi_{r^*+2+3\lambda}$ where $\phi_{r^*+2+3\lambda}(v)\in I_3$ for every vertex $v$. Moreover, $\phi_{t}$ is proper for every round $t\in[r^*+1,r^*+2+3\lambda]$.
\end{lemma}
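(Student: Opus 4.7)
The plan is to decompose the quadratic reduction phase into its three documented substages---a single-round transition-in, a $\Theta(\lambda)$-round core, and a $\Theta(\lambda)$-round transition-out---and to prove a self-contained guarantee for each substage which can then be chained together. The key structural observation is that the three phases of the overall algorithm occupy pairwise-disjoint color intervals $I_1, I_2, I_3$, so a vertex can determine which substage both it and each of its neighbors is currently executing simply by inspecting the broadcast colors; this removes the apparent difficulty that, being locally-iterative and uniform, the algorithm has no explicit round counter available.

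For the transition-in substage, starting from \Cref{lemma:phase1-property-simple}'s proper $O(\Delta^2)$-coloring in $I_1$, I would exhibit a single-round local mapping into the ``top slice'' of $I_2$ whose image color packages (i) an internal $\Delta^{1/4}$-defective $O(\Delta^{3/2}\log^2\Delta)$-coloring, obtained by bucketing the old $O(\Delta^2)$ colors into groups of size $\Delta^{1/4}$, together with (ii) a discriminator chosen from the ground set of a $\Delta$-cover-free family of size $n$ as guaranteed by \Cref{thm:set-families}. The defect bound is the intra-bucket degree; properness of the new external color follows because any two neighbors either differ on the internal label or, failing that, are separated by the cover-free discriminator. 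Handling the dual role of this external color---it must both drive the next round's update and serve as a locally-checkable proper coloring---is precisely what forces the construction to rely on coverings.

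The crux of the argument is the $\Theta(\lambda)$-round core substage. I would prove, by induction on the round index inside the core, that the uniform local update rule transforms an internal $d$-arbdefective $q$-coloring into a $(d + \Delta^{1/4})$-arbdefective $\tilde{O}(\sqrt{q})$-coloring, while the externally broadcast color remains proper by virtue of a $\Delta$-union-$(\rho+1)$-cover-free family with $\rho = \Delta^{1/4}$ from \Cref{thm:set-families-extension}. Quadratic shrinkage of the internal palette means $O(\log\log\Delta)$ ``effective'' iterations would drive the palette to $O(\Delta^{3/4}\log\Delta)$, but because the update is locally-iterative and uniform, and each external color must carry enough of its encoding history to certify properness to its neighbors in every intermediate round, the actual number of rounds is $\lambda = \Theta(\sqrt{m_1}) = \Theta(\Delta^{3/4}\log\Delta)$. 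The main obstacle---and what I expect to be the hardest step---is arranging the update rule so that it is oblivious to the current internal palette size: my approach is to embed every intermediate internal coloring in a fixed ambient palette of size $|I_2|$, letting redundant coordinates act as placeholders until they are overwritten, so that a single local function can be applied identically in every round of the core.

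Finally, for the transition-out substage, once the core produces a $(2\Delta^{1/4})$-arbdefective $O(\Delta^{3/4}\log\Delta)$-coloring, I would run the standard arbdefective-to-proper reduction that iterates through the $O(\Delta^{3/4}\log\Delta)$ internal color classes in order, and has each vertex---which has at most $2\Delta^{1/4}$ already-committed out-neighbors---pick a free color from the size-$\ell_3$ palette $I_3 = [0,\Delta + \Theta(\Delta^{3/4}\log\Delta))$; one round per class suffices, and the constants in $\ell_3$ are chosen precisely so that the palette is never exhausted. Summing the substage lengths yields $1 + \lambda + (2\lambda + 1) = 2 + 3\lambda$ rounds past round $r^*$, and each substage's invariant guarantees that $\phi_t$ is proper throughout $t \in [r^*+1,\, r^*+2+3\lambda]$ and that $\phi_{r^*+2+3\lambda}(v) \in I_3$ for every $v$, which completes the proof of \Cref{lemma:phase2-property-simple}.
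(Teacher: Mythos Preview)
Your three-substage decomposition and the round arithmetic $1+\lambda+(2\lambda+1)=2+3\lambda$ match the paper, but the mechanisms you propose for the first two substages have genuine gaps.

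For the transition-in, ``bucketing the old $O(\Delta^2)$ colors into groups of size $\Delta^{1/4}$'' does not produce a $\Delta^{1/4}$-defective coloring: nothing prevents all $\Delta$ neighbors of a vertex from landing in its bucket. The paper instead obtains the defective label $a(v)$ via a $\Delta$-union-$(\Delta^{1/4}+1)$-cover-free family (\Cref{thm:set-families-extension}), which by definition lets each vertex pick an element of its set shared with at most $\Delta^{1/4}$ neighbors' sets; the discriminator $b(v)$ then comes from a $\Delta^{1/4}$-cover-free family applied only to those few neighbors. More seriously, the core stage is not ``$O(\log\log\Delta)$ effective iterations'' of repeated squaring: it is a \emph{single} reduction $[\lambda^2]\to[\lambda]$. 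Writing $a=\hat{a}\lambda+\tilde{a}$, the per-round update is $a\gets\hat{a}\lambda+((\hat{a}+\tilde{a})\bmod\lambda)$, which cycles $\tilde{a}$ additively by $\hat{a}$ over $GF(\lambda)$; a vertex commits $a\gets\tilde{a}$ in the first round where at most $\Delta^{1/4}$ neighbors share its $\tilde{a}$ but differ in $\hat{a}$. Because $\lambda$ is prime, any two distinct initial $a$-values can collide on $\tilde{a}$ in at most two of any $\lambda$ consecutive rounds (a linear-polynomial argument over $GF(\lambda)$), so pigeonhole over $\Delta$ neighbors together with $\lambda>2\Delta^{3/4}$ forces a good round within $\lambda$ steps. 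The $\lambda$-round cost is thus an algebraic pigeonhole bound, not an ``encoding-history'' constraint; your proposal supplies no concrete update rule and misidentifies the source of the time complexity. (The transition-out also needs more than ``pick a free color'': since vertices sharing an $a$-value may transition simultaneously, each first spends a round selecting an index $d$ that fixes a degree-$2$ polynomial over $GF(\mu)$, and then chooses its $I_3$-color from that polynomial's graph, avoiding both committed $I_3$-neighbors and the polynomials of its out-neighbors in the arbdefective orientation---hence two rounds per $a$-class.)
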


\begin{lemma}[\textbf{Standard Reduction Phase}]\label{lemma:phase3-property-simple}
By the end of round $r^*+1+3\lambda+(2\sqrt{m_3}+1)\mu$, the coloring $\phi_{r^*+1+3\lambda+(2\sqrt{m_3}+1)\mu}$ is a proper $(\Delta+1)$-coloring. Moreover, $\phi_t$ is proper for every round $t\in[r^*+3+3\lambda,\infty)$.
\end{lemma}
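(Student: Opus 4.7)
The plan is to realize the folklore ``reduce one color per round'' procedure as a uniform locally-iterative rule for phase~3, and then verify its two invariants round by round. The update rule I would use is: a vertex $v$ whose current color $\phi(v)$ lies in $I_3$ and satisfies both $\phi(v) > \Delta$ and $\phi(v) > \phi(u)$ for every $u \in N(v)$ replaces $\phi(v)$ with the smallest element of $[0,\Delta] \setminus \{\phi(u) : u \in N(v)\}$; otherwise $v$ retains its color. Since $v$ has at most $\Delta$ neighbors while the target palette $[0,\Delta]$ has $\Delta+1$ colors, the replacement is always well-defined. The rule depends only on $\phi(v)$ together with the multiset of neighbors' colors and is oblivious to the round number, so it fits Definition~\ref{def:locally-iter-alg}.

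First I would establish the \emph{properness invariant}. By Lemma~\ref{lemma:phase2-property-simple} the input $\phi_{r^*+2+3\lambda}$ is already proper with all values in $I_3$. The key observation is that no two adjacent vertices can simultaneously trigger the recoloring branch: if $v$ and $w$ are neighbors, the strict inequalities $\phi(v) > \phi(w)$ and $\phi(w) > \phi(v)$ cannot both hold. Thus at most one endpoint of any edge changes its color in a given round, and by construction the new color avoids every neighbor's current color (which, by the previous sentence, is also the neighbor's color in the next round). This preserves properness of $\phi_t$ for every $t \ge r^* + 3 + 3\lambda$.

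Next I would argue \emph{progress}. Let $M_t$ denote the maximum color appearing in $\phi_t$. If $M_t > \Delta$, then every vertex $v$ with $\phi_t(v) = M_t$ is, by properness, strictly greater than all of its neighbors and therefore triggers recoloring, moving to some color in $[0,\Delta]$. Consequently no vertex holds $M_t$ in $\phi_{t+1}$, giving $M_{t+1} \le M_t - 1$. Starting from $M_{r^*+2+3\lambda} \le \ell_3 - 1 = \Delta + (2\sqrt{m_3}+1)\mu - 1$ and iterating $(2\sqrt{m_3}+1)\mu - 1$ times drives $M$ down to at most $\Delta$ by round $r^* + 1 + 3\lambda + (2\sqrt{m_3}+1)\mu$, which establishes the first claim.

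Finally, once every color lies in $[0,\Delta]$ the guard $\phi(v) > \Delta$ never activates, so the rule acts as the identity and $\phi_t$ remains a proper $(\Delta+1)$-coloring for all subsequent rounds, closing the ``moreover'' clause. I do not expect a substantial obstacle here: this is the textbook analysis of the one-color-per-round reduction. The only mild care is that the guard $\phi(v) \in I_3$ is needed so the rule does not accidentally fire on vertices whose colors still belong to $I_1$ or $I_2$ from earlier phases; once phase~3 is underway for every vertex this guard is vacuous, but including it keeps the rule compatible with the interval-based phase encoding used throughout the algorithm.
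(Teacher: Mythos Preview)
Your proposal is correct and follows essentially the same route as the paper: properness via the ``at most one endpoint of an edge can recolor'' observation (this is Lemma~\ref{lemma:phase3-proper-color}), and progress via the ``maximum color strictly decreases each round'' observation (this is Lemma~\ref{lemma:phase3-time-cost}), with Lemma~\ref{lemma:phase2-property-simple} supplying the proper $I_3$-coloring to start from. The only cosmetic difference is your extra guard $\phi(v)>\Delta$; the paper's rule omits it (so vertices already in $[0,\Delta]$ may still recolor when they are the local maximum), but this does not affect either invariant, and your implicit handling of neighbors still in $I_1\cup I_2$ through the strict-maximum test is equivalent to the paper's explicit ``all neighbors in $I_3$'' check since those intervals sit above $I_3$.
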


In the reminder of this section, we will introduce each phase in detail and sketch the proof for above lemmas. We will conclude with a proof of the main theorem---i.e., \Cref{thm:alg-main}. Omitted details and proofs of the second phase and the last phase are provided in \Cref{sec-app:quadratic-reduction-phase} and \Cref{sec-app:standard-reduction-phase}.

\subsection{The Linial phase}\label{subsec:phase1}

The Linial phase runs a locally-iterative version of Linial's well-known coloring algorithm~\cite{linial87}. More specifically, let $n_0=n$ and for $i\ge 1$, define
$$
n_i=
\begin{cases}
	4(\Delta+1)^2\log^2 (n_{i-1}) & \text{if }n_{i-1}> 8(\Delta+1)^3,\\
	4(\Delta+1)^2 &\text{if }n_{i-1}\le 8(\Delta+1)^3.
\end{cases}
$$
Let $r^*$ be the smallest $r\ge 0$ such that $n_r\le 4(\Delta+1)^2$.
It has been shown $r^*\le\log^*n+O(1)$. (See, e.g., Section 3.10 of \cite{barenboim13}.)
During the Linial phase, vertices will reduce the number of colors used to $n_i$ after $i$ rounds, thus within $\log^*{n}+O(1)$ rounds the algorithm produces a proper $O(\Delta^2)$-coloring.

Recall the Linial phase assigns vertices with colors in interval $I_1=[\ell_3+\ell_2,\ell_3+\ell_2+\ell_1)$. We set:
$$|I_1|=\ell_1=\sum_{i=0}^{r^*} n_i=n+O(\log^*n\cdot\Delta^2\cdot\log^2{n}).$$
Furthermore, we partition $I_1$ into $r^*+1$ sub-intervals $I_1^{(0)},I_1^{(1)},\cdots,I_1^{(r^*)}$, such that for each $0\leq t\leq r^*$:
$I_1^{(t)}\triangleq[\ell_3+\ell_2+\sum_{t+1\le i\le r^*}n_i~~,~~\ell_3+\ell_2+\sum_{t\le i\le r^*}n_i)$.
Notice that $I_1^{(r^*)}=\left[\ell_3+\ell_2,\ell_3+\ell_2+n_{r^*}\right)$. In general, during the Linial phase, after $t$ rounds where $t\in[0,r^*]$, all vertices' colors are in interval $I_1^{(t)}$.

We now give the complete description of the Linial phase, which contains $r^*$ rounds. Recall that each vertex $v$ has a unique identity $id(v)\in[n]$, the initial color $\phi_0(v)$ of vertex $v$ is:
$$\phi_0(v)\gets\ell_3+\ell_2+\sum_{i=1}^{r^*}n_i+id(v),$$
clearly $\phi_0(v)\in I_1^{(0)}$. In any round $t\in[1,r^*]$, every vertex $v$ can correctly determine the value of $t$ by observing $\phi_{t-1}(v)$. Let $\mathcal{F}_{t-1}$ be a $\Delta$-cover-free set system of size $|\mathcal{F}_{t-1}|=n_{t-1}$ with ground set $I_1^{(t)}$, whose existence is guaranteed by \Cref{thm:set-families}. The elements of $\mathcal{F}_{t-1}$ are $\mathcal{F}_{t-1}\triangleq\{S_{t-1}^{(k)}\mid\text{ integers } k\in I_1^{(t-1)}\}$.
For any vertex $v$, the color $\phi_t(v)$ is set to be the smallest number in $S_{t-1}^{(\phi_{t-1}(v))}$, excluding all elements of $S_{t-1}^{(\phi_{t-1}(u ))}$ for all $v$'s neighbors $u\in N(v)$. Due to the $\Delta$-cover-freeness of $\mathcal{F}_{t-1}$, such color must exist.
Pseudocode of the Linial phase is provided in \Cref{sec-app:pseudocode}.

At this point, the following stronger version of \Cref{lemma:phase1-property-simple} is immediate by an induction on $t$.

\begin{lemma}\label{lemma:phase1-property}
For every $0\le t\le r^*$, the coloring $\phi_t$ is proper, and $\phi_t(v)\in I_1^{(t)}$ for every vertex $v\in V$.
\end{lemma}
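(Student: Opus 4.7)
The plan is to prove both statements of the lemma simultaneously by induction on $t$, which is in fact the proof strategy hinted at in the excerpt. The base case $t=0$ is immediate: since $\phi_0(v)=\ell_3+\ell_2+\sum_{i=1}^{r^*}n_i+id(v)$ and $id(v)\in[0,n_0)$, the initial color lies in $I_1^{(0)}=[\ell_3+\ell_2+\sum_{i=1}^{r^*}n_i,\ell_3+\ell_2+\sum_{i=0}^{r^*}n_i)$, and distinctness of identifiers across \emph{all} vertices makes $\phi_0$ proper everywhere, hence in particular on every edge.

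For the inductive step, I would assume that $\phi_{t-1}$ is a proper coloring with $\phi_{t-1}(v)\in I_1^{(t-1)}$ for all $v$. The first thing to verify is that the set system $\mathcal{F}_{t-1}$ invoked by the algorithm actually exists with the claimed parameters. The index set $I_1^{(t-1)}$ has size $n_{t-1}$ and the proposed ground set $I_1^{(t)}$ has size $n_t$; by the defining recurrence of $n_t$, this is exactly the upper bound supplied by \Cref{thm:set-families} for a $\Delta$-cover-free family of size $n_{t-1}$. Thus each color $k\in I_1^{(t-1)}$ is associated with a well-defined set $S_{t-1}^{(k)}\subseteq I_1^{(t)}$, and distinct indices give distinct sets.

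The core step is then the standard Linial argument. Fix a vertex $v$; by the induction hypothesis its at most $\Delta$ neighbors carry colors $\phi_{t-1}(u)\neq\phi_{t-1}(v)$, so the associated sets $\{S_{t-1}^{(\phi_{t-1}(u))}\}_{u\in N(v)}$ form a collection of at most $\Delta$ members of $\mathcal{F}_{t-1}\setminus\{S_{t-1}^{(\phi_{t-1}(v))}\}$. By the $\Delta$-cover-freeness of $\mathcal{F}_{t-1}$, the set $S_{t-1}^{(\phi_{t-1}(v))}$ is not contained in their union, so a minimum uncovered element exists, $\phi_t(v)$ is well-defined, it lies in $S_{t-1}^{(\phi_{t-1}(v))}\subseteq I_1^{(t)}$, and it avoids $S_{t-1}^{(\phi_{t-1}(u))}$ for every neighbor $u$. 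Properness of $\phi_t$ then follows along any edge $(u,v)$: $\phi_t(u)\in S_{t-1}^{(\phi_{t-1}(u))}$ while $\phi_t(v)\notin S_{t-1}^{(\phi_{t-1}(u))}$, so $\phi_t(u)\neq\phi_t(v)$.

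I do not expect a genuinely hard obstacle here, since this is precisely Linial's classical color-reduction argument repackaged in a locally-iterative form. The only point requiring care is the bookkeeping that aligns the cardinalities of the sub-intervals $I_1^{(t-1)}$ and $I_1^{(t)}$ with the parameters demanded by \Cref{thm:set-families}; this alignment is exactly the reason the recurrence defining the $n_i$'s, and therefore the partition of $I_1$, is chosen as it is, and checking this consistency is the one place in the argument where the specific construction of the intervals is actually used.
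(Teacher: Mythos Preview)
Your proposal is correct and follows exactly the approach the paper indicates: the paper states that the lemma ``is immediate by an induction on $t$'' and does not spell out any further details, so your argument simply fills in the standard Linial-style inductive step that the paper leaves implicit. There is nothing to add or correct.
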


\subsection{The quadratic reduction phase}\label{subsec:phase2}

The second phase is the most interesting and complex component of our algorithm, it is also the key for achieving sublinear-in-$\Delta$ runtime. This phase contains two transition stages and one core stage. Once the transition-in stage---which maps colors from interval $I_1$ to interval $I_2$---is done, during the core stage, vertices work with colors in interval $I_2$ and reduce the number of used colors quadratically; then in the transition-out stage, vertices map colors from interval $I_2$ to interval $I_3$.

Recall that we set $\ell_2=2\lambda^3(\mu+1)\cdot m_3$, hence for every color $(\ell_3+i)\in I_2$ where $i\in[\ell_2]$, we can use a unique quadruple $\langle a,b,c,d\rangle$ to identify it, where:
\begin{align*}
	a &= \ftp{i/\left(2\lambda(\mu+1)m_3\right)}, & c &= \ftp{\left(i-a\cdot2\lambda(\mu+1)m_3-b\cdot2\lambda(\mu+1)\right)/\left(\mu+1\right)},\\
	b &= \ftp{\left(i-a\cdot2\lambda(\mu+1)m_3\right)/\left(2\lambda(\mu+1)\right)}, & d &= i\bmod(\mu+1).
\end{align*}
In other words, $i= a\cdot2\lambda(\mu+1)m_3 + b\cdot2\lambda(\mu+1) + c\cdot(\mu+1) + d$. It is easy to verify that:
$$a\in[\lambda^2]\text{, }b\in[m_3]\text{, }c\in[2\lambda]\text{, and }d\in[\mu+1].$$
In the reminder of this paper, for any round $t\geq r^*+1$, for any vertex $v$, if $\phi_t(v)\in I_2$, then we use $a_t(v),b_t(v),c_t(v),d_t(v)$ to denote the values of $a(v),b(v),c(v),d(v)$ in $\phi_t(v)$.
Moreover, we often use $\ell_3+\langle a_t(v),b_t(v),c_t(v),d_t(v)\rangle$ to denote the color of $v$ at the end of round $t$ if $\phi_t(v)\in I_2$.

\paragraph{Algorithm description}

We now introduce each stage in detail, and we begin with the transition-in stage, which takes one round and transforms $\phi_{r^*}$ to a proper coloring with colors from the interval $I_2=[\ell_3,\ell_3+\ell_2)$. (Recall that the Linial phase takes $r^*$ rounds.) Specifically, we employ the defective coloring algorithm developed by Barenboim, Elkin, and Kuhn~\cite{barenboim14}, with suitable parameters tailored for our purpose.
The core of this approach is the usage of $\Delta$-union-$(\rho+1)$-cover-free set systems.
Recall \Cref{def:cover-free-set-system-extension}, a $\Delta$-union-$(\rho+1)$-cover-free set system is a set system $\mathcal{F}$ such that for every $\Delta+1$ distinct sets $S_0,S_1,\cdots,S_{\Delta}\in\mathcal{F}$, it holds that there exists at least one element $x\in S_0$ that appears in at most $\rho$ sets among $S_1,S_2,\cdots,S_{\Delta}$. In round $r^*+1$, let $\mathcal{F}_a\triangleq\{S_a^{(i)}\mid \text{ integers }i\in I_1^{(r^*)}\}$ be a $\Delta$-union-$\left(\Delta^{1/4}+1\right)$-cover-free set family with $[m_1]$ as its ground set. Such $\mathcal{F}_a$ exists due to \Cref{thm:set-families-extension}. Recall that by the end of round $r^*$, for each vertex $v$, its color $\phi_{r^*}(v)\in I_1^{(r^*)}$, and $v$ will send $\phi_{r^*}(v)$ to all its neighbors. In round $r^*+1$, for each vertex $v$, it chooses $a(v)$ from $S_a^{(\phi_{r^*}(v))}$. In particular, for every element $x\in S_a^{(\phi_{r^*}(v))}$, vertex $v$ computes the set of neighbors that also have $x$ in their respective $S_a^{(\cdot)}$ sets: $N'(v,x)\triangleq\{u\mid u\in N(v), \phi_{r^*}(u)\in I_1^{(r^*)}, x\in S_a^{(\phi_{r^*}(u))}\}$. Let $\hat{x}$ be the smallest element in $S_a^{(\phi_{r^*}(v))}$ satisfying $|N'(v,\hat{x})|\leq\Delta^{1/4}$, vertex $v$ then assigns $a(v)=\hat{x}\in[\lambda^2]$. Due to \Cref{def:cover-free-set-system-extension} and \Cref{thm:set-families-extension}, every vertex can find such $\hat{x}$ in round $r^*+1$.

By the end of round $r^*+1$, vertex $v$'s $a(v)$ may collide with up to $\Delta^{1/4}$ of its neighbors, as $\mathcal{F}_a$ is a $\Delta$-union-$\left(\Delta^{1/4}+1\right)$-cover-free set family. To resolve this issue, we build another $\Delta^{1/4}$-cover-free set family to assign different $b$ values to these potential colliding neighbors. Specifically, for each vertex $v$, let $N''(v)\triangleq\{u\mid u\in N(v),\phi_{r^*}(u)\in I_1^{(r^*)},a(v)\in S_a^{(\phi_{r^*}(u))}\}$ be the set of neighbors that might have colliding $a$ value. We know $|N''(v)|\leq\Delta^{1/4}$ due to previous discussion. Now, let $\mathcal{F}_b\triangleq\{S_b^{(i)}\mid \text{ integers }i\in I_1^{(r^*)}\}$ be a $\Delta^{1/4}$-cover-free set family with ground set $[m_2]\subseteq[m_3]$. Such $\mathcal{F}_b$ exists due to \Cref{thm:set-families}. Vertex $v$ assigns $b(v)$ to be an element in $S_b^{(\phi_{r^*}(v))}\setminus\bigcup_{u\in N''(v)} S_b^{(\phi_{r^*}(u))}$, which is guaranteed to exist due to \Cref{def:cover-free-set-system} and \Cref{thm:set-families}.

Lastly, we note that every vertex $v$ initializes $c(v)$ and $d(v)$ during the transition-in stage, though they are not used here. See \Cref{alg:phase2-stage1} for the pseudocode of the transition-in stage.

\begin{algorithm}[t!]
\caption{The transition-in stage of the quadratic reduction phase at $v\in V$ in round $t=r^*+1$}\label{alg:phase2-stage1}
\begin{algorithmic}[1]
\State Send $\phi_{t-1}(v)$ to all neighbors.
\If {($\phi_{t-1}(v)\in I_1$)}
	\State Determine the value of $t$ based on $\phi_{t-1}(v)$.
	\If{($t=r^*+1$)}
		\For {(every element $x\in S_a^{(\phi_{t-1}(v))}$)}
			\State $N'(v,x) \gets \left\{u \mid u\in N(v), \phi_{t-1}(u)\in I_1^{(t-1)}, x\in S_a^{(\phi_{t-1}(u))} \right\}$.
		\EndFor
		\State $a_{t}(v)\gets\min\left\{ x \mid x\in S_a^{(\phi_{t-1}(v))}, \left|N'(v,x)\right|\leq\Delta^{1/4} \right\}$.
		\State $N''(v)\gets\left\{u\mid u\in N(v),\phi_{t-1}(u)\in I_1^{(t-1)},a_{t}(v)\in S_a^{(\phi_{t-1}(u))}\right\}$.
		\State $b_{t}(v)\gets\min S_b^{(\phi_{t-1}(v))}\setminus\bigcup_{u\in N''(v)}S_b^{(\phi_{t-1}(u))}$.
		\State $c_{t}(v)\gets 0$, $d_{t}(v)\gets\mu$.
		\State $\phi_{t}(v) \gets \ell_3 + \langle a_{t}(v), b_{t}(v), c_{t}(v), d_{t}(v) \rangle$.
	\EndIf
\EndIf
\end{algorithmic}
\end{algorithm}

Once the transition-in stage is done, the $a$ values of all vertices correspond to a $\Delta^{1/4}$-defective coloring, using a palette containing $\lambda^2$ colors, as $a\in[\lambda^2]$. The main objective of the core stage is to start from this $\Delta^{1/4}$-defective $\lambda^2$-coloring to gradually obtain a $(2\cdot\Delta^{1/4})$-arbdefective $\lambda$-coloring.
Notice that this reduces the number of colors used---or more precisely, the range of the $a$ values of all vertices---from $[\lambda^2]$ to $[\lambda]$.
To achieve this quadratic reduction, for every vertex $v$, we interpret the first coordinate $a(v)$ of its color quadruple in the following manner:
$$a(v)=\hat{a}(v)\cdot\lambda+\tilde{a}(v)\text{, where }\hat{a}(v)=\lfloor a(v)/\lambda\rfloor\text{ and }\tilde{a}(v)=a(v)\bmod\lambda.$$
During the core stage, we run a locally-iterative arbdefective coloring algorithm inspired by \cite{barenboim21} that makes a series of updates to $a(v)$ so that eventually $\hat{a}(v)=0$, reducing $a(v)$ from $[\lambda^2]$ to $[\lambda]$.

More specifically, for each vertex $v$, in each round $t$ in the core stage where $a_{t-1}(v)\geq\lambda$, it will count the number of neighbors that also have colors in interval $I_2$ and satisfy ``$a_{t-1}(u)\neq a_{t-1}(v)$ and $a_{t-1}(u)\equiv a_{t-1}(v)\bmod\lambda$''. Denote this set of neighbors as:
$$M_t(v)\triangleq\{u\mid u\in N(v),\phi_{t-1}(u)\in I_2,\hat{a}_{t-1}(u)\neq \hat{a}_{t-1}(v),\tilde{a}_{t-1}(u)= \tilde{a}_{t-1}(v)\}.$$

If $|M_t(v)|>\Delta^{1/4}$, then $v$ updates $a_t(v)$ according to the following rule:
$$a_t(v) \gets \hat{a}_{t-1}(v)\cdot\lambda + ((\hat{a}_{t-1}(v)+\tilde{a}_{t-1}(v))\bmod\lambda).$$
Moreover, vertex $v$ keeps its $b,c,d$ values unchanged.

Otherwise, if $|M_t(v)|\leq\Delta^{1/4}$, then $v$ updates $a_t(v)$ to be $a_{t-1}(v)\bmod\lambda$, or equivalently:
$$a_t(v) \gets \tilde{a}_{t-1}(v).$$
Notice this step reduces the range of $a(v)$ from $[\lambda^2]$ to $[\lambda]$, completing the core stage for vertex $v$.
At this point, vertex $v$ will also set $c(v)$ in its color quadruple. ($d(v)$ is not used during core stage.)
\begin{align*}
    c_t(v)\gets &1+\max_{u\in M'_t(v)}\{c_{t-1}(u)\},\\
    &\text{ where }M'_t(v)\triangleq\{u\mid u\in N(v),\phi_{t-1}(u)\in I_2,\hat{a}_{t-1}(u)=0,\tilde{a}_{t-1}(u)=\tilde{a}_{t-1}(v)\}.
\end{align*}

The $c$ values of vertices implicitly define the orientations of edges: for neighbors $u$ and $v$, vertex $v$ points to vertex $u$ if and only if $c(v)\geq c(u)$.%
\footnote{In case $c(u)=c(v)$, the orientation of edge $(u,v)$ can be determined by comparing $id(u)$ and $id(v)$. However, our algorithm does not require $v$ to know $id(u)$, or vise versa. Instead, when $c(u)=c(v)$, vertex $v$ treats $(u,v)$ as pointing to $u$, and vertex $u$ treats $(u,v)$ as pointing to $v$. We shall show our algorithm still works under such interpretation.}
By guaranteeing that the out-degree of the oriented graph induced by each $a$ value is at most $2\cdot\Delta^{1/4}$, the $a$ and $c$ values of vertices together constitute a $(2\cdot\Delta^{1/4})$-arbdefective coloring during the core stage.
We also note that, since the maximum $c$ value attained by any vertex can increase by at most one in each round, and since we can show every vertex will reduce its $a$ value to $[\lambda]$ by the end of round $r^*+2+\lambda$, the algorithm guarantees the $c$ value of any vertex will never exceed $\lambda+1$.

To ensure $\phi_t$ is proper when $|M_t(v)|\leq\Delta^{1/4}$, vertex $v$ uses a $(2\cdot\Delta^{1/4})$-cover-free set family to assign its $b$ value. (The $b$ value generated by the transition-in stage already guarantees $\phi_t$ is proper when $|M_t(v)|>\Delta^{1/4}$.)
It can be seen as a variant of Linial's algorithm in that each vertex has a ``forbidden color list''.
More specifically, in our setting, recall that $\tau$ is a prime satisfying $2\cdot\Delta^{1/4}\log(\lambda^2 m_2)<\tau\leq2\cdot(2\cdot\Delta^{1/4}\log(\lambda^2 m_2))$. We construct a $(2\cdot\Delta^{1/4})$-cover-free set family in the following manner. For every integer $i\in[\lambda^2 m_2]$, we associate a unique polynomial $P_i$ of degree $\log(\lambda^2 m_2)$ over finite field $GF(\tau)$ to it.
(This is possible since the number of such polynomials is at least $(\tau-1)^{1+\log(\lambda^2 m_2)}>\lambda^2 m_2$.)
Let $\mathcal{F}_c\triangleq\{S_c^{(0)},S_c^{(1)},\cdots,S_c^{(\lambda^2 m_2-1)}\}$ be a set family of size $\lambda^2 m_2$, where $S_c^{(i)}\triangleq\{x\cdot \tau + P_i(x)\mid x\in[\tau]\}$ for every $i\in[\lambda^2 m_2]$. Since the degree of the polynomials $P_i$ is $\log(\lambda^2 m_2)$, the intersection of any two sets in $\mathcal{F}_c$ contains at most $\log(\lambda^2 m_2)$ elements. Since every set in $\mathcal{F}_c$ contains $\tau>2\cdot\Delta^{1/4}\log(\lambda^2 m_2)$ elements, $\mathcal{F}_c$ is $(2\cdot\Delta^{1/4})$-cover-free. Now, in a round where $|M_t(v)|\leq\Delta^{1/4}$, recall that vertex $v$ updates $a_t(v)$ to be $\tilde{a}_{t-1}(v)$. After this update, $v$'s $a$ value may collide with the $a$ values of the vertices in $M'_t(v)$, as well as the $a$ values of the vertices in
$$\overline{M}'_t(v)\triangleq\{u\mid u\in N(v),\phi_{t-1}(u)\in I_2,\hat{a}_{t-1}(u)\neq0,\tilde{a}_{t-1}(u)=\tilde{a}_{t-1}(v)\}.$$
We will show $|M'_t(v)\cup\overline{M}'_t(v)|\leq 2\cdot\Delta^{1/4}$, hence $b_t(v)$ can take the smallest value in the set below:
$$S_c^{(a_{t-1}(v)\cdot m_2 + b_{t-1}(v))}\setminus \left( \left\{b_{t-1}(u)\mid u\in M'_{t}(v)\right\} \bigcup \left(\cup_{u\in \overline{M}'_{t}(v)} S_c^{(a_{t-1}(u)\cdot m_2+b_{t-1}(u))}\right) \right).$$

Lastly, we note that vertices may complete the core stage at different times: once a vertex $v$ has $a_t(v)\in[\lambda]$ in some round $t$, its core stage is considered done and it may start the transition-out stage. As a result, starting from the core stage, vertices may proceed at different paces.

Complete pseudocode of the core stage is given in \Cref{alg:phase2-stage2}.

\begin{algorithm}[t!]
\caption{The core stage of the quadratic reduction phase at $v\in V$ in round $t\geq r^*+2$}\label{alg:phase2-stage2}
\begin{algorithmic}[1]
\State Send $\phi_{t-1}(v)$ to all neighbors.
\If{($\phi_{t-1}(v)\in I_2$ \textbf{and} $a_{t-1}(v)\geq\lambda$)}
	\State $M_t(v)\gets\left\{u\mid u\in N(v),\phi_{t-1}(u)\in I_2,\hat{a}_{t-1}(u)\neq \hat{a}_{t-1}(v),\tilde{a}_{t-1}(u)= \tilde{a}_{t-1}(v)\right\}$.
	\State $\overline{M}_t(v)\gets\left\{u\mid u\in N(v),\phi_{t-1}(u)\in I_2,\hat{a}_{t-1}(u)=\hat{a}_{t-1}(v),\tilde{a}_{t-1}(u)=\tilde{a}_{t-1}(v)\right\}$.
	\If{($|M_t(v)|\leq\Delta^{1/4}$)}
		\State $M'_{t}(v) \gets \left\{ u\mid u\in N(v),\phi_{t-1}(u)\in I_2,\hat{a}_{t-1}(u)=0,\tilde{a}_{t-1}(u)=\tilde{a}_{t-1}(v) \right\}$.
		\State $\overline{M}'_{t}(v) \gets \left\{ u\mid u\in N(v),\phi_{t-1}(u)\in I_2,\hat{a}_{t-1}(u)\neq 0,\tilde{a}_{t-1}(u)=\tilde{a}_{t-1}(v) \right\}$.
		\State $a_t(v) \gets \tilde{a}_{t-1}(v)$.\label{alg-line:phase2-stage2-a-update-rule-reduce}
		\State $b_t(v) \gets \min S_c^{(a_{t-1}(v)\cdot m_2 + b_{t-1}(v))}\setminus \left( \left\{b_{t-1}(u)\mid u\in M'_{t}(v)\right\} \bigcup \left(\cup_{u\in \overline{M}'_{t}(v)} S_c^{(a_{t-1}(u)\cdot m_2+b_{t-1}(u))}\right) \right)$.
		\State $c_t(v) \gets 1+\max\{c_{t-1}(u)\mid u\in M'_{t}(v)\}$.
	\Else
		\State $a_t(v) \gets \hat{a}_{t-1}(v)\cdot\lambda + ((\hat{a}_{t-1}(v)+\tilde{a}_{t-1}(v))\bmod\lambda)$.\label{alg-line:phase2-stage2-a-update-rule}
	\EndIf
	\State $\phi_{t}(v) \gets \ell_3 + \langle a_t(v), b_t(v), c_t(v), d_t(v) \rangle$.
\EndIf
\end{algorithmic}
\end{algorithm}

We continue to describe the transition-out stage, in which vertices produce a proper $(\Delta+O(\Delta^{3/4}\log\Delta))$-coloring using colors in interval $I_3$. The approach we took during the transition-out stage is inspired by the techniques developed by Barenboim~\cite{barenboim16sublinear}. Nonetheless, important adjustments are made on both the implementation and the analysis, as we are in the more restrictive locally-iterative setting, and have to take the ``asynchrony'' that vertices may start the transition-out stage in different rounds into consideration.

In a round $t$, for a vertex $v\in V$ with $\phi_{t-1}(v)\in I_2$ and $a_{t-1}(v)\in[\lambda]$, it runs the transition-out stage.
If $a_{t-1}(v)\leq a_{t-1}(u)<\lambda$ is satisfied for every $u\in N(v)$ with $\phi_{t-1}(u)\in I_2$, and if $d_{t-1}(v)=\mu$ (recall $d(v)$ always equal to $\mu$ during the core stage), then $v$ uses this round to update $d(v)$, making preparation for the transition. In particular, vertex $v$ considers a family of $\mu$ polynomials $P_{(t-1,v,0)}$, $P_{(t-1,v,1)}$, $\cdots$, $P_{(t-1,v,\mu-1)}$ over finite field $GF(\mu)$. For any $i\in[\mu]$, we define:
$$P_{(t-1,v,i)}(x) \triangleq (\lfloor b_{t-1}(v)/\tau \rfloor \cdot x^2 + (b_{t-1}(v)\bmod\tau)\cdot x + i)\bmod\mu.$$
Notice the core stage ensures $b_{t-1}(v)\in[\tau^2]$. Next, we define $L_{(t-1,i)}(v)$ and $L_{t-1}(v)$:
$$L_{(t-1,i)}(v)\triangleq\{P_{(t-1,v,i)}(x)+x\cdot\mu\mid x\in[\mu]\}\text{~~and~~}L_{t-1}(v)\triangleq\{\phi_{t-1}(u)\mid u\in N(v),\phi_{t-1}(u)\in I_3\}.$$
In other words, $L_{t-1}(v)$ contains the phase three colors that are already occupied by the neighbors of $v$. With $L_{(t-1,i)}(v)$ and $L_{t-1}(v)$, vertex $v$ sets $d_t(v)$ to be an integer $\hat{i}\in[\mu]$ that minimizes $|L_{(t-1,i)}(v)\cap L_{t-1}(v)|$. Since $|L_{t-1}(v)|\leq\Delta$, and since $L_{(t-1,i)}(v)\cap L_{(t-1,i')}(v)=\emptyset$ for any $i\neq i'$, by the pigeonhole principle, we have $|L_{(t-1,\hat{i})}(v)\cap L_{t-1}(v)|\leq\Delta/\mu$.

Once a vertex $v$ sets $d(v)$ to a value other than $\mu$, its preparation for the transition is done, and will attempt to maps its current color in $I_2$ to another color in $I_3$.

Specifically, in a round $t$, for a vertex $v\in V$ with $\phi_{t-1}(v)\in I_2$, it will update its color to $I_3$ if the following conditions are met: (a) $a_{t-1}(v)\leq a_{t-1}(u)<\lambda$ is satisfied for every $u\in N(v)$ with $\phi_{t-1}(u)\in I_2$; (b) $d_{t-1}(v)\neq\mu$; and (c) $d_{t-1}(u)\neq\mu$ is satisfied for every $u\in A_{t-1}(v)$, where $A_{t-1}(v)\triangleq\{u\mid u\in N(v),\phi_{t-1}(u)\in I_2, a_{t-1}(u)=a_{t-1}(v), c_{t-1}(u)\leq c_{t-1}(v)\}$ denotes the neighbors $v$ points to with colliding $a$ value. The update rule is, let $\hat{k}\in[\mu]$ be the smallest integer satisfying:
$$P_{(t-1,v,d_{t-1}(v))}(\hat{k}) + \mu\cdot{\hat{k}}~~\in~~L_{(t-1,d_{t-1}(v))}(v) \setminus \left( L_{t-1}(v) \bigcup \left(\cup_{u\in A_{t-1}(v)} L_{(t-1,d_{t-1}(u))}(u)\right) \right),$$
then set $\phi_t(v) \gets P_{(t-1,v,d_{t-1}(v))}(\hat{k}) + \mu\cdot{\hat{k}}$.

We will show $\phi_t(v)$ exists and $\phi_t(v)\in I_3$.
See \Cref{alg:phase2-stage3} for transition-out stage's pseudocode.

\begin{algorithm}[t!]
\caption{The transition-out stage of the quadratic reduction phase at $v\in V$ in round $t$}\label{alg:phase2-stage3}
\begin{algorithmic}[1]
\State Send $\phi_{t-1}(v)$ to all neighbors.
\If{($\phi_{t-1}(v)\in I_2$ \textbf{and} $a_{t-1}(v)<\lambda$)}\label{alg-line:phase2-stage3-if-cond-1}
	\If{(every $u\in N(v)$ with $\phi_{t-1}(u)\in I_2$ has  $a_{t-1}(v) \leq a_{t-1}(u)<\lambda$)}\label{alg-line:phase2-stage3-if-cond-2}
		\State $A_{t-1}(v)\gets\{u\mid u\in N(v),\phi_{t-1}(u)\in I_2, a_{t-1}(u)=a_{t-1}(v), c_{t-1}(u)\leq c_{t-1}(v)\}$.
		\For{(every $i\in[\mu]$)}
			\State $L_{(t-1,i)}(v)\gets\{ (\lfloor{b_{t-1}(v)/\tau}\rfloor\cdot{x^2}+(b_{t-1}(v)\bmod\tau)\cdot{x}+i)\bmod\mu + x\cdot\mu \mid x\in[\mu] \}$.
		\EndFor
		\State $L_{t-1}(v)\gets\{\phi_{t-1}(u)\mid u\in N(v),\phi_{t-1}(u)\in I_3\}$.
		\If{($d_{t-1}(v)=\mu$)}\label{alg-line:phase2-stage3-if-cond-3}
			\State Let $\hat{i}$ be an integer in $[\mu]$ that minimizes $|L_{(t-1,i)}(v)\cap L_{t-1}(v)|$.
			\State $d_t(v)\gets\hat{i}$.\label{alg-line:intermediate-to-reduction-d-rule}
			\State $\phi_t(v) \gets \ell_3 + \langle a_{t}(v),b_{t}(v),c_{t}(v),d_{t}(v)\rangle$.
		\ElsIf{(every $u\in A_{t-1}(v)$ has $d_{t-1}(u)\neq\mu$)}\label{alg-line:phase2-stage3-if-cond-4}
			\State Let $\hat{k}\in[\mu]$ be the smallest integer satisfying:\label{alg-line:phase2-stage3-k-rule}
			\Statex \hspace{12ex}
			\begin{small}
			$P_{(t-1,v,d_{t-1}(v))}(\hat{k}) + \mu\cdot{\hat{k}} \in L_{(t-1,d_{t-1}(v))}(v) \setminus \left( L_{t-1}(v) \bigcup \left(\cup_{u\in A_{t-1}(v)} L_{(t-1,d_{t-1}(u))}(u)\right) \right)$.
			\end{small}
			\State $\phi_t(v) \gets P_{(t-1,v,d_{t-1}(v))}(\hat{k}) + \mu\cdot{\hat{k}}$.
		\EndIf
	\EndIf
\EndIf
\end{algorithmic}
\end{algorithm}

\paragraph{Overview of analysis}

For the one-round transition-in stage, we have the following lemma, where the property $|\{u\mid u\in N(v),\phi_{r^*+1}(u)\in I_2,a_{r^*+1}(u)=a_{r^*+1}(v)\}|\leq\Delta^{1/4}$ means the $a$ values of all vertices correspond to a $\Delta^{1/4}$-defective coloring at the end of the transition-in stage.

\begin{lemma}\label{lemma:phase2-stage1-property}
By the end of round $r^*+1$, the coloring $\phi_{r^*+1}$ is proper, and $\phi_{r^*+1}(v)\in I_2$ for every $v\in V$. Moreover, for every $v\in V$, it holds that $|\{u\mid u\in N(v),\phi_{r^*+1}(u)\in I_2,a_{r^*+1}(u)=a_{r^*+1}(v)\}|\leq\Delta^{1/4}$.
\end{lemma}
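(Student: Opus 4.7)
The plan is to walk through the update rule of \Cref{alg:phase2-stage1} and verify each conjunct of the lemma, after first observing that by \Cref{lemma:phase1-property} every vertex $v$ already has $\phi_{r^*}(v)\in I_1^{(r^*)}\subseteq I_1$, so every vertex enters the only non-trivial branch of \Cref{alg:phase2-stage1} in round $r^*+1$ and performs all four coordinate updates. In particular, since $\phi_{r^*}$ is proper, the neighbors of $v$ have pairwise distinct colors in $\phi_{r^*}$, all different from $\phi_{r^*}(v)$, so they index at most $\Delta$ distinct sets in $\mathcal{F}_a\setminus\{S_a^{(\phi_{r^*}(v))}\}$ and, analogously, at most $\Delta^{1/4}$ distinct sets in $\mathcal{F}_b\setminus\{S_b^{(\phi_{r^*}(v))}\}$ later on. This ``distinct indices'' observation will be used silently throughout.

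First I would argue that the value $a_{r^*+1}(v)$ assigned on line~8 always exists and that the defective-coloring claim follows automatically from the same step. Because $\mathcal{F}_a$ is $\Delta$-union-$(\Delta^{1/4}+1)$-cover-free, \Cref{def:cover-free-set-system-extension} guarantees some $x\in S_a^{(\phi_{r^*}(v))}$ belonging to at most $\Delta^{1/4}$ of the sets $\{S_a^{(\phi_{r^*}(u))}\mid u\in N(v)\}$; equivalently $|N'(v,x)|\leq\Delta^{1/4}$, so line~8 is well-defined. Moreover, if any neighbor $u$ with $\phi_{r^*+1}(u)\in I_2$ satisfies $a_{r^*+1}(u)=a_{r^*+1}(v)$, then because $a_{r^*+1}(u)\in S_a^{(\phi_{r^*}(u))}$ we have $a_{r^*+1}(v)\in S_a^{(\phi_{r^*}(u))}$, i.e.\ $u\in N'(v,a_{r^*+1}(v))$, so the number of such neighbors is bounded by $\Delta^{1/4}$. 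This is exactly the third claim of the lemma. Next I would verify that $b_{r^*+1}(v)$ on line~10 exists: the set $N''(v)$ defined on line~9 coincides with $N'(v,a_{r^*+1}(v))$, hence $|N''(v)|\leq\Delta^{1/4}$, and the $\Delta^{1/4}$-cover-freeness of $\mathcal{F}_b$ together with \Cref{def:cover-free-set-system} yields an element of $S_b^{(\phi_{r^*}(v))}$ not covered by $\bigcup_{u\in N''(v)}S_b^{(\phi_{r^*}(u))}$. Since $a_{r^*+1}(v)\in[\lambda^2]$, $b_{r^*+1}(v)\in[m_2]\subseteq[m_3]$, $c_{r^*+1}(v)=0\in[2\lambda]$, and $d_{r^*+1}(v)=\mu\in[\mu+1]$, the quadruple encoding places $\phi_{r^*+1}(v)$ in $I_2$, giving the second claim.

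Finally I would check properness of $\phi_{r^*+1}$. For an edge $(u,v)$, if $a_{r^*+1}(u)\neq a_{r^*+1}(v)$ the colors differ in the first coordinate and we are done. Otherwise $a_{r^*+1}(u)=a_{r^*+1}(v)$, so by the argument above (applied symmetrically with the roles of $u$ and $v$ exchanged) we have $u\in N''(v)$ and $v\in N''(u)$; the choice of $b_{r^*+1}(v)$ then excludes every element of $S_b^{(\phi_{r^*}(u))}$, while $b_{r^*+1}(u)\in S_b^{(\phi_{r^*}(u))}$, so the second coordinates disagree. I expect the main delicate point to be keeping the two cover-free applications coherent: the same value $a_{r^*+1}(v)$ chosen on line~8 is what determines $N''(v)$, and it must be simultaneously the witness in the $\Delta$-union-$(\Delta^{1/4}+1)$-cover-free property and the index used to extract a valid $b_{r^*+1}(v)$ from the $\Delta^{1/4}$-cover-free family $\mathcal{F}_b$; everything else is a straightforward unpacking of \Cref{def:cover-free-set-system} and \Cref{def:cover-free-set-system-extension}.
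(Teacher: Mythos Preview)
Your proposal is correct and follows essentially the same approach as the paper, which simply asserts that the claims are ``immediate'' from the algorithm description; you have merely made explicit the three verifications (existence of $a_{r^*+1}(v)$ via \Cref{def:cover-free-set-system-extension}, the $\Delta^{1/4}$-defect bound via $N'(v,a_{r^*+1}(v))=N''(v)$, and properness via the $\Delta^{1/4}$-cover-freeness of $\mathcal{F}_b$) that the paper leaves to the reader.
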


Then, for the core stage, we have the following three lemmas.

\Cref{lemma:phase2-stage2-proper-coloring} concerns with correctness, it shows that vertices running the core stage always maintain a proper coloring with their $\langle a,b\rangle$ tuples. In fact, this lemma also covers the correctness for the majority of the transition-out stage, as for every vertex, in all but the last round of its transition-out stage, its $\langle a,b\rangle$ tuple remains unchanged.

\begin{lemma}\label{lemma:phase2-stage2-proper-coloring}
For every round $t\geq r^*+1$, let $V'_t$ be the set of vertices running the quadratic reduction phase in round $t$: $V'_t\triangleq\{v\mid \phi_t(v)\in I_2,v\in V\}$. Then, $\phi_t$ corresponds to a proper coloring for the subgraph $G'_t$ induced by the vertices in $V'_t$: for every $v\in V'_t$, it holds that $\phi_t(v)\notin\{\phi_t(u)\mid u\in N(v)\cap V'_t\}$. More precisely, if we regard the pair $\langle a_t(v), b_t(v)\rangle$ as the color of $v$, then this coloring is also proper in graph $G'_t$: for every $v\in V'_t$, it holds that $\langle a_t(v), b_t(v)\rangle \notin \{\langle a_t(u), b_t(u)\rangle \mid u\in N(v)\cap V'_t\}$.
\end{lemma}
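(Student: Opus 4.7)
The plan is to prove the stronger statement---that neighbors in $V'_t$ carry distinct $\langle a_t(\cdot), b_t(\cdot)\rangle$ pairs---by induction on $t$. The base case $t=r^*+1$ is immediate from \Cref{lemma:phase2-stage1-property}, which gives a proper $\phi_{r^*+1}$ with all vertices in $I_2$.

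For the inductive step $t \geq r^*+2$, the first observation is that transition-in no longer runs and that neither the ``only-update-$d$'' branch of transition-out nor any do-nothing branch modifies $\langle a,b\rangle$. So $\langle a_t(v), b_t(v)\rangle$ can differ from $\langle a_{t-1}(v), b_{t-1}(v)\rangle$ only when $v$ actually executes a core-stage update, which in particular requires $v \in V'_{t-1}$ and $a_{t-1}(v) \geq \lambda$. I then case-split on which of two neighbors $u,v \in V'_t$ ran the core stage in round $t$.

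If neither did, the IH suffices. If only $v$ did, then the failure of $u$'s core-stage trigger gives $a_{t-1}(u) < \lambda$, hence $a_t(u) < \lambda$; if $v$ took the $|M_t(v)| > \Delta^{1/4}$ branch then $a_t(v) \geq \lambda$ and the $a$-coordinates already differ, while in the opposite branch the only potential collision has $\tilde{a}_{t-1}(v) = \tilde{a}_{t-1}(u)$ with $\hat{a}_{t-1}(u)=0$, placing $u \in M'_t(v)$ so that the update rule for $b_t(v)$ explicitly excludes $b_t(u)=b_{t-1}(u)$. If both $u$ and $v$ ran the core stage but via different branches, the $a$-values again separate by $\lambda$. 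If both used the $|M|>\Delta^{1/4}$ rule, $b$ is preserved; a direct check of $a_t(\cdot) = \hat{a}_{t-1}\cdot\lambda + ((\hat{a}_{t-1}+\tilde{a}_{t-1})\bmod\lambda)$ shows that agreement of $\hat{a}_{t-1}$ forces either $\tilde{a}_{t-1}$ to differ (and the shift-by-$\hat{a}_{t-1}$ bijection on $\mathbb{Z}_\lambda$ propagates this to $a_t$) or $b_{t-1}$ to differ by IH. If both used the $|M|\leq\Delta^{1/4}$ rule and $\tilde{a}_{t-1}(u) = \tilde{a}_{t-1}(v)$, then since $\hat{a}_{t-1}(u), \hat{a}_{t-1}(v) \neq 0$ (both are in the core stage), $u$ lies in $\overline{M}'_t(v)$, and the update rule for $b_t(v)$ excludes the whole set $S_c^{(a_{t-1}(u)\cdot m_2 + b_{t-1}(u))}$, which contains $b_t(u)$ by $u$'s own update.

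The main obstacle I anticipate is the well-definedness requirement implicit in the last sub-case: one must also verify that the forbidden list has size within the cover-free tolerance $2\Delta^{1/4}$ so that a legal $b_t(v)$ always exists. This hinges on maintaining, as an auxiliary invariant carried alongside the main induction, that the $a$-values form a $\Delta^{1/4}$-defective coloring separately on each of the two sides $\{\hat{a}=0\}$ and $\{\hat{a}\neq 0\}$ throughout the core stage---initialized by \Cref{lemma:phase2-stage1-property} and preserved exactly by the core-stage branch selection. The remaining cases are routine once the correct IH and invariant are in place, but this defect invariant is what glues the locally-iterative Linial-style $b$-update to the arbdefective coloring process on the $a$-coordinate.
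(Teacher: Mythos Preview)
Your induction and case split track the paper's proof closely, and the per-case arguments are correct. The one piece that does not go through as written is the auxiliary invariant you propose for the well-definedness of $b_t(v)$.

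The $\{\hat a=0\}$ half of that invariant is false. Take a star whose center $v$ satisfies $a_{r^*+1}(v)=0$ and whose leaves $u_1,\dots,u_k$ have pairwise distinct values $a_{r^*+1}(u_i)=i\lambda$; this configuration is compatible with the $\Delta^{1/4}$-defective guarantee of \Cref{lemma:phase2-stage1-property}. Each leaf has $|M_{r^*+2}(u_i)|=1\le\Delta^{1/4}$, so all of them reduce in round $r^*+2$ to $a=0=a(v)$, giving $v$ a defect of $k$ (easily taken $\gg\Delta^{1/4}$) on the $\{\hat a=0\}$ side. This is consistent with \Cref{lemma:phase2-stage2-bounded-arboricity} because every leaf receives $c>c(v)$, but it kills any \emph{defect} bound on that side. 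Fortunately you do not need that half. The paper's \Cref{claim:phase2-stage2_collisions_bound} repartitions $M'_t(v)\cup\overline{M}'_t(v)=M_t(v)\cup\overline{M}_t(v)$ with $\overline{M}_t(v)=\{u\in N(v):\phi_{t-1}(u)\in I_2,\ a_{t-1}(u)=a_{t-1}(v)\}$; then $|M_t(v)|\le\Delta^{1/4}$ is the branch trigger itself, and $|\overline{M}_t(v)|\le\Delta^{1/4}$ follows from your $\{\hat a\neq 0\}$-side invariant alone (iterate \Cref{claim:phase2-stage2_last_round_equal_a} back to $r^*+1$ and apply \Cref{lemma:phase2-stage1-property}). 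With this correction the remainder of your argument goes through unchanged.
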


\Cref{lemma:phase2-stage2-time-complexity} shows the core stage costs at most $(r^*+2+\lambda)-(r^*+1)=O(\Delta^{3/4}\log{\Delta})$ rounds for any vertex, as in our algorithm, once a vertex finds its $a$ value in $[\lambda]$, its core stage is done.

\begin{lemma}\label{lemma:phase2-stage2-time-complexity}
For every vertex $v\in V$, let $t^*_v$ be the smallest round number such that $\phi_{t^*_v}(v)\in I_2$ and $a_{t^*_v}(v)\in[\lambda]$ are both satisfied. Then, for every vertex $v\in V$, it holds that $t^*_v\leq r^*+2+\lambda$.
\end{lemma}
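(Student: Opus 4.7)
The plan is to argue by contradiction that $v$ cannot remain in the ``otherwise'' branch (\Cref{alg-line:phase2-stage2-a-update-rule}) of the core stage for $\lambda$ consecutive rounds. If $a_{r^*+1}(v) < \lambda$ then $t^*_v = r^*+1$ already, so assume $\hat{a}_{r^*+1}(v) \neq 0$. Suppose for contradiction that $|M_t(v)| > \Delta^{1/4}$ in every round $t \in [r^*+2,\, r^*+1+\lambda]$, so $v$ takes the otherwise branch throughout. This branch preserves $\hat{a}_t(v) = \hat{a}_{r^*+1}(v)$ and advances $\tilde{a}_t(v)$ by $\hat{a}_{r^*+1}(v)$ modulo $\lambda$ each round, so by the primality of $\lambda$ and $\hat{a}_{r^*+1}(v) \neq 0$, the values $\tilde{a}_{t-1}(v)$ for $t = r^*+2, \ldots, r^*+1+\lambda$ realize every residue in $[\lambda]$ exactly once.

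Next I would double-count $S \triangleq \sum_{t=r^*+2}^{r^*+1+\lambda} |M_t(v)|$ by bounding, for each $u \in N(v)$, the number of rounds in which $u \in M_t(v)$. Let $\tau_u$ be the first round that $u$ enters its own ``yes'' branch, with the conventions $\tau_u = r^*+1$ if $\hat{a}_{r^*+1}(u) = 0$ and $\tau_u = \infty$ if $u$ never does. Split the window into a pre-completion part ($r^*+2 \leq t \leq \tau_u$) and a post-completion part ($t > \tau_u$). In the pre-completion part, $\hat{a}_{t-1}(u) = \hat{a}_{r^*+1}(u)$ and $\tilde{a}_{t-1}(u)$ advances at rate $\hat{a}_{r^*+1}(u)$: if $\hat{a}(u) = \hat{a}(v)$ then $u \notin M_t(v)$ by definition, and otherwise the collision condition $\tilde{a}_{t-1}(u) = \tilde{a}_{t-1}(v)$ reduces modulo $\lambda$ to
\[
(t - r^* - 2)\bigl(\hat{a}(u) - \hat{a}(v)\bigr) \equiv \tilde{a}_{r^*+1}(v) - \tilde{a}_{r^*+1}(u) \pmod{\lambda},
\]
which by primality of $\lambda$ and nonzero difference has a unique solution in $t$. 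In the post-completion part, $\tilde{a}_{t-1}(u)$ is a fixed constant while $v$'s $\tilde{a}_{t-1}(v)$ cycles once through $[\lambda]$, so again at most one collision round. Hence every neighbor contributes at most $2$ to $S$, giving $S \leq 2\Delta$.

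Combining the two estimates would yield a contradiction: the assumption forces $S > \lambda \cdot \Delta^{1/4}$, while the charging gives $S \leq 2\Delta$, and the parameter choice $\lambda > \sqrt{m_1} + 1 = 2\Delta^{3/4}\log(n_{r^*}) + 1$ ensures $\lambda \cdot \Delta^{1/4} > 2\Delta$. Therefore $v$ must take the ``yes'' branch in some round $t \in [r^*+2,\, r^*+1+\lambda]$, at which point $a_t(v) = \tilde{a}_{t-1}(v) \in [\lambda]$ while $\phi_t(v) \in I_2$, giving $t^*_v \leq r^*+1+\lambda \leq r^*+2+\lambda$.

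The main obstacle I expect is the per-neighbor charging around the transition at $\tau_u$: the two ``unique collision modulo $\lambda$'' arguments live on disjoint sub-intervals, but one must carefully verify that they do not overcount given that both $\hat{a}(u)$ and the evolution law for $\tilde{a}(u)$ change at $\tau_u$, and that the constant post-completion value of $\tilde{a}(u)$ (namely $\tilde{a}_{\tau_u-1}(u)$) is compatible with the pre-completion analysis. A secondary bookkeeping point is the mild ``asynchrony'' that some neighbors may still be in $I_1$ or may have already moved on to the transition-out stage so that $\phi_{t-1}(u) \notin I_2$; such neighbors simply do not lie in $M_t(v)$ and can be discarded from the charging without affecting the bound.
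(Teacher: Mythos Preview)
Your proposal is correct and follows essentially the same approach as the paper's proof: both argue that each neighbor $u$ can lie in $M_t(v)$ for at most two rounds over the length-$\lambda$ window (by splitting at the round $u$ reduces its $a$ value and solving a linear congruence modulo the prime $\lambda$ on each piece), then apply pigeonhole/averaging together with $\lambda > 2\Delta^{3/4}$ to force $|M_t(v)|\le\Delta^{1/4}$ in some round. Your double-counting presentation of $S$ is just a repackaging of the paper's case analysis (its Scenarios I--III correspond to your pre/post-completion split), and the boundary and asynchrony concerns you flag are exactly the ones the paper handles; your observation that the pre- and post-completion intervals are disjoint already resolves the overcounting worry.
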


\Cref{lemma:phase2-stage2-bounded-arboricity} shows the $a$ values of the vertices running the core stage maintain a $(2\cdot\Delta^{1/4})$-arbdefective coloring, and we use the $c$ values of vertices to determine the orientation of edges. Together with \Cref{lemma:phase2-stage1-property}, one can see that the $a$ values of vertices transform from a $\Delta^{1/4}$-defective $\lambda^2$-coloring (recall by definition $a\in[\lambda^2]$) to a $(2\cdot\Delta^{1/4})$-arbdefective $\lambda$-coloring (recall the core stage of a vertex ends when its $a\in[\lambda]$) during the core stage.

\begin{lemma}\label{lemma:phase2-stage2-bounded-arboricity}
For every round $t\geq r^*+1$, for every $v\in V$ with $\phi_t(v)\in I_2$, it holds that $|\{u\mid u\in N(v),\phi_t(u)\in I_2, a_t(u)=a_t(v), c_t(u)\leq c_t(v)\}|\leq 2\cdot\Delta^{1/4}$.
\end{lemma}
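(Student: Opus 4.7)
The plan is to prove \Cref{lemma:phase2-stage2-bounded-arboricity} by induction on $t\ge r^*+1$, strengthened with an auxiliary ``high-region defect'' invariant: for every $v$ with $\phi_t(v)\in I_2$ and $a_t(v)\ge\lambda$, the total same-$a$ degree $|\{u\mid \phi_t(u)\in I_2,\,a_t(u)=a_t(v)\}|$ is at most $\Delta^{1/4}$. The base case $t=r^*+1$ is immediate: the auxiliary invariant is exactly \Cref{lemma:phase2-stage1-property}, and since the transition-in stage sets $c_{r^*+1}(v)=0$ for every vertex, every same-$a$ neighbor becomes an out-neighbor under the $c$-ordering, so the stated bound reduces to the same $\Delta^{1/4}\le 2\Delta^{1/4}$.

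For the inductive step I would split on the action of $v$ in round $t$. If $a_t(v)\ge\lambda$, then $v$ ran the core stage and must have taken the shift branch on \Cref{alg-line:phase2-stage2-a-update-rule}, so $\hat{a}_t(v)=\hat{a}_{t-1}(v)\ne 0$ and $\tilde{a}_t(v)=(\hat{a}_{t-1}(v)+\tilde{a}_{t-1}(v))\bmod\lambda$. A direct case analysis shows that any neighbor $u$ with $\phi_t(u)\in I_2$ and $a_t(u)=a_t(v)$ must likewise have shifted from $a_{t-1}(u)=a_{t-1}(v)$: being in the high region at round $t$ forces $u$ to have been in the high region at round $t-1$, and matching both $\hat{a}_t$ and $\tilde{a}_t$ forces matching $\hat{a}_{t-1}$ and $\tilde{a}_{t-1}$ as well. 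This reduces both invariants at round $t$ to the inductive hypothesis at round $t-1$ for $v$.

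The bulk of the work is the case $a_t(v)<\lambda$. If $v$ did not modify $\langle a,b,c\rangle$ in round $t$ (either because $v$ ran the transition-out stage without upgrading to $I_3$, or because $a_{t-1}(v)<\lambda$ and no update step touched $a$ or $c$), then any out-neighbor $u$ of $v$ at round $t$ either (a) also kept $\langle a,c\rangle$ unchanged, so $u$ was already an out-neighbor at round $t-1$ (counted by the inductive hypothesis), or (b) reduced in round $t$. But case (b) forces $v\in M'_t(u)$, whereupon the update rule $c_t(u)\gets 1+\max_{w\in M'_t(u)}c_{t-1}(w)$ gives $c_t(u)\ge 1+c_{t-1}(v)=1+c_t(v)$, contradicting $c_t(u)\le c_t(v)$. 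Hence case (b) is empty and the bound follows from the inductive hypothesis.

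The remaining and critical sub-case is when $v$ itself reduced in round $t$, i.e.\ $|M_t(v)|\le\Delta^{1/4}$, $a_t(v)=\tilde{a}_{t-1}(v)$, and $c_t(v)=1+\max_{w\in M'_t(v)}c_{t-1}(w)$. Every out-neighbor $u$ at round $t$ must belong to $M'_t(v)\cup\overline{M}'_t(v)$: if $a_{t-1}(u)<\lambda$ then $u$ did not update $a$, and matching $a$ forces $\hat{a}_{t-1}(u)=0$ and $\tilde{a}_{t-1}(u)=\tilde{a}_{t-1}(v)$, i.e.\ $u\in M'_t(v)$; if $a_{t-1}(u)\ge\lambda$ then $u$ ran the core stage and must have reduced (shifting keeps $a_t(u)\ge\lambda$), giving $\tilde{a}_{t-1}(u)=\tilde{a}_{t-1}(v)$ and $\hat{a}_{t-1}(u)\ne 0$, i.e.\ $u\in\overline{M}'_t(v)$. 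Since $a_{t-1}(v)\ge\lambda$ implies $\hat{a}_{t-1}(v)\ne 0$, partitioning by whether $\hat{a}_{t-1}(u)=\hat{a}_{t-1}(v)$ yields the disjoint decomposition $M'_t(v)\cup\overline{M}'_t(v)=M_t(v)\sqcup\overline{M}_t(v)$. Combining $|M_t(v)|\le\Delta^{1/4}$ with the auxiliary invariant applied at round $t-1$ to $v$ (giving $|\overline{M}_t(v)|\le\Delta^{1/4}$) delivers the $2\Delta^{1/4}$ bound. The main obstacle is identifying and maintaining the right auxiliary high-region invariant in lockstep with the stated bound; once it is in place, the arithmetic $\Delta^{1/4}+\Delta^{1/4}=2\Delta^{1/4}$ in the reducing case is tight, leaving no slack in the argument.
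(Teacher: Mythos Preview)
Your proof is correct and follows essentially the same strategy as the paper's. The paper organizes the argument slightly differently---it splits on whether the current round precedes or follows $t^*_v$ (the round when $v$'s $a$-value first drops below $\lambda$), invoking \Cref{claim:phase2-stage2_last_round_equal_a} repeatedly to trace high-region same-$a$ neighbors back to round $r^*+1$, and then running a forward induction showing $A_{i+1}(v)\subseteq A_i(v)$ for $i\ge t^*_v$---whereas you fold everything into a single global induction on $t$ with your explicit ``high-region defect'' auxiliary invariant. But your auxiliary invariant is exactly the content of the paper's \Cref{claim:phase2-stage2_last_round_equal_a} combined with \Cref{lemma:phase2-stage1-property}, your ``$v$ reduced'' case is the paper's Scenario~II base case at $t^*_v$, and your ``$v$ did not modify'' case is the paper's inductive step $A_{i+1}(v)\subseteq A_i(v)$; the three substantive observations (high-region monotonicity, the $|M_t(v)|+|\overline{M}_t(v)|$ decomposition at the reduction step, and the $c$-value jump ruling out newly reduced vertices as out-neighbors) are identical.
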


Lastly, for the transition-out stage, we have \Cref{lemma:phase2-stage3-time-cost} for bounding its time cost, and \Cref{lemma:phase2-stage3-proper-color} for showing its correction. Notice that \Cref{lemma:phase2-stage2-time-complexity} and \Cref{lemma:phase2-stage3-time-cost} together show the time cost of the transition out stage is $(r^*+2+3\lambda)-(r^*+2+\lambda)=O(\Delta^{3/4}\log{\Delta})$ rounds for every vertex.

\begin{lemma}\label{lemma:phase2-stage3-time-cost}
For every vertex $v\in V$, let $t^{\#}_v$ be the smallest round number such that $\phi_{t^{\#}_v}(v)\in I_3$. Then, we have $t^{\#}_v \leq r^*+2+3\lambda$.
\end{lemma}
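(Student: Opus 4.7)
The plan is to argue the bound by a layered induction on the ``frozen'' first coordinate of the color quadruple, exploiting the $a$-based priority condition in \Cref{alg:phase2-stage3}: a vertex $v$ executes no transition-out update until every neighbour still in $I_2$ has $a$-value at least $a(v)$. By \Cref{lemma:phase2-stage2-time-complexity}, at the end of round $r^*+2+\lambda$ every vertex $v$ satisfies either $\phi(v)\in I_3$ or $\phi(v)\in I_2$ with $a(v)\in[\lambda]$. A key structural observation is that once $v$ first attains $a(v)\in[\lambda]$ while remaining in $I_2$, its $a,b,c$ coordinates are thereafter frozen until it jumps into $I_3$: the core-stage branch of \Cref{alg:phase2-stage2} requires $a\ge\lambda$ and is never retriggered, while \Cref{alg:phase2-stage3} modifies only $d$ or rewrites the color into $I_3$. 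Moreover, the transition-in stage sets $d=\mu$ and the core stage leaves $d$ untouched, so up to the first transition-out action on $v$ we have $d(v)=\mu$. Write $a^*_v\in[\lambda]$ for this settled $a$-value.

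Next, I would prove by induction on $k\in\{0,1,\dots,\lambda-1\}$ that every vertex $v$ with $a^*_v=k$ satisfies $t^{\#}_v\le r^*+2+\lambda+2(k+1)$. For the base case $k=0$, any such $v$ still in $I_2$ at round $r^*+2+\lambda$ sees the priority condition trivially satisfied (every $I_2$-neighbour has $a\in[\lambda]$, hence $\ge 0$). Because $d_{t-1}(v)=\mu$, the first branch of \Cref{alg:phase2-stage3} fires in round $r^*+2+\lambda+1$ and assigns $d_t(v)$ some $\hat\imath\in[\mu]$; by symmetry, every neighbour $u$ of $v$ with $a^*_u=0$ performs the same update in the same round. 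Consequently, in round $r^*+2+\lambda+2$ every $u\in A_{t-1}(v)$ is either in $I_3$ (hence excluded from $A$) or has $a(u)=0$ and $d(u)\ne\mu$, so the second branch triggers and $v$ moves to $I_3$.

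The inductive step is analogous: assuming every vertex with settled $a$-value strictly less than $k$ has left $I_2$ by round $r^*+2+\lambda+2k$, any $v$ with $a^*_v=k$ now sees only $I_2$-neighbours of $a$-value $\ge k$, so the priority condition is met in round $r^*+2+\lambda+2k+1$. Since $d_{t-1}(v)=\mu$ still holds (no transition-out step has yet acted on $v$), the $d$-update branch executes, and by symmetry every $u$ with $a^*_u=k$ updates simultaneously. In round $r^*+2+\lambda+2k+2$ the condition on $A_{t-1}(v)$ holds exactly as in the base case and $v$ moves to $I_3$. Specialising to $k=\lambda-1$ yields $t^{\#}_v\le r^*+2+\lambda+2\lambda=r^*+2+3\lambda$ for every $v$.

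The main obstacle I anticipate is verifying that each firing of the second branch actually returns a color, i.e., that the set from which $\hat k$ is drawn is non-empty. This rests on the pigeonhole bound $|L_{(t-1,\hat\imath)}(v)\cap L_{t-1}(v)|\le\Delta/\mu$ secured when $d_t(v)=\hat\imath$ was assigned, combined with the $(2\Delta^{1/4})$-arbdefective structure of \Cref{lemma:phase2-stage2-bounded-arboricity} bounding $|A_{t-1}(v)|$, so that the forbidden set has size at most $\Delta/\mu + 2\Delta^{1/4}\cdot\mu < \mu = |L_{(t-1,d(v))}(v)|$ for the chosen parameters. However, this non-emptiness check is really part of the transition-out stage's correctness argument (\Cref{lemma:phase2-stage3-proper-color}) and can be invoked here rather than reproved inside the time-complexity lemma.
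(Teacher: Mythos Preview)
Your proposal is correct and follows essentially the same approach as the paper: an induction on the settled $a$-value $a^*_v\in[\lambda]$, showing that all vertices with $a^*_v=k$ set $d\ne\mu$ in one round after the vertices with smaller $a$-value have left $I_2$, and then jump to $I_3$ in the next round, yielding $t^{\#}_v\le r^*+2+\lambda+2(k+1)\le r^*+2+3\lambda$. One minor slip: your forbidden-set estimate $\Delta/\mu + 2\Delta^{1/4}\cdot\mu$ is too crude (and in fact exceeds $\mu$); the paper instead uses that distinct degree-$2$ polynomials over $GF(\mu)$ intersect in at most two points, giving the correct bound $\Delta/\mu + 4\Delta^{1/4}<\mu$ via \Cref{claim:phase2-stage3-bounded-k}---but since you explicitly defer this check, it does not affect the validity of your time-complexity argument.
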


\begin{lemma}\label{lemma:phase2-stage3-proper-color}
For every vertex $v\in V$, let $t^{\#}_v$ be the smallest round number such that $\phi_{t^{\#}_v}(v)\in I_3$. Then, we have $\phi_{t^{\#}_v}(v)\notin\{\phi_{t^{\#}_v}(u)\mid u\in N(v), \phi_{t^{\#}_v}(u)\in I_3\}$.
\end{lemma}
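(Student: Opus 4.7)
Fix $v\in V$ and let $t=t^{\#}_v$. By minimality of $t^{\#}_v$ we have $\phi_{t-1}(v)\notin I_3$, and the only code path in the whole algorithm that maps a color outside $I_3$ into $I_3$ is the final assignment on Line~\ref{alg-line:phase2-stage3-k-rule} of \Cref{alg:phase2-stage3} (the Linial phase stays in $I_1$, the transition-in and the core stage write back into $I_2$, and the $d_{t-1}(v)=\mu$ sub-branch of \Cref{alg:phase2-stage3} also outputs an $I_2$-color); hence $v$ must execute that line in round~$t$. Consequently $\phi_{t-1}(v)\in I_2$ with $a_{t-1}(v)<\lambda$ and $d_{t-1}(v)\neq\mu$; every neighbor $w$ of $v$ with $\phi_{t-1}(w)\in I_2$ satisfies $a_{t-1}(v)\le a_{t-1}(w)<\lambda$; every $w\in A_{t-1}(v)$ has $d_{t-1}(w)\neq\mu$; and
\[
\phi_t(v)\in L_{(t-1,d_{t-1}(v))}(v)\setminus\Bigl(L_{t-1}(v)\cup\bigcup_{w\in A_{t-1}(v)} L_{(t-1,d_{t-1}(w))}(w)\Bigr).
\]

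Now fix any neighbor $u$ with $\phi_t(u)\in I_3$. Since no code path jumps from $I_1$ directly into $I_3$ in a single round, $\phi_{t-1}(u)\in I_2\cup I_3$. If $\phi_{t-1}(u)\in I_3$, then $\phi_{t-1}(u)\in L_{t-1}(v)$, so $\phi_t(v)\neq\phi_{t-1}(u)$; combined with the standard-reduction invariant (proved in \Cref{sec-app:standard-reduction-phase}) that a vertex already in $I_3$ updates to a color still distinct from the colors chosen in the same round by its newly transitioning neighbors, this yields $\phi_t(v)\neq\phi_t(u)$. The substantive subcase is $\phi_{t-1}(u)\in I_2$ with $u$ simultaneously reaching Line~\ref{alg-line:phase2-stage3-k-rule} in round~$t$. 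Applying $v$'s guard at Line~\ref{alg-line:phase2-stage3-if-cond-2} to neighbor $u$ forces $a_{t-1}(u)\ge a_{t-1}(v)$; applying $u$'s guard to $v$ yields the reverse inequality, so $a_{t-1}(u)=a_{t-1}(v)$. Either $c_{t-1}(u)\le c_{t-1}(v)$, placing $u\in A_{t-1}(v)$, or $c_{t-1}(u)>c_{t-1}(v)$, placing $v\in A_{t-1}(u)$, with ties handled by the symmetric convention of the footnote in \Cref{subsec:phase2}. Assume WLOG $u\in A_{t-1}(v)$. By the exclusion defining $\hat k$, $\phi_t(v)\notin L_{(t-1,d_{t-1}(u))}(u)$, whereas by $u$'s analogous update $\phi_t(u)\in L_{(t-1,d_{t-1}(u))}(u)$, so $\phi_t(v)\neq\phi_t(u)$.

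The principal technical obstacle is verifying that the update rule is in fact well-defined in round~$t$, i.e.\ that the exclusion set above has size strictly below $|L_{(t-1,d_{t-1}(v))}(v)|=\mu$ so that $\hat k$ exists. I would combine three estimates. First, the $\mu$ sets $\{L_{(t-1,i)}(v)\}_{i\in[\mu]}$ are pairwise disjoint since the polynomials $P_{(t-1,v,i)}$ differ only in the additive constant modulo $\mu$; coupled with $|L_{t-1}(v)|\le\Delta$ and the pigeonhole choice of $d$ made at Line~\ref{alg-line:intermediate-to-reduction-d-rule} (and the fact that $b_{t-1}(v)$, hence each $L_{(t-1,i)}(v)$, stays fixed once $v$ enters transition-out), this controls $|L_{(t-1,d_{t-1}(v))}(v)\cap L_{t-1}(v)|$ by roughly $\Delta/\mu$. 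Second, for each $w\in A_{t-1}(v)$ \Cref{lemma:phase2-stage2-proper-coloring} forces $b_{t-1}(w)\neq b_{t-1}(v)$, so the quadratics underlying $L_{(t-1,\cdot)}(v)$ and $L_{(t-1,\cdot)}(w)$ over $GF(\mu)$ are distinct and their graphs meet in at most $2$ points, giving $|L_{(t-1,d_{t-1}(v))}(v)\cap L_{(t-1,d_{t-1}(w))}(w)|\le 2$. Third, \Cref{lemma:phase2-stage2-bounded-arboricity} bounds $|A_{t-1}(v)|\le 2\Delta^{1/4}$. Summing yields an exclusion of at most $\Delta/\mu+4\Delta^{1/4}$, which is strictly less than $\mu$ by the choice $\mu\in(\sqrt\Delta+\sqrt{m_3},2(\sqrt\Delta+\sqrt{m_3})]$. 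The most delicate part of this counting is handling the possible growth of $L_{t-1}(v)$ between the round when $v$'s $d$-value was fixed and the current round~$t$, which is where the structure of transition-out (fixing $b$ once core stage ends) must be leveraged.
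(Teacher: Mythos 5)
Your plan follows essentially the same route as the paper's proof: the same case split on $\phi_{t-1}(u)$ (neighbor already in $I_3$ versus neighbor in $I_2$ transitioning simultaneously), the same reduction via the two guards at Line~\ref{alg-line:phase2-stage3-if-cond-2} to $a_{t-1}(u)=a_{t-1}(v)$ followed by orienting the edge through $A_{t-1}(\cdot)$ and using the list exclusion at Line~\ref{alg-line:phase2-stage3-k-rule}, and the same three-term count $\Delta/\mu+4\cdot\Delta^{1/4}<\mu$ for the existence of $\hat{k}$, which the paper isolates as \Cref{claim:phase2-stage3-bounded-k}. The one step you flag but do not carry out---controlling $|L_{(t-1,d_{t-1}(v))}(v)\cap L_{t-1}(v)|$ after $L_{t-1}(v)$ has grown since $d(v)$ was fixed---is the bulk of that claim's proof: a neighbor $u$ with $a(u)=a(v)$ and $v\in A(u)$ that transitions after $t^+_v$ must, by its own exclusion at Line~\ref{alg-line:phase2-stage3-k-rule}, pick a color outside $L_{(t^+_v-1,d_{t^+_v}(v))}(v)$, while each $u\in A_{t^+_v-1}(v)$ that transitions adds at most one element to the intersection but removes itself from $A(v)$, decreasing the $2\cdot|A_{t-1}(v)|$ term by two, so the overall bound only improves. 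One small simplification for the $\phi_{t-1}(u)\in I_3$ case: you do not need the invoked ``invariant,'' since $u$ cannot update at all in round $t$---the standard reduction only fires when every neighbor is already in $I_3$, and $\phi_{t-1}(v)\in I_2$---so $\phi_t(u)=\phi_{t-1}(u)\in L_{t-1}(v)$ and the exclusion suffices directly.
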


With the above lemmas, we are able to prove \Cref{lemma:phase2-property-simple}. See \Cref{sec-app:quadratic-reduction-phase} for the omitted proofs.

\subsection{The standard reduction phase}\label{subsec:phase3}

In the standard reduction phase, each vertex $v$ maps color $\phi_{t^{\#}_v}\in I_3$ to another color in $[\Delta+1]\subset I_3$, completing $(\Delta+1)$-coloring. Here, $t^{\#}_v$ denotes the smallest round number such that $\phi_{t^{\#}_v}(v)\in I_3$.
Hence, $t^{\#}_v+1$ is the first round in which $v$ runs the standard reduction phase.

For each vertex $v$, for each round $t\geq t^{\#}_v+1$, if every neighbor $u\in N(v)$ has also entered the standard reduction phase, and if $v$ has the maximum color value in its one-hop neighborhood, then $v$ will update its color to be the minimum value in $[\Delta+1]$ that still has not be used by any of its neighbors. Clearly, such color must exist. In all other cases, $v$ keeps its color unchanged in round $t$.
Effectively, this procedure reduces the maximum color value used by any vertex by at least one in each round. Hence, within $\ell_3-(\Delta+1)$ rounds into the third phase, a proper $(\Delta+1)$-coloring is obtained.
Pseudocode of this phase is given in \Cref{alg:phase3} in \Cref{sec-app:pseudocode}.

The following two lemmas show: (1) the time cost of the standard reduction phase, which also bounds the total runtime of our algorithm; and (2) the correctness of this phase. See \Cref{sec-app:standard-reduction-phase} for their proofs. We also note that they together immediately imply \Cref{lemma:phase3-property-simple}.

\begin{lemma}\label{lemma:phase3-time-cost}
Every vertex $v$ has its color in $[\Delta+1]$ within $r^*+1+3\lambda+(2\sqrt{m_3}+1)\mu$ rounds.
\end{lemma}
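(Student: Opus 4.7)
The plan is to apply Lemma~\ref{lemma:phase2-stage3-time-cost} to bootstrap from a round by which all vertices have reached interval $I_3$, and then show that the ``reduce-one-per-round'' mechanic indeed drops the maximum used color by at least one every round until it reaches $\Delta$. Let $T_0 \triangleq r^{*}+2+3\lambda$. By Lemma~\ref{lemma:phase2-stage3-time-cost}, $t^{\#}_v \le T_0$ for every $v \in V$, so after round $T_0$ we have $\phi_{T_0}(v) \in I_3$ for all $v \in V$. In particular, starting from round $T_0+1$, for every vertex $v$, the condition ``every neighbor $u \in N(v)$ has entered the standard reduction phase'' is automatically satisfied.

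Next, define $M_t \triangleq \max_{v\in V}\phi_t(v)$ for $t \ge T_0$. Note $M_{T_0} \le \ell_3 - 1 = \Delta + (2\sqrt{m_3}+1)\mu - 1$. The key claim I would prove is:
\begin{displaymath}
\text{for every } t > T_0 \text{ with } M_{t-1} > \Delta, \quad M_t \le M_{t-1} - 1.
\end{displaymath}
To verify this, pick any vertex $v$ with $\phi_{t-1}(v)=M_{t-1}$. Because $\phi_{t-1}$ is a proper coloring (guaranteed by Lemma~\ref{lemma:phase2-property-simple} together with the description of the phase, since no update can break properness), every neighbor of $v$ has a color strictly smaller than $M_{t-1}$, so $v$ is the unique maximum in its closed neighborhood. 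Combined with the fact that all neighbors are in $I_3$, $v$ meets the update condition; since $M_{t-1} > \Delta$, the chosen replacement lies in $[\Delta+1]$ and is strictly less than $M_{t-1}$. Any other vertex either keeps its color or likewise drops into $[\Delta+1]$, so the maximum strictly decreases.

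Iterating this claim, $M_t \le \Delta$ is reached by round $T_0 + (M_{T_0}-\Delta) \le T_0 + (2\sqrt{m_3}+1)\mu - 1 = r^{*}+1+3\lambda+(2\sqrt{m_3}+1)\mu$, which matches the stated bound. The only potentially delicate point is confirming the update rule cannot accidentally increase any color and that two adjacent ``local maxima'' cannot update simultaneously; both are immediate because properness of $\phi_{t-1}$ forces adjacent vertices to have distinct colors, so the vertex holding $M_{t-1}$ is strictly larger than all its neighbors, and any replacement color is drawn from $[\Delta+1] \subset I_3$ which is strictly below $M_{t-1}$ whenever $M_{t-1} > \Delta$. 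I do not anticipate a significant obstacle; the bookkeeping is entirely standard once Lemma~\ref{lemma:phase2-stage3-time-cost} supplies the clean starting round $T_0$.
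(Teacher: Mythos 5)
Your proposal is correct and follows essentially the same route as the paper's proof: establish that all vertices are in $I_3$ by round $r^*+2+3\lambda$ (the paper cites \Cref{lemma:phase2-property-simple}, which is itself derived from \Cref{lemma:phase2-stage3-time-cost}), then argue that properness forces the global maximum color to strictly decrease each round until it enters $[\Delta+1]$, yielding the bound $r^*+2+3\lambda+(\ell_3-(\Delta+1))=r^*+1+3\lambda+(2\sqrt{m_3}+1)\mu$. The arithmetic and the monotone-decrease argument match the paper's exactly.
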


\begin{lemma}\label{lemma:phase3-proper-color}
In every round $t\geq r^*+3+3\lambda$, the coloring $\phi_{t}$ is proper.
\end{lemma}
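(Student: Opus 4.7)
The plan is to reduce to a clean invariant: from round $r^*+2+3\lambda$ onward every vertex has color in $I_3$ and is running the standard reduction phase, so Lemma~\ref{lemma:phase3-proper-color} follows by a one-round induction once we establish properness at the handover.

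First I would set the stage. By \Cref{lemma:phase2-stage3-time-cost} we have $t^{\#}_v\le r^*+2+3\lambda$ for every vertex $v$, so at the end of round $r^*+2+3\lambda$ every vertex satisfies $\phi_{r^*+2+3\lambda}(v)\in I_3$; moreover, by \Cref{lemma:phase2-property-simple} the coloring $\phi_{r^*+2+3\lambda}$ is already proper. Observe that a vertex $v$ runs the standard reduction rule exactly when $\phi_{t-1}(v)\in I_3$, so starting from round $r^*+3+3\lambda$ every vertex in $V$ is executing the standard reduction rule. Also $[\Delta+1]\subseteq I_3$ by construction, so any updated color remains in $I_3$, and by induction the hypothesis ``all vertices are in $I_3$'' is preserved round by round.

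Next, the inductive step. Assume $\phi_{t-1}$ is a proper coloring with every color in $I_3$, where $t\ge r^*+3+3\lambda$; I want to show $\phi_t$ is proper. Fix an arbitrary edge $(u,v)\in E$ and argue $\phi_t(u)\ne\phi_t(v)$ by case analysis on which endpoint updates. The crucial sub-claim is that at most one of $u,v$ can update in the same round: the update condition requires a vertex to have the strict (or at least weak) maximum color in its closed one-hop neighborhood, so if both $u$ and $v$ updated, then $\phi_{t-1}(u)\ge\phi_{t-1}(v)$ and $\phi_{t-1}(v)\ge\phi_{t-1}(u)$, forcing $\phi_{t-1}(u)=\phi_{t-1}(v)$, which contradicts the inductive hypothesis that $\phi_{t-1}$ is proper. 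If neither endpoint updates, then $\phi_t(u)=\phi_{t-1}(u)\ne\phi_{t-1}(v)=\phi_t(v)$. If exactly one, say $v$, updates, the rule picks $\phi_t(v)\in[\Delta+1]$ avoiding $\{\phi_{t-1}(w)\mid w\in N(v)\}$, which exists because $|N(v)|\le\Delta$ and $|[\Delta+1]|=\Delta+1$; in particular $\phi_t(v)\ne\phi_{t-1}(u)=\phi_t(u)$.

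Combining the base case at $t=r^*+2+3\lambda$ (which the previous lemma delivers) with this one-round preservation argument gives properness for all $t\ge r^*+3+3\lambda$. I expect no genuine obstacle here; the only moving part is the tie-breaking convention in the ``local maximum'' test, and the properness of $\phi_{t-1}$ rules out ties on any edge, so the update rule is automatically conflict-free. Writing this argument carefully enough to reuse it in the self-stabilizing variant (where the starting configuration is adversarial rather than a proper coloring) would require a separate analysis, but for \Cref{lemma:phase3-proper-color} the above induction suffices.
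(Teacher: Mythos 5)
Your proof is correct and follows essentially the same route as the paper's: establish that all vertices are in $I_3$ with a proper coloring by the end of round $r^*+2+3\lambda$ (via \Cref{lemma:phase2-property-simple}), then induct round by round using the observation that the strict local-maximum condition lets at most one endpoint of any edge update, and the updating endpoint explicitly avoids its neighbors' colors. The paper's own proof is just a terser version of the same induction, so no further comment is needed.
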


\subsection{Proof of the main theorem of the locally-iterative coloring algorithm}\label{subsec:alg-summary}

With the above lemmas in hand, we are ready to prove the main theorem for the locally-iterative coloring algorithm (i.e., \Cref{thm:alg-main}).

By \Cref{lemma:phase3-time-cost}, every vertex has its color in $[\Delta+1]$ at the end of round $r^*+1+3\lambda+2(\sqrt{m_3}+1)\mu=O(\Delta^{3/4}\log{\Delta})+\log^*{n}$. By \Cref{lemma:phase2-property-simple}, every vertex has its color in $I_3$ by the end of round $r^*+2+3\lambda$, and runs the standard reduction phase ever since. Hence, for every round $t\geq r^*+2+3\lambda+2(\sqrt{m_3}+1)\mu$, every vertex's color remains in $[\Delta+1]$ at the end of that round. On the other hand, \Cref{lemma:phase1-property-simple}, \Cref{lemma:phase2-property-simple}, and \Cref{lemma:phase3-property-simple} together suggest that the algorithm always maintains a proper coloring.
Lastly, recall the definition of interval length $\ell_1$, $\ell_2$, and $\ell_3$, any used color can be encoded by $O(\log{n})$ bits.
Since vertices only broadcast colors to neighbors, the bound on message size holds.

\section{The Self-stabilizing Coloring Algorithm}\label{sec:alg-stab-framework}

Though there are generic techniques for converting general coloring algorithms into self-stabilizing ones (e.g., \cite{lenzen09}), such approach often results in large message size, hence not suitable for our setting.
In this paper, we develop a new self-stabilizing coloring algorithm based on our locally-iterative coloring algorithm. It uses $O(\log{n})$-bits messages and stabilizes in $O(\Delta^{3/4}\log{\Delta})+\log^*{n}$ rounds. For this algorithm to work properly, in the ROM area of a vertex $v$, we store its identity $id(v)$, graph parameters $n$ and $\Delta$, and the program code. In the RAM area of $v$, we store the colors of its local neighborhood, a boolean vector $T_v$ of size $\Delta$, and other variables that are used during execution.

The boolean vector $T_v$ is used to determine the orientation of the edges incident to $v$, replacing the role of $c(v)$.
More specifically, in the self-stabilizing algorithm, for each edge $(v,u)$, vertex $v$ maintains a bit in the vector $T_v$ denoted as $T_v[u]$, and we treat $v$ points to $u$ if and only if $T_v[u]=1$. The reason that we replace $c(v)$ with bit vector $T_v$ is that in the self-stabilizing setting, the adversary can employ a certain strategy to grow the $c$ values indefinitely.

A side effect of replacing $c(v)$ with a vector $T_v$ is that vertex $v$ must maintain a variable for \emph{each} incident edge to determine its orientation. Moreover, for two neighbors $v$ and $u$ to correctly determine the orientation of edge $(v,u)$, bit entries $T_v[u]$ and $T_u[v]$ must be exchanged. Therefore, for every vertex $v$, it has to send different information to different neighbors (particularly, $T_v[u]$ for each neighbor $u$), making our self-stabilizing algorithm no longer locally-iterative per \Cref{def:locally-iter-alg}. Nonetheless, as mentioned in the introduction section, if we interpret locally-iterative from an ``edge orientation'' view point and allow vertices to maintain a state for each incident edge,
then our self-stabilizing algorithm becomes locally-iterative. In this section, for consistency and ease of presentation, we still introduce the self-stabilizing algorithm from the ``vertex centric'' view point. Moreover, we keep the $c$ entry in vertices' color quadruples, but they are not used throughout.

For each vertex $v$, the self-stabilizing algorithm still contains three phases: the Linial phase, the quadratic reduction phase, and the standard reduction phase. Initially, every vertex $v$ sets its color to $\phi_0(v)=\ell_3+\ell_2+\sum_1^{r^*} n_i+ id(v)$. At the beginning of each round $t$, for every neighbor $u\in N(v)$, vertex $v$ sends a message to $u$ including its current color $\phi_{t-1}(v)$ and a boolean variable $T_v[u]$. After receiving messages from neighbors, vertex $v$ will perform an \emph{error-checking} procedure to determine if it is in a proper state. If the error-checking passes then we say $v$ is in a \emph{proper} state, and $v$ updates its color and vector $T_v$ according to its local information and the messages received from neighbors. Otherwise, if the error-checking fails, $v$ is in an \emph{improper} state. In such case, $v$ resets its color.

Before presenting the self-stabilizing algorithm in more detail, we state the correctness guarantee enforced by its error-checking mechanism.

\begin{lemma}[\textbf{Correctness of the Self-stabilizing Algorithm}]\label{lemma:self-stabilizing-correctness}
If $T_0$ is the last round in which the adversary makes any changes to the RAM areas of vertices, then for every round $t\geq T_0+2$, for every vertex $v$, the error-checking procedure will not reset vertex $v$'s color.
\end{lemma}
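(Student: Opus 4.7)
The plan is to exploit the reconfigurable structure of the underlying locally-iterative algorithm, adapted to the edge-oriented setting in which $c(v)$ is replaced by the boolean vector $T_v$. I would begin by spelling out the error-checking predicate as a purely local condition: for each phase, it verifies that (i) $\phi_{t-1}(v)$ lies in the interval matching the phase encoded by $v$'s state; (ii) the components of $v$'s state could have been produced by the phase's update rule acting on the messages just received (e.g., the cover-free-set selections in the Linial and transition-in stages, the $|M_t(v)|$-threshold choice in the core stage, and the polynomial evaluations in the transition-out stage); and (iii) the pairs $T_v[u]$ and $T_u[v]$ broadcast across each incident edge are mutually consistent with the orientation that the respective update rules would induce.

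Round $T_0 + 1$ is then analyzed directly. The adversary ceases interference at the end of $T_0$, so the messages exchanged in round $T_0 + 1$ faithfully carry the (possibly corrupted) states of neighbors, and after receiving them every vertex runs the predicate. Any vertex that fails resets to $\phi_0(v) \in I_1$; any vertex that passes updates via the normal rule on a snapshot that is, by construction, locally consistent. I claim that after these operations the RAM of every vertex is in a state from which a single round of normal execution will pass the predicate in round $T_0 + 2$. For a reset vertex this is immediate, since $\phi_0(v)$ is the hard-wired Linial initial color and its validity under the predicate does not depend on any neighbor's state; for a non-reset vertex this follows because its current state is precisely the output of the update rule applied to a locally-consistent snapshot, and the predicate is designed to accept exactly such outputs.

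The remainder of the argument is an induction on $t \geq T_0 + 2$ following the pattern of \Cref{lemma:phase1-property-simple}--\Cref{lemma:phase3-property-simple}: assuming at the start of round $t$ every vertex's state already passes the predicate, every vertex performs a normal update in round $t$, and the phase-by-phase invariants (properness of $\phi_t$, the $\Delta^{1/4}$-defective and $(2 \cdot \Delta^{1/4})$-arbdefective guarantees, and the cover-free-set constraints) transport through exactly as in \Cref{sec:alg-framework}, so the resulting state still passes the predicate. Here I would use that colors from different phases live in disjoint intervals $I_1, I_2, I_3$, which decouples the reasoning across phase boundaries and lets a restarted vertex progress through the phases unaffected by---and without affecting---neighbors that are already in a later phase.

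The hard part is designing and verifying the error-checking predicate for the quadratic-reduction phase, where a vertex's quadruple $\langle a, b, c, d \rangle$ is tightly entangled with its neighbors' quadruples through cover-free-set membership, the $\Delta^{1/4}$-threshold on $|M_t(v)|$, and the polynomial identities $P_{(t-1, v, i)}$. The predicate must recompute enough of this machinery locally, using only ROM-stored information and the messages just received, to certify that the present state is a valid successor of some legitimate snapshot. The orientation bookkeeping via $T_v$ (replacing $c(v)$, which is unusable in the self-stabilizing setting since the adversary could inflate it arbitrarily) further complicates the verification, because the predicate must establish mutual agreement on edge orientations without relying on the monotonic growth of $c$ used in the proof of \Cref{lemma:phase2-stage2-bounded-arboricity}.
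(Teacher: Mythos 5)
Your skeleton matches the paper's: the proof is indeed a per-round argument showing that for every $t\geq T_0+1$, a vertex that resets lands in a state that passes the check, and a vertex that passes the check and applies the normal update rule again passes the check at $t+1$ (the paper organizes this as \Cref{lemma:self-stabilizing-correctness-I1}--\Cref{lemma:self-stabilizing-correctness-color-in-I123}, split by which interval $\phi_t(v)$ lies in). However, you have mischaracterized what the error-checking procedure is, and deferred exactly the part that constitutes the proof. The paper's checks do \emph{not} certify that $v$'s state ``could have been produced by the phase's update rule acting on the messages just received''; they are a small list of invariants (no color collision, $T_v[u]+T_u[v]\neq 0$ whenever $a(v)=a(u)$, $|\{u: T_v[u]=1\}|$ bounded by $\Delta^{1/4}$ or $2\Delta^{1/4}$, $b$-values in range, the $I_1$ color matching $id(v)$ when it should). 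The adversary may well leave a state that was never reachable by legitimate execution yet passes these checks, and the algorithm must still work from it. Your stronger ``valid successor'' predicate is a different algorithm, and it is not clear it is even locally computable, since $v$ would need its neighbors' \emph{previous}-round states. The actual content of the lemma is the case analysis verifying that each listed invariant is re-established by every branch of every update rule; you explicitly label this ``the hard part'' and do not carry it out.

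Your structural claim that the disjoint intervals ``decouple the reasoning across phase boundaries'' and let a restarted vertex progress ``unaffected by---and without affecting---neighbors that are already in a later phase'' is false, and it is precisely where the difficulty lies. A reset vertex re-entering $I_2$ via the transition-in stage must avoid $a$-collisions with neighbors already in $I_2$ (the set $N_2'(v,x)$), which is why the self-stabilizing transition-in only yields a $\Delta^{1/4}$-\emph{arb}defective coloring and why the counting argument for the existence of $\hat{x}$ must charge both collision types; a vertex entering $I_3$ must avoid $L_{t-1}(v)$, the colors of neighbors already in $I_3$; and the check $T_v[u]+T_u[v]\neq 0$ is only preserved because whenever a late-arriving vertex $u$ joins the $a$-class of an already-settled neighbor $v$, it is $u$'s update rule that sets $T_u[v]=1$ (Scenarios I/II in the proofs of \Cref{lemma:self-stabilizing-correctness-I2-large-a} and \Cref{lemma:self-stabilizing-correctness-I2-small-a}). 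Without arguing these cross-phase, asynchronous interactions, the induction does not close: you cannot ``transport the invariants exactly as in \Cref{sec:alg-framework}'', because those invariants were proved under lockstep progress through the phases.
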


In the reminder of this section, we will introduce the three phases of the self-stabilizing algorithm and state their time complexity. We will conclude with a proof of the main theorem---i.e., \Cref{thm:alg-self-stab}. Omitted details and missing proofs are provided in \Cref{sec-app:alg-stab}.

\subsection{The Linial phase and the transition-in stage of the quadratic reduction phase}

At the beginning of a round $t$, if a vertex $v$ finds its color $\phi_{t-1}(v)$ not in interval $I_2\cup I_3$, it will do error-checking to see if any of the following conditions is satisfied:
\begin{itemize}
	\item Its color collide with some neighbor.
	\item Its color $\phi_{t-1}(v)$ is not in $\left(\bigcup_{i=1}^{r^*}I_1^{(i)}\right)\cup I_2\cup I_3$ (which implies $v$ should be running the first iteration of the Linial phase), but that color is not $\ell_3+\ell_2+\sum_{i=1}^{r^*}n_i+id(v)$.
\end{itemize}
If any of these conditions is satisfied, then vertex $v$ treats itself in an improper state and resets its color to $\ell_3+\ell_2+\sum_{i=1}^{r^*}n_i+id(v)$. That is, it sets $\phi_t(v)=\ell_3+\ell_2+\sum_{i=1}^{r^*}n_i+id(v)$.

Otherwise, if vertex $v$ satisfies none of the conditions, then it is in a proper state with $\phi_{t-1}(v)\in I_1$. In such case, vertex $v$ first determines which interval $I_1^{(t')}$ it is in, and then runs either the Linial phase or the transition-in stage of the quadratic reduction phase, according to the value of $t'$.

If $0\leq t'< r^*$, then vertex $v$ computes a $\Delta$-cover-free set family $\mathcal{F}_{t'}$ as in the locally-iterative algorithm, and sets its new color to be the smallest number in $S_{t'}^{(\phi_{t-1}(v))}$, excluding all elements of $S_{t'}^{(\phi_{t-1}(v))}$ for all $v$'s neighbors $u\in N(v)$ satisfying $\phi_{t-1}(u)\in I_1^{(t')}$.

If $t'=r^*$, then vertex $v$ transforms its color from interval $I_1$ to $I_2$, effectively running the transition-in stage of the quadratic reduction phase. The transition-in stage of the self-stabilizing algorithm is similar to the one in the locally-iterative algorithm. The only difference is that vertices may end the Linial phase and start the transition-in stage in different rounds. This brings the side effect that the $a$ values of all vertices are no longer guaranteed to be $\Delta^{1/4}$-defective. Instead, we maintain a $\Delta^{1/4}$-arbdefective $\lambda^2$-coloring. Specifically, each vertex $v$ still computes $a(v)$ based on its color and the colors of its neighbors using the defective coloring algorithm; moreover, vertex $v$ again uses $b(v)$ to differentiate itself from the neighbors with the same $a$ value. On the other hand, vertex $v$ sets $T_v[u]=1$ if $a(v)$ might collide with neighbor $u$, otherwise $v$ sets $T_v[u]=0$. (Recall that $T_v[u]$ and $T_u[v]$ determine the orientation of edge $(u,v)$ in arbdefective coloring schemes.)

Following lemma shows the time cost of the algorithm up to end of the transition-in stage.

\begin{lemma}\label{lemma:self-stabilizing-time-complexity-I1}
If $T_0$ is the last round in which the adversary makes any changes to the RAM areas of vertices, then for every round $t\geq T_0+r^*+2$, every vertex $v$ has $\phi_t(v)\in I_2\cup I_3$.
\end{lemma}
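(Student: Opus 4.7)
The plan is to combine the no-reset guarantee of \Cref{lemma:self-stabilizing-correctness} with a simple interval-monotonicity argument, where in every round a vertex's color can only advance along the fixed sequence $I_1^{(0)} \to I_1^{(1)} \to \cdots \to I_1^{(r^*)} \to I_2 \to I_3$, never backward.

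First I would analyze the state of an arbitrary vertex at the end of round $T_0 + 1$. A vertex whose pre-round color already lies in $I_2 \cup I_3$ bypasses the Linial-phase error-check (which is only triggered when $\phi_{t-1}(v) \notin I_2 \cup I_3$), runs a later-phase update, and by construction of those update rules still produces a color in $I_2 \cup I_3$. A vertex whose pre-round color lies in some $I_1^{(t')}$ either fails the error-check and resets to $\phi_0(v) \in I_1^{(0)}$, or passes the check and is forwarded to $I_1^{(t'+1)}$ (if $t' < r^*$) or to $I_2$ via the transition-in rule (if $t' = r^*$). A color lying entirely outside $I_1 \cup I_2 \cup I_3$ likewise triggers the second reset condition and ends at $\phi_0(v) \in I_1^{(0)}$. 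Hence at the end of round $T_0 + 1$ every vertex's color sits somewhere in $I_1^{(0)} \cup \cdots \cup I_1^{(r^*)} \cup I_2 \cup I_3$, with the worst case being $I_1^{(0)}$.

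Next I would invoke \Cref{lemma:self-stabilizing-correctness}: from round $T_0 + 2$ onward no vertex is ever reset. Consequently, a vertex $v$ with $\phi_{t-1}(v) \in I_1^{(i)}$ deterministically advances to a color in $I_1^{(i+1)}$ for $i < r^*$, or to a color in $I_2$ for $i = r^*$, and a vertex already in $I_2 \cup I_3$ stays in $I_2 \cup I_3$ (the update rules of the quadratic reduction and standard reduction phases never emit colors below $I_2$). By a trivial induction on the round number, a vertex that ends round $T_0 + 1$ in $I_1^{(0)}$ reaches $I_1^{(r^*)}$ by the end of round $T_0 + r^* + 1$ and then enters $I_2$ by the end of round $T_0 + r^* + 2$; every other vertex reaches $I_2 \cup I_3$ strictly earlier. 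Combined with interval monotonicity, this yields the lemma for every $t \geq T_0 + r^* + 2$.

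The one subtle point, which I expect to be the only real check, is verifying that each Linial advancement step in the window $[T_0+2, T_0+r^*+2]$ is actually well-defined and never stalls. For a vertex $v$ with $\phi_{t-1}(v) \in I_1^{(i)}$, the update consults only those neighbors whose colors also lie in $I_1^{(i)}$; these are at most $\Delta$ many, and all of them are distinct from $\phi_{t-1}(v)$ because \Cref{lemma:self-stabilizing-correctness} forbids any collision that would otherwise trigger a reset. The $\Delta$-cover-freeness of $\mathcal{F}_i$ from \Cref{thm:set-families} then supplies an unused element of $S_i^{(\phi_{t-1}(v))} \subseteq I_1^{(i+1)}$ to serve as the new color. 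The transition-in step at $i = r^*$ is identical to the locally-iterative version and by construction of the $\langle a, b, c, d\rangle$ encoding always maps into $I_2$.
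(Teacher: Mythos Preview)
Your proposal is correct and follows essentially the same approach as the paper: invoke \Cref{lemma:self-stabilizing-correctness} to rule out resets from round $T_0+2$ onward, then use interval monotonicity and a straightforward induction on the round number to conclude that every vertex reaches $I_2\cup I_3$ by round $T_0+r^*+2$. One minor imprecision: in round $T_0+1$ a vertex whose color already lies in $I_2\cup I_3$ may still fail the error-check of the later phase and reset to $I_1^{(0)}$, so your sentence ``still produces a color in $I_2\cup I_3$'' is not quite right---but since you immediately allow the worst case $I_1^{(0)}$, this does not affect the argument.
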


\subsection{The core stage of the quadratic reduction phase}

A vertex $v$ with $\phi(v)\in I_2$ and  $a(v)\geq \lambda$ should run the core stage. Nonetheless, before proceeding, it will do error-checking to see if any of the following conditions is satisfied:
\begin{itemize}
	\item There exists a neighbor $u$ of $v$ such that $a(v)=a(u)$ and $b(v)=b(u)$, effectively implying $u$ and $v$ have identical color.
	\item There exists a neighbor $u$ of $v$ such that $a(v)=a(u)$ yet $T_v[u]+T_u[v]=0$, implying that the orientation of edge $(u,v)$ is still undetermined when $a(v)=a(u)$.
	\item The number of neighbors $u\in N(v)$ with $T_v[u]=1$ is larger than $\Delta^{1/4}$, violating the bounded arboricity assumption during the core stage.
	\item There exists a vertex $u\in N(v)\cup \{v\}$ with its color in $I_2$ and $a(u)\geq\lambda$, yet $b(u)\geq m_2$, violating the range of $b$ values during the core stage.
\end{itemize}
If any of these conditions is satisfied, then vertex $v$ resets its color. Otherwise, it executes the core stage of the quadratic reduction phase to reduce its $a$ value from $[\lambda,\lambda^2)$ to $[0,\lambda)$.

The procedure we use in the self-stabilizing settings to transform a $\Delta^{1/4}$-arbdefective $\lambda^2$-coloring to a $(2\cdot\Delta^{1/4})$-arbdefective $\lambda$-coloring is almost identical to the one we used in the locally-iterative settings.
The only difference is that we have altered the definition of some variables to incorporate relevant bits in $T_v$. This is because, in the self-stabilizing setting, vertices start the core stage with an arbdefective coloring instead of a defective coloring.

Once the reduction of the $a$ value occurs in some round $t$, vertex $v$ obtains an $a_t(v)\in[\lambda]$, and updates $b_t(v)$ to differentiate itself from the neighbors that may have colliding $a$ value.
It also sets $T_v[u]=1$ for certain entries in $T_v$, recording the orientation of corresponding edges. Notice that $T_v$ here is used to maintain the arboricity of a $(2\cdot\Delta^{1/4})$-arbdefective $\lambda$-coloring for vertices with $ a(v)<\lambda$, whereas in the transition-in stage, $T_v$ is used to maintain the arboricity of a $\Delta^{1/4}$-arbdefective $\lambda^2$-coloring for vertices with $ a(v)\geq\lambda$.

Following lemma shows the time cost of the self-stabilizing algorithm up to end of the core stage.

\begin{lemma}\label{lemma:self-stabilizing-time-complexity-I2-part1}
Assume $T_0$ is the last round in which the adversary makes any changes to the RAM areas of vertices, for every vertex $v$, let $t^*_v\geq T_0+r^*+2$ be the smallest round number such that either ``$\phi_{t^*_v}(v)\in I_2$ and $a_{t^*_v}(v)\in [\lambda]$'' or ``$\phi_{t^*_v}(v)\in I_3$'' is satisfied. Then, it holds that $t^*_v\leq T_0+r^*+3+\lambda$.
\end{lemma}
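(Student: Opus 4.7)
The plan is to combine \Cref{lemma:self-stabilizing-time-complexity-I1} (which pins down the state at round $T_0+r^*+2$) with the time-analysis template of \Cref{lemma:phase2-stage2-time-complexity} (the analogous bound in the locally-iterative setting), adapted to the self-stabilizing setting. First, by \Cref{lemma:self-stabilizing-correctness}, no error-checking resets occur past round $T_0+2$, so starting from round $T_0+r^*+3$ the core-stage dynamics evolve "cleanly" without adversarial interference. By \Cref{lemma:self-stabilizing-time-complexity-I1}, at round $T_0+r^*+2$ every vertex $v$ already satisfies $\phi_{T_0+r^*+2}(v)\in I_2\cup I_3$; any vertex already in $I_3$, or in $I_2$ with $a\in[\lambda]$, has $t^*_v\leq T_0+r^*+2$ and trivially satisfies the claim. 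The remaining task is to bound, for each vertex still in the core stage with $a\geq\lambda$, the number of additional rounds until its $a$-value drops into $[\lambda]$.

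Fix such a vertex $v$, so $1\leq\hat{a}_{T_0+r^*+2}(v)<\lambda$. In any subsequent core-stage round $t$, if $|M_t(v)|\leq\Delta^{1/4}$ then the reduction branch sets $a_t(v)\gets\tilde{a}_{t-1}(v)\in[\lambda]$, ending the core stage and giving $t^*_v=t$. Otherwise the alternative branch preserves $\hat{a}(v)$ and advances $\tilde{a}(v)$ by the increment $\hat{a}(v)\bmod\lambda$. Since $\lambda$ is prime and $1\leq\hat{a}(v)<\lambda$, this increment is a nonzero element of $\mathbb{Z}/\lambda\mathbb{Z}$. Hence over any $\lambda$ consecutive no-reduction rounds the value $\tilde{a}(v)$ sweeps through every element of $[\lambda]$ exactly once; in particular, $\hat{a}(v)$ and the whole sequence $\tilde{a}_{t-1}(v)$ are completely determined by their initial values.

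The main obstacle, and the heart of the proof, is a double-counting argument ruling out $\lambda$ consecutive no-reduction rounds. Fix a window $W$ of $\lambda$ such rounds and consider $\sum_{t\in W}|M_t(v)|$. A neighbor $u$ contributes to $M_t(v)$ only when $\tilde{a}_{t-1}(u)=\tilde{a}_{t-1}(v)$ and $\hat{a}_{t-1}(u)\neq\hat{a}_{t-1}(v)$. I would classify each neighbor $u$ of $v$ into three cases depending on whether, within $W$, $u$ stays in the core stage, reduces and remains in $I_2$, or migrates to $I_3$. In every case, since $\tilde{a}(v)$ hits each residue exactly once inside $W$ and $u$'s $(\hat{a},\tilde{a})$ trajectory is either a deterministic affine shift of $v$'s (while still in core) or eventually frozen (after $u$ reduces or migrates), one can charge $u$ with $O(1)$ appearances in $\bigcup_{t\in W}M_t(v)$. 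This yields $\sum_{t\in W}|M_t(v)|=O(\Delta)$. On the other hand, if $v$ never reduces in $W$ then this same sum exceeds $\lambda\cdot\Delta^{1/4}=\Omega(\Delta\log\Delta)$, a contradiction. Hence within $\lambda$ core-stage rounds after $T_0+r^*+2$ some round satisfies $|M_t(v)|\leq\Delta^{1/4}$ and $v$ reduces, giving $t^*_v\leq T_0+r^*+3+\lambda$.

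The delicate part will be the third paragraph: carefully accounting for neighbors whose state changes mid-window and, relatedly, ensuring that the arbdefective (rather than defective) initial $a$-coloring produced by the self-stabilizing transition-in stage does not spoil the counting. In particular, I expect to have to invoke the arbdefective out-degree bound on the vector $T_v$ to control the number of neighbors that can share an $\hat{a}$-value with $v$ in unfavorable ways, mirroring how the defective-coloring bound is used in the proof of \Cref{lemma:phase2-stage2-time-complexity}.
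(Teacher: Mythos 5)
Your proposal is correct and follows essentially the same route as the paper: the paper likewise uses \Cref{lemma:self-stabilizing-correctness} and \Cref{lemma:self-stabilizing-time-complexity-I1} to pin down a clean state at round $T_0+r^*+2$, and then explicitly re-uses the two per-neighbor collision-counting claims from the proof of \Cref{lemma:phase2-stage2-time-complexity} (each neighbor contributes at most two rounds with $\tilde a_t(u)=\tilde a_t(v)$ and $a_t(u)\neq a_t(v)$ in a $\lambda$-round window, followed by pigeonhole), shifted by an offset of $T_0+1$. Your double-counting sketch is exactly that argument, and your worry about the arbdefective initial coloring is harmless here since the time bound only relies on the deterministic $\tilde a$-trajectories and the primality of $\lambda$, not on the defectiveness bound.
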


\subsection{The transition-out stage of the quadratic reduction phase}

At the beginning of a round $t$, if vertex $v$ has color $\phi_{t-1}(v)\in I_2$ and $a(v)\in[\lambda]$, then it is in the transition-out stage. Once again, it does the following error-checking before proceeding.
\begin{itemize}
	\item There exists a neighbor $u$ of $v$ such that $a(v)=a(u)$ and $b(v)=b(u)$, effectively implying $u$ and $v$ have identical color.
	\item There exists a neighbor $u$ of $v$ such that $a(v)=a(u)$ yet $T_v[u]+T_u[v]=0$, implying that the orientation of edge $(u,v)$ is still undetermined when $a(v)=a(u)$.
	\item The number of neighbors $u\in N(v)$ with $T_v[u]=1$ is larger than $2\cdot\Delta^{1/4}$, violating the bounded arboricity assumption during the transition-out stage.
\end{itemize}
If any of these conditions is satisfied, then vertex $v$ treats itself in an improper state and resets its color. Otherwise, it executes the transition-out stage to transform its color from $I_2$ to $I_3$.

For each vertex $v$, the transformation is similar to the transition-out stage of the locally-iterative algorithm, except that: (1) we replace the constraints on $c(v)$ with corresponding constraints on $T_v$; and (2) we add an error-checking mechanism for $d(v)$ as the adversary can arbitrarily change it. If the error-checking for $d(v)$ fails, vertex $v$ resets $d(v)$ to $\mu$, so that later it can obtain a proper $d(v)$.
Such resetting occurs at most once for each vertex once the adversary stops disrupting.

The following lemma gives the time cost of the self-stabilizing algorithm up to the end of the quadratic reduction phase.

\begin{lemma}\label{lemma:self-stabilizing-time-complexity-I2-part2}
Assume $T_0$ is the last round in which the adversary makes any changes to the RAM areas of vertices, for every vertex $v$, let $t^{\#}_v> T_0+r^*+\lambda+3$ be the smallest round number such that $\phi_{t^{\#}_v}(v)\in I_3$. Then, it holds that $t^{\#}_v\leq T_0+r^*+3+4\lambda$.
\end{lemma}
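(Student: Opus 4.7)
I would adapt the analysis of \Cref{lemma:phase2-stage3-time-cost} (the locally-iterative transition-out stage, which gave a $2\lambda$-round bound) to the self-stabilizing setting by induction on the $a$-value. The core new ingredient will be an additional $\lambda$-round budget to absorb the one-time reset of $d(v)$ that adversarial corruption may force; combined with the $2\lambda$ corruption-free rounds this gives the claimed $3\lambda$-round transition-out window on top of the $\lambda$-round core-stage window established by \Cref{lemma:self-stabilizing-time-complexity-I2-part1}.

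\textbf{Step 1: Reduce to the transition-out interval.} I would first invoke \Cref{lemma:self-stabilizing-time-complexity-I2-part1}: by the end of round $T_0+r^*+3+\lambda$, every vertex $v$ is either already in $I_3$ (in which case $t^{\#}_v\leq T_0+r^*+3+\lambda$ and we are done), or in $I_2$ with $a(v)\in[\lambda]$. It thus suffices to show that every vertex of the second type enters $I_3$ within $3\lambda$ additional rounds.

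\textbf{Step 2: Strata-by-strata progression on $a^*$.} I would prove by induction on $a^*\in[\lambda]$ that, by round $T_0+r^*+3+\lambda+3(a^*+1)$, every vertex $v$ with $a(v)=a^*$ has $\phi(v)\in I_3$. The induction mirrors the locally-iterative analysis: once all vertices with $a<a^*$ have left $I_2$, the gate condition (``every $I_2$-neighbor has $a$-value $\geq a(v)$'') is permanently satisfied for $v$ with $a(v)=a^*$; $v$ then sets $d(v)\neq\mu$ in one round and performs the transformation into $I_3$ in the next. I would re-use the pigeonhole guarantee $|L_{(d(v))}(v)\cap L(v)|\leq\Delta/\mu$ to ensure a valid target color in $I_3$ exists; this bound carries over because the $d$-error-check, the $b$-range check, and the arboricity check recorded in $T_v$ collectively enforce the $(2\cdot\Delta^{1/4})$-arbdefective $\lambda$-coloring and the $b\in[m_2]$ invariants used in the original argument. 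The third round per stratum is reserved for the possible $d$-reset, discussed next.

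\textbf{Step 3: Controlling $d$-resets (the main obstacle).} The principal difficulty is showing that each vertex suffers at most one $d$-reset after $T_0$, so that the third round per stratum in Step~2 is sufficient. I would argue as follows: when a vertex $v$ first reaches $d(v)=\mu$ (either by the core-stage exit or by an error-triggered reset at the start of its transition-out), every subsequent assignment to $d(v)$ is computed by \Cref{alg:phase2-stage3} from the legitimate post-$T_0$ local view, yielding a $d(v)\in[\mu]$ that satisfies the $d$-validity conditions in every later round. Hence the adversarial budget for triggering reset is exhausted after one use per vertex; the cumulative delay across all $\lambda$ strata is at most $\lambda$ rounds. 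Adding this to the $2\lambda$ corruption-free rounds from Step~2 and the $\lambda+3$ rounds through the end of the core stage, we obtain $t^{\#}_v\leq T_0+r^*+3+\lambda+3\lambda=T_0+r^*+3+4\lambda$, as desired.
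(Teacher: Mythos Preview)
Your scaffolding matches the paper's proof: reduce via \Cref{lemma:self-stabilizing-time-complexity-I2-part1}, then induct on the $a$-value at round $T_0+r^*+\lambda+3$ to show each stratum clears in three rounds (versus two in the fault-free case), the third round absorbing a possible one-time $d$-reset. The paper packages this with auxiliary times $t^-_v\le t^+_v\le t^{\#}_v$ and the per-vertex bound $t^{\#}_v\le T_0+r^*+\lambda+3+3(a(v)+1)$, but the substance is the same.

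Where your sketch is genuinely thin is the heart of Step~3: the assertion that once $d(v)$ is legitimately set it ``satisfies the $d$-validity conditions in every later round.'' This does \emph{not} follow merely from $d(v)$ having been computed post-$T_0$. Between the round $d(v)$ is fixed and the round $v$ actually transitions, neighbours $u$ with $a(u)=a(v)$ may themselves move into $I_3$, and each such arrival could in principle push $|L'_{t-1}(v)\cap L_{(t-1,d_{t-1}(v))}(v)|$ back above $\Delta/\mu$ and trigger a second reset. The paper closes this with an auxiliary claim, proved by case-splitting on the orientation bits for each such neighbour $u$: if $T_v[u]=1$ then by the definition of $L'(v)$ the new colour $\phi(u)$ is excluded from $L'(v)$; if instead $T_u[v]=1$ then $v\in A(u)$, so either $u$ is blocked from transitioning (when $d(v)$ is still $\mu$) or $u$'s chosen $I_3$-colour is forced to avoid $L_{(d(v))}(v)$ entirely. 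You need this argument, or an equivalent one, to support the ``at most one reset'' claim; without it Step~3 is an assertion rather than a proof.
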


\subsection{The standard reduction phase}

For a vertex $v$ with its color in $I_3$, it considers itself in the standard reduction phase, whose error-checking procedure is very simple: if the color of itself collides with any neighbor, then it resets $\phi(v)$ to $\ell_3+\ell_2+\sum_1^{r^*} n_i+id(v)$. Otherwise, vertex $v$ considers itself in a proper state, and runs exactly the same standard reduction procedure described in the locally-iterative settings.

The following lemma gives an upper bound on the stabilization time of the self-stabilizing algorithm.

\begin{lemma}\label{lemma:self-stabilizing-time-complexity-I3}
Assume $T_0$ is the last round in which the adversary makes any changes to the RAM areas of vertices, for every round  $t\geq T_0+r^*+4\lambda+2+2(\sqrt{m_3}+1)\mu$ , every vertex $v$ has $\phi_{t}(v)\in [\Delta+1]$.
\end{lemma}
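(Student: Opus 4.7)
The plan is to combine \Cref{lemma:self-stabilizing-time-complexity-I2-part2} and \Cref{lemma:self-stabilizing-correctness} to isolate a round from which every vertex permanently runs the standard reduction procedure, and then to show that from that round onward the global maximum color drops by at least one per round until it reaches $\Delta$. First I would note that by \Cref{lemma:self-stabilizing-time-complexity-I2-part2}, every vertex has $\phi_{t}(v)\in I_3$ for all $t\geq T_0+r^*+3+4\lambda$, and by \Cref{lemma:self-stabilizing-correctness} no error-check ever fires after round $T_0+2$. Hence from round $T_0+r^*+3+4\lambda$ onward every vertex runs only the standard-reduction update rule, whose output always lies in $[\Delta+1]\subseteq I_3$; in particular no vertex ever re-enters $I_1\cup I_2$, and from this round on the algorithm behaves identically to the locally-iterative standard reduction phase acting on a proper coloring with colors in $I_3$.

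Next I would establish the key monotonicity claim: letting $M_t\triangleq\max_{v\in V}\phi_t(v)$, for every round $t>T_0+r^*+3+4\lambda$ with $M_{t-1}>\Delta$, one has $M_t\leq M_{t-1}-1$. Any vertex $v$ with $\phi_{t-1}(v)=M_{t-1}$ automatically has all neighbors in the standard-reduction phase and attains the maximum color in its closed neighborhood, so it updates to some value in $[\Delta+1]$, which is at most $\Delta<M_{t-1}$; every other vertex either retains a color strictly below $M_{t-1}$ or updates to a value at most $\Delta$. Thus no vertex has color $M_{t-1}$ at the end of round $t$. A brief consistency check---that two neighbors cannot simultaneously be local maxima, as equal colors would violate properness---confirms that these simultaneous updates preserve the proper-coloring invariant and do not interfere with the ``min unused color in $[\Delta+1]$'' selection performed by each updating vertex.

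Iterating the monotonicity claim gives $M_{T_0+r^*+3+4\lambda+k}\leq \ell_3-1-k$ for every $k$ such that the right-hand side exceeds $\Delta$, so $M_t\leq \Delta$ once $k\geq (2\sqrt{m_3}+1)\mu-1$. Therefore $\phi_t(v)\in[\Delta+1]$ for every $v$ and every $t\geq T_0+r^*+4\lambda+2+(2\sqrt{m_3}+1)\mu$, and since $(2\sqrt{m_3}+1)\mu\leq 2(\sqrt{m_3}+1)\mu$ the bound stated in the lemma follows. The only genuinely self-stabilization-specific point---and the main place where the proof departs from its non-stabilizing counterpart---is verifying that once a vertex sits in $I_3$ after round $T_0+2$ it cannot be bounced back out by a spurious reset; this is exactly the content of \Cref{lemma:self-stabilizing-correctness}, so apart from invoking it the argument is routine arithmetic.
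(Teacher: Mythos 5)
Your proof is correct and follows essentially the same route as the paper's: invoke \Cref{lemma:self-stabilizing-time-complexity-I2-part2} to get all vertices into $I_3$ by round $T_0+r^*+3+4\lambda$, invoke \Cref{lemma:self-stabilizing-correctness} to rule out error resets, and then observe the maximum color drops by at least one per round until it reaches $\Delta$. You simply spell out the monotonicity argument that the paper states in one sentence, and you correctly handle a small discrepancy between the paper's definition of $\ell_3$ in the main body ($\ell_3=\Delta+(2\sqrt{m_3}+1)\mu$) and the expression used in the paper's own proof ($\ell_3=\Delta+2(\sqrt{m_3}+1)\mu$) by noting your sharper bound implies the stated one.
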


\subsection{Proof of the main theorem of the self-stabilizing coloring algorithm}

We can use above lemmas to prove \Cref{thm:alg-self-stab}---the main theorem of the self-stabilizing coloring algorithm.
Assume $T_0$ is the last round in which the adversary disrupts execution, by \Cref{lemma:self-stabilizing-time-complexity-I3}, every vertex has a color in $[\Delta+1]$ by the end of round $T_0+O(\Delta^{3/4}\log{\Delta})+\log^*n$, and that color will remain in $[\Delta+1]$ ever since. On the other hand, due to \Cref{lemma:self-stabilizing-correctness}, in every round $t\geq T_0+O(\Delta^{3/4}\log{\Delta})+\log^*n$, the error-checking procedure passes. As the error-checking procedure always checks whether neighbors have conflicting colors, in every round $t\geq T_0+O(\Delta^{3/4}\log{\Delta})+\log^*n$, the coloring produced at the end of that round is proper.
Lastly, recall the definition of interval length $\ell_1$, $\ell_2$, and $\ell_3$, and recall in each round, for each vertex $v$ and each of its neighbor $u$, vertex  $v$ only sends its color along with a bit $T_v[u]$ to $u$, hence the size of every message $v$ sends is $O(\log{n})$.
This completes the proof of \Cref{thm:alg-self-stab}.

\section{Conclusion}\label{sec:conclusion}

In this paper, we give the first locally-iterative $(\Delta+1)$-coloring algorithm with sublinear-in-$\Delta$ running time. This algorithm can also be transformed into a self-stabilizing algorithm, achieving sublinear-in-$\Delta$ stabilization time.
We introduce a notion of reconfiguration machinery for the locally-iterative algorithms that can be made  self-stabilizing with relative ease.
And interestingly, although the last self-stabilizing algorithm that we obtain is not locally-iterative \emph{per se}, 
it can be interpreted as a locally-iterative algorithm on the variables representing edge orientations, whereas this ``edge orientation'' variant of the locally-iterative algorithm supports reconfiguration.

Looking ahead, a natural question to ask is can locally-iterative algorithms do faster? Due to the trade-off between the runtime of the intermediate phase and the number of colors used in the coloring produced by the intermediate phase, $\tilde{O}(\Delta^{3/4})+\log^*n$ might be the best  achievable upper bound in the current algorithmic framework. Nevertheless, the possibility that more elaborate tools or more clever techniques could result in faster algorithms still exist, and this is a very interesting direction worth further exploration. On the other hand, compared with the seminal work by Barenboim, Elkin and Goldenberg~\cite{barenboim21}, our algorithm is more sophisticated and is not applicable in some settings (that algorithms in \cite{barenboim21} could work), such as the Bit-Round model. Finding a more elegant and ``natural'' sublinear-in-$\Delta$ locally-iterative coloring algorithm and perhaps supporting more settings, is another direction for future research.

\bibliographystyle{ACM-Reference-Format}
\bibliography{podc23-ref-v1}

\clearpage
\appendix
\section*{Appendix}

\section{Pseudocode of the Linial Phase and the Standard Reduction Phase of the Locally-iterative Algorithm}\label{sec-app:pseudocode}

\begin{algorithm}[h!]
\caption{The Linial phase at $v\in V$ in round $1\leq t\leq r^*$}\label{alg:phase1}
\begin{algorithmic}[1]
\Statex /* Initialization: $\phi_0(v) \gets \ell_3+\ell_2+\sum_{i=1}^{r^*}n_i+id(v)$. */
\State Send $\phi_{t-1}(v)$ to all neighbors.
\If {($\phi_{t-1}(v)\in I_1$)}
	\State Determine the value of $t$ based on $\phi_{t-1}(v)$.
	\If{($1\leq t\leq r^*$)}
		\State $\phi_t(v)\gets\min S_{t-1}^{(\phi_{t-1}(v))}\setminus\bigcup_{u\in N(v)\text{ and }\phi_{t-1}(u)\in I_1^{(t-1)}} S_{t-1}^{(\phi_{t-1}(u))}$.
	\EndIf
\EndIf
\end{algorithmic}
\end{algorithm}

\begin{algorithm}[h!]
\caption{The standard reduction phase at $v\in V$ in round $t$}\label{alg:phase3}
\begin{algorithmic}[1]
\State Send $\phi_{t-1}(v)$ to all neighbors.
\If{($\phi_{t-1}(v)\in I_3$)}
	\If {(for every $u\in N(v)$ it holds that $\phi_{t-1}(u)\in I_3$)}
		\If {(for every $u\in N(v)$ it holds that $\phi_{t-1}(u)<\phi_{t-1}(v)$)}
			\State $\phi_t(v)\gets\min([\Delta+1]\setminus \{\phi_{t-1}(u)\mid u\in N(v)\})$.
		\EndIf
	\EndIf
\EndIf
\end{algorithmic}
\end{algorithm}

\section{Omitted Details and Proofs of The Quadratic Reduction Phase of the Locally-iterative Algorithm}\label{sec-app:quadratic-reduction-phase}

In this section, we provide missing details on the description of the quadratic reduction phase, and prove \Cref{lemma:phase2-stage1-property} to \Cref{lemma:phase2-stage3-proper-color}. We conclude this section with a proof of \Cref{lemma:phase2-property-simple}.

\subsection{Transition-in stage}\label{subsec:phase2-stage1}

There are no missing details on the description of the transition-in stage. Moreover, by the description provided in the main body of the paper, it is easy to see that $\phi_{r^*+1}$ is a proper coloring, and the $a_{r^*+1}$ values of all vertices correspond to a $\Delta^{1/4}$-defective coloring, immediately giving \Cref{lemma:phase2-stage1-property}.

\subsection{Core stage}\label{subsec:phase2-stage2}

Recall that in the core stage, when $|M_t(v)|\leq\Delta^{1/4}$ for a vertex $v$, it assigns $b_t(v)$ to take the smallest value in the following set:
$$S_c^{(a_{t-1}(v)\cdot m_2 + b_{t-1}(v))}\setminus \left( \left\{b_{t-1}(u)\mid u\in M'_{t}(v)\right\} \bigcup \left(\cup_{u\in \overline{M}'_{t}(v)} S_c^{(a_{t-1}(u)\cdot m_2+b_{t-1}(u))}\right) \right).$$

There are some details worth clarifying regarding the above expression. First, the indices $a_{t-1}(v)\cdot m_2 + b_{t-1}(v)$ and $a_{t-1}(u)\cdot m_2 + b_{t-1}(u)$ in the above expression are valid. To see this, notice that when the transition-in stage is done, according to the transition-in stage algorithm, each vertex's $b$ value is in $[m_2]$. Hence, when vertex $v$ reduces its $a$ value from $[\lambda^2]$ to $[\lambda]$ in round $t$, we have $a_{t-1}(v)\cdot m_2 + b_{t-1}(v)\in[\lambda^2 m_2]$. Moreover, for each vertex $u\in\overline{M}'_{t}(v)$, by the definition of $\overline{M}'_{t}(v)$ and the above algorithm description, its value of $b$ has not changed since the transition-in stage is done (otherwise it would be the case $\hat{a}_{t-1}(u)=0$), thus the value of $b_{t-1}(u)$ must be in $[m_2]$. Therefore, for each vertex $u\in\overline{M}'_{t}(v)$, we also have $a_{t-1}(u)\cdot m_2 + b_{t-1}(u)\in[\lambda^2 m_2]$. Second, the above expression gives a non-empty set. To see this, notice that by definition any $S^{(\cdot)}_c$ contains at least $\tau>2\cdot\Delta^{1/4}\log^2(\lambda^2 m_2)$ elements, and we are eliminating at most $2\cdot\Delta^{1/4}\log(\lambda^2 m_2)$ elements from it with the expression after the set-minus symbol, as $|M'_t(v)\cup\overline{M}'_t(v)|\leq 2\cdot\Delta^{1/4}$. Lastly, after vertex $v$ updates its $b$ value, we have $b_t(v)\in[m_3]$. This is because $b_t(v)$ is drawn from $S_c^{(a_{t-1}(v)\cdot m_2 + b_{t-1}(v))}$, which by definition only contains elements in $[\tau^2]\subseteq[m_3]$.

\paragraph{Analysis}

We now formally prove the correctness of the core stage and analyze its time complexity.
We first show the following claim is true as it will be frequently used later.

\begin{claim}\label{claim:phase2-stage2_last_round_equal_a}
For every round $t\geq r^*+2$, for every pair of neighboring vertices $u$ and $v$, if $\phi_t(u), \phi_t(v)\in I_2$ and $a_t(u)=a_t(v)\geq\lambda$, then $\phi_{t-1}(u), \phi_{t-1}(v) \in I_2$ and $a_{t-1}(u)=a_{t-1}(v)\geq\lambda$.
\end{claim}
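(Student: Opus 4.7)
The plan is to reason backward: under the hypotheses, both $u$ and $v$ must have executed the core stage (\Cref{alg:phase2-stage2}) in round $t$, specifically the branch where $|M_t(\cdot)|>\Delta^{1/4}$, because this is the only code path that can leave a vertex with color in $I_2$ and $a$-value at least $\lambda$. Once this is established, the update rule for this branch preserves $\hat{a}$ and only ``rotates'' $\tilde{a}$, so equality of $a_t$ values can be pulled back to equality of $a_{t-1}$ values by a short modular-arithmetic argument.

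More concretely, first I would argue $\phi_{t-1}(v)\in I_2$ and $a_{t-1}(v)\geq\lambda$ (and likewise for $u$). Since $t\geq r^*+2$, all vertices have already left $I_1$ (via \Cref{lemma:phase2-stage1-property} and the fact that transition-in only runs in round $r^*+1$), and once a vertex enters $I_3$ via the standard reduction phase it never returns. Hence $\phi_t(v)\in I_2$ forces $\phi_{t-1}(v)\in I_2$. Then $v$ runs either the core stage or the transition-out stage in round $t$; but the transition-out stage only executes when $a_{t-1}(v)<\lambda$ and never modifies $a(v)$, which would contradict $a_t(v)\geq\lambda$. So $v$ runs the core stage, and the reducing branch $|M_t(v)|\leq\Delta^{1/4}$ is excluded because it assigns $a_t(v)\gets\tilde{a}_{t-1}(v)\in[\lambda]$. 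The same holds for $u$.

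Now I invoke the update rule from Line~\ref{alg-line:phase2-stage2-a-update-rule}: for $w\in\{u,v\}$,
\begin{equation*}
\hat{a}_t(w)=\hat{a}_{t-1}(w), \qquad \tilde{a}_t(w)=\bigl(\hat{a}_{t-1}(w)+\tilde{a}_{t-1}(w)\bigr)\bmod\lambda.
\end{equation*}
From $a_t(u)=a_t(v)$ I extract $\hat{a}_t(u)=\hat{a}_t(v)$ and $\tilde{a}_t(u)=\tilde{a}_t(v)$, which by the first identity yields $\hat{a}_{t-1}(u)=\hat{a}_{t-1}(v)$. Plugging this into the second identity and cancelling the common $\hat{a}_{t-1}(\cdot)$ term gives $\tilde{a}_{t-1}(u)\equiv\tilde{a}_{t-1}(v)\pmod\lambda$, and since both lie in $[\lambda]$ they must be equal. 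Therefore $a_{t-1}(u)=a_{t-1}(v)$, and this common value is at least $\lambda$ because $\hat{a}_{t-1}(v)=\hat{a}_t(v)\geq 1$.

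The only delicate point is the exhaustive case analysis at the start: ruling out the possibility that a vertex ``lands'' in the state $\phi_t(v)\in I_2,\ a_t(v)\geq\lambda$ without actually executing the reducing stage branch that preserves $\hat{a}$. Everything downstream is a one-line modular-arithmetic computation, so the proof should be short — roughly a paragraph after the case analysis is nailed down.
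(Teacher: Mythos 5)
Your proof is correct and follows essentially the same reasoning as the paper's. The only cosmetic difference is that you establish $a_{t-1}(v)\geq\lambda$ first (by ruling out the transition-out stage and the reducing branch) and then derive $a_{t-1}(u)=a_{t-1}(v)$ directly from the injectivity of the update rule $w\mapsto\hat{a}\cdot\lambda+((\hat{a}+\tilde{a})\bmod\lambda)$, whereas the paper argues by contraposition (assuming $a_{t-1}(u)\neq a_{t-1}(v)$ and deriving $a_t(u)\neq a_t(v)$) and leaves the check that $a_{t-1}\geq\lambda$ to the end; these are logically equivalent presentations of the same argument.
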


\begin{proof}
Since $t-1\geq r^*+1$, vertex $v$ cannot be in the Linial phase in round $t-1$. In such scenario, by our algorithm, if $\phi_{t-1}(v)\notin I_2$, then it cannot be the case that $\phi_{t}(v)\in I_2$. Hence, if $\phi_t(v)\in I_2$, then $\phi_{t-1}(v)\in I_2$; similarly, if $\phi_t(u)\in I_2$, then $\phi_{t-1}(u)\in I_2$. Moreover, $u$ and $v$ must both be executing \Cref{alg:phase2-stage2} in round $t$.

Next, we prove $a_t(u)=a_t(v)\geq\lambda$ implies $a_{t-1}(u)=a_{t-1}(v)$. For the sake of contradiction, assume $a_{t-1}(u)\neq a_{t-1}(v)$. Since $a_t(u)=a_t(v)\geq\lambda$, in round $t$, both $u$ and $v$ update $a$ using the rule in Line \ref{alg-line:phase2-stage2-a-update-rule} of \Cref{alg:phase2-stage2}. If $a_{t-1}(u)\neq a_{t-1}(v)$, then either $\hat{a}_{t-1}(u)\neq \hat{a}_{t-1}(v)$ or $\tilde{a}_{t-1}(u)\neq \tilde{a}_{t-1}(v)$. In case $\hat{a}_{t-1}(u)\neq \hat{a}_{t-1}(v)$, assume $\hat{a}_{t-1}(u)<\hat{a}_{t-1}(v)$ without loss of generality. Then after the update, we have $a_t(u)\leq\hat{a}_{t-1}(u)\cdot\lambda+(\lambda-1)<\hat{a}_{t-1}(v)\cdot\lambda\leq a_t(v)$, meaning $a_t(u)\neq a_t(v)$, resulting in a contradiction. Otherwise, in case $\hat{a}_{t-1}(u)=\hat{a}_{t-1}(v)$ and $\tilde{a}_{t-1}(u)\neq\tilde{a}_{t-1}(v)$, then $\tilde{a}_t(u)=((\hat{a}_{t-1}(u)+\tilde{a}_{t-1}(u))\bmod\lambda)\neq((\hat{a}_{t-1}(v)+\tilde{a}_{t-1}(v))\bmod\lambda)=\tilde{a}_t(v)$. Once again, we have $a_t(u)\neq a_t(v)$, resulting in a contradiction. By now, we conclude $a_{t-1}(u)=a_{t-1}(v)$.

Lastly, notice that if both $u$ and $v$ execute \Cref{alg:phase2-stage2} in round $t$, and if $a_{t-1}(u)=a_{t-1}(v)<\lambda$, then it must be the case that $a_t(u)=a_t(v)<\lambda$, violating the lemma assumption. Hence, we know $a_{t-1}(u)=a_{t-1}(v)\geq\lambda$.
\end{proof}

Next, we bound the size of the set $M'_t(v)\cup\overline{M}'_t(v)$. It is crucial in showing that the $\langle a,b\rangle$ pairs of vertices maintain a proper coloring during the core stage of the quadratic reduction phase.

\begin{claim}\label{claim:phase2-stage2_collisions_bound}
For every round $t\geq r^*+2$, for every vertex $v$, if $\phi_{t-1}(v)\in I_2$ and $a_{t-1}(v)\geq\lambda$ and $|M_t(v)|\leq\Delta^{1/4}$, then $|M'_t(v)\cup\overline{M}'_t(v)|\leq 2\cdot\Delta^{1/4}$.
\end{claim}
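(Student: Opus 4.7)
\bigskip

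The plan is to show the bound by partitioning $M'_t(v)\cup\overline{M}'_t(v)$ into two disjoint pieces, each of size at most $\Delta^{1/4}$. Observe first that every $u\in M'_t(v)\cup\overline{M}'_t(v)$ satisfies $\phi_{t-1}(u)\in I_2$ and $\tilde a_{t-1}(u)=\tilde a_{t-1}(v)$, and moreover the two sets partition by the value of $\hat a_{t-1}(u)$ (one demands $\hat a_{t-1}(u)=0$, the other $\hat a_{t-1}(u)\neq 0$), so their union's size is the sum of their sizes. I would split the union differently, however: into (i) neighbors $u$ with $\hat a_{t-1}(u)=\hat a_{t-1}(v)$ — equivalently $a_{t-1}(u)=a_{t-1}(v)$ — and (ii) neighbors $u$ with $\hat a_{t-1}(u)\neq\hat a_{t-1}(v)$.

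Piece (ii) is immediately contained in $M_t(v)$ by definition, so by the hypothesis $|M_t(v)|\le\Delta^{1/4}$ it contributes at most $\Delta^{1/4}$ neighbors.

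Piece (i) is bounded via a backward induction using the already-established Claim~\ref{claim:phase2-stage2_last_round_equal_a}. Since $a_{t-1}(v)\ge\lambda$, every $u$ in piece (i) also satisfies $a_{t-1}(u)\ge\lambda$. Applying Claim~\ref{claim:phase2-stage2_last_round_equal_a} iteratively to the pair $(u,v)$ from round $t-1$ down to round $r^*+1$, I obtain $\phi_{r^*+1}(u),\phi_{r^*+1}(v)\in I_2$ and $a_{r^*+1}(u)=a_{r^*+1}(v)$. Hence the map $u\mapsto u$ injects piece (i) into $\{u\in N(v)\mid \phi_{r^*+1}(u)\in I_2,\ a_{r^*+1}(u)=a_{r^*+1}(v)\}$, which by Lemma~\ref{lemma:phase2-stage1-property} has size at most $\Delta^{1/4}$.

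Adding the two bounds yields $|M'_t(v)\cup\overline{M}'_t(v)|\le 2\cdot\Delta^{1/4}$, as required. There is no real obstacle here: the argument is just a clean decomposition plus two already-proven facts. The only point warranting a line of care is the base case of the backward induction at $t=r^*+2$, where Claim~\ref{claim:phase2-stage2_last_round_equal_a} is applied exactly once and immediately lands in the setting of Lemma~\ref{lemma:phase2-stage1-property}.
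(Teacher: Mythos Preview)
Your proposal is correct and essentially identical to the paper's own proof: the paper defines $\overline{M}_t(v)=\{u\in N(v)\mid \phi_{t-1}(u)\in I_2,\ \hat a_{t-1}(u)=\hat a_{t-1}(v),\ \tilde a_{t-1}(u)=\tilde a_{t-1}(v)\}$, observes $M_t(v)\cup\overline{M}_t(v)=M'_t(v)\cup\overline{M}'_t(v)$, and then bounds $|\overline{M}_t(v)|\le\Delta^{1/4}$ by iterating Claim~\ref{claim:phase2-stage2_last_round_equal_a} back to round $r^*+1$ and invoking Lemma~\ref{lemma:phase2-stage1-property}---exactly your pieces (i) and (ii). One tiny remark: when $t=r^*+2$ you already have $t-1=r^*+1$, so Claim~\ref{claim:phase2-stage2_last_round_equal_a} needs zero applications rather than one, but this only makes that case easier.
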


\begin{proof}
Since $t\geq r^*+2$, $\phi_{t-1}(v)\in I_2$, and $a_{t-1}(v)\geq\lambda$, vertex $v$ executes \Cref{alg:phase2-stage2} in round $t$.

Define $\overline{M}_t(v)\triangleq\{u\mid u\in N(v),\phi_{t-1}(u)\in I_2,\hat{a}_{t-1}(u)=\hat{a}_{t-1}(v),\tilde{a}_{t-1}(u)=\tilde{a}_{t-1}(v)\}$. Notice that by definition $M_t(v)\cup\overline{M}_t(v)=M'_t(v)\cup\overline{M}'_t(v)$, thus we only need to prove $|M_t(v)\cup\overline{M}_t(v)|\leq 2\cdot\Delta^{1/4}$. Recall the lemma assumption $|M_t(v)|\leq\Delta^{1/4}$, thus we focus on showing $|\overline{M}_t(v)|\leq\Delta^{1/4}$.

For each vertex $u\in\overline{M}_t(v)$, by the definition of $\overline{M}_t(v)$, we have $u\in N(v)$, $\phi_{t-1}(u)\in I_2$, and $a_{t-1}(u)=a_{t-1}(v)\geq\lambda$. By repeatedly applying \Cref{claim:phase2-stage2_last_round_equal_a}, we conclude that $a_{r^*+1}(u)=a_{r^*+1}(v)$; that is, when the transition-in stage is done, $u$ and $v$ have identical $a$ value. Recall \Cref{lemma:phase2-stage1-property}, we know when the transition-in stage is done, the number of neighbors of $v$ that have identical $a$ value with $v$ cannot exceed $\Delta^{1/4}$. Therefore, $|\overline{M}_t(v)|\leq\Delta^{1/4}$. This completes the proof of the claim.
\end{proof}

We are now ready to prove that $\phi$---or more precisely, the $\langle a,b\rangle$ pairs of vertices---maintains a proper coloring for vertices with colors in interval $I_2$.\footnote{During the transition-out stage, vertices will not alter their $a,b,c$ values. Hence, at this point, we can already argue that the algorithm maintains a proper coloring for vertices with colors in interval $I_2$ throughout their second phase.}

\begin{proof}[Proof of \Cref{lemma:phase2-stage2-proper-coloring}]
We prove the lemma by induction on $t$. For the base case $t=r^*+1$, by \Cref{lemma:phase2-stage1-property}, $V'_{r^*+1}=V$ and $\phi_{r^*+1}$ corresponds to a proper coloring. Since $c_{r^*+1}(v)=0,d_{r^*+1}(v)=\mu$ for every vertex $v$, we further conclude the $\langle a,b\rangle$ pairs of all vertices correspond to a proper coloring.

Assume the claim holds for round $t=i$ where $i\geq r^*+1$, we now consider round $t=i+1$.

In round $i+1$, a vertex $v\in V'_{i+1}$ running the quadratic reduction phase may have $a_{i}(v)\in [\lambda,\lambda^2)$ or $a_{i}(v)\in [0,\lambda)$. In the former case, $v$ may update its $a$ value from $[\lambda,\lambda^2)$ to $[\lambda,\lambda^2)$ or reduce its $a$ value from $[\lambda,\lambda^2)$ to $[0,\lambda)$ in round $i+1$. In the latter case, $v$ leaves its $a$ and $b$ values unchanged in round $i+1$. We consider these three scenarios separately.

\textsc{Scenario I:} vertex $v$ running \Cref{alg:phase2-stage2} updates its $a$ value from $[\lambda,\lambda^2)$ to $[\lambda,\lambda^2)$ in round $i+1$. For any vertex $u\in N(v)\cap V'_{i+1}$ with $a_{i+1}(u)\neq a_{i+1}(v)$, the claim holds trivially. On the other hand, by claim \Cref{claim:phase2-stage2_last_round_equal_a}, any vertex $u\in N(v)\cap V'_{i+1}$ with $a_{i+1}(u)=a_{i+1}(v)\geq\lambda$ has $u\in N(v)\cap V'_{i}$ and $a_{i}(u)=a_{i}(v)\geq\lambda$ as well. By the induction hypothesis, we have $b_{i} (u)\neq b_{i}(v)$. By \Cref{alg:phase2-stage2}, vertex $u$ and $v$ both update $a$ from  $[\lambda,\lambda^2)$ to $[\lambda,\lambda^2)$ in round $i+1$. Moreover, we have $b_{i+1}(u)=b_{i}(u)$ and $b_{i+1}(v)=b_i(v)$, implying $b_{i+1}(u)\neq b_{i+1}(v)$. Hence, the $\langle a,b\rangle$ pairs of all vertices in $V'_{i+1}$ correspond to a proper coloring of $G'_{i+1}$.

\textsc{Scenario II:} vertex $v$ running \Cref{alg:phase2-stage2} reduces its $a$ value from $[\lambda,\lambda^2)$ to $[0,\lambda)$ in round $i+1$. For any vertex $u\in N(v)\cap V'_{i+1}$ with $a_{i+1}(u)\neq a_{i+1}(v)$, the claim holds trivially. So consider a vertex $u\in N(v)\cap V'_{i+1}$ with $a_{i+1}(u)=a_{i+1}(v)<\lambda$.
By \Cref{alg:phase2-stage2}, vertex $u$ either: (a) satisfies $a_i(u)<\lambda$ and does not change its $a,b$ values in round $i+1$; or (b) reduces its $a$ value from $[\lambda,\lambda^2)$ to $[0,\lambda)$ in round $i+1$. In both cases, it is easy to verify that $\tilde{a}_{i}(u)=\tilde{a}_{i}(v)$ must hold. This implies $u\in M'_{i+1}(v)\cup\overline{M}'_{i+1}(v)$. Now, since $v$ reduces its $a$ value from $[\lambda,\lambda^2)$ to $[0,\lambda)$ in round $i+1$, the condition $|M_{i+1}(v)|\leq\Delta^{1/4}$ must be satisfied in round $i+1$. Hence, by \Cref{claim:phase2-stage2_collisions_bound} and the method we used to update $b(v)$, it holds that $b_{i+1}(v)\neq b_{i+1}(u)$ for any $u\in M'_{i+1}(v)\cup\overline{M}'_{i+1}(v)$.

\textsc{Scenario III:} vertex $v$ running \Cref{alg:phase2-stage2} leaves its $a$ value and $b$ value unchanged in round $i+1$ since $a_i(v)\in[\lambda]$. For any vertex $u\in N(v)\cap V'_{i+1}$ with $a_{i+1}(u)\neq a_{i+1}(v)$, the claim holds trivially. So consider a vertex $u\in N(v)\cap V'_{i+1}$ with $a_{i+1}(u)=a_{i+1}(v)<\lambda$.
By \Cref{alg:phase2-stage2}, vertex $u$ either: (a) satisfies $a_i(u)<\lambda$ and does not change its $a,b$ values in round $i+1$; or (b) reduces its $a$ value from $[\lambda,\lambda^2)$ to $[0,\lambda)$ in round $i+1$. In the former case, we know $a_i(v)=a_{i+1}(v)=a_{i+1}(u)=a_i(u)$. By the induction hypothesis, we know $b_i(u)\neq b_i(v)$. Since $a_i(v)=a_i(u)<\lambda$, by \Cref{alg:phase2-stage2}, we conclude $b_{i+1}(u)=b_i(u)\neq b_i(v)=b_{i+1}(v)$. This completes the proof of the inductive step for case (a). Next, consider case (b), in which $u$ reduces its $a$ value from $[\lambda,\lambda^2)$ to $[0,\lambda)$ in round $i+1$. From the perspective of $u$, by an analysis similar to \textsc{Scenario II}, we know $v\in M'_{i+1}(u)\cup\overline{M}'_{i+1}(u)$. Moreover, by \Cref{claim:phase2-stage2_collisions_bound} and the method we used to update $b(u)$, it holds that $b_{i+1}(u)\neq b_{i+1}(v)$ for any $v\in M'_{i+1}(u)\cup\overline{M}'_{i+1}(u)$. This completes the proof of the inductive step for case (b).
\end{proof}

We continue to show the core stage maintains a $(2\cdot\Delta^{1/4})$-arbdefective coloring with the $a$ values of the vertices. Recall that vertices use the $c$ values to implicitly determine the orientation of edges: vertex $v$ points to vertex $u$ if and only if $c(v)\geq c(u)$. To simplify presentation, for each vertex $v$, in a round $t$ during its core stage, we use $A_t(v)$ to define the set of vertices $v$ points to:
$$A_t(v)\triangleq\{u\mid u\in N(v),\phi_t(u)\in I_2, a_t(u)=a_t(v), c_t(u)\leq c_t(v)\}.$$

The following proof employs a similar strategy as that of Lemma 6.2 in \cite{barenboim21}.

\begin{proof}[Proof of \Cref{lemma:phase2-stage2-bounded-arboricity}]
For every vertex $v\in V$, let $t^*_v$ be the smallest round number such that $\phi_{t^*_v}(v)\in I_2$ and $a_{t^*_v}(v)\in[\lambda]$ are both satisfied. Fix some round $\hat{t}\geq r^*+1$, we prove the lemma by considering two complement scenarios: either $\hat{t}<t^*_v$ or $\hat{t}\geq t^*_v$.

\textsc{Scenario I:} $\hat{t}<t^*_v$. In this scenario, for every round $t\in[r^*+1,\hat{t}]$, the value of $a_t(v)$ is at least $\lambda$. We shall prove a superset of $A_{\hat{t}}(v)$ is of size at most $\Delta^{1/4}$. Specifically, we claim the size of $\{u\mid u\in N(v),\phi_{\hat{t}}(u)\in I_2,a_{\hat{t}}(u)=a_{\hat{t}}(v)\}$ is at most $\Delta^{1/4}$. To see this, choose an arbitrary vertex $u\in\{u\mid u\in N(v),\phi_{\hat{t}}(u)\in I_2,a_{\hat{t}}(u)=a_{\hat{t}}(v)\}$. Since $\phi_{\hat{t}}(u),\phi_{\hat{t}}(v)\in I_2$ and $a_{\hat{t}}(u)=a_{\hat{t}}(v)\geq\lambda$, by repeatedly applying \Cref{claim:phase2-stage2_last_round_equal_a}, we know $\phi_{r*+1}(u),\phi_{r^*+1}(v)\in I_2$ and $a_{r^*+1}(u)=a_{r^*+1}(v)\geq\lambda$. Due to \Cref{lemma:phase2-stage1-property}, we know the number of neighbors of $v$ satisfying $a_{r^*+1}(u)=a_{r^*+1}(v)$ cannot exceed $\Delta^{1/4}$. Therefore, $|A_{\hat{t}}(v)|\leq|\{u\mid u\in N(v),\phi_{\hat{t}}(u)\in I_2,a_{\hat{t}}(u)=a_{\hat{t}}(v)\}|\leq\Delta^{1/4}$, as required.

\textsc{Scenario II:} $\hat{t}\geq t^*_v$. In this scenario, we prove the claim by induction, starting from round $t^*_v$.

Consider round $t^*_v$, if $t^*_v=r^*+1$, then due to \Cref{lemma:phase2-stage1-property}, $|A_{t^*_v}(v)|\leq\Delta^{1/4}$, as required. Otherwise, we have $t^*_v>r^*+1$, implying $v$ runs \Cref{alg:phase2-stage2} in round $t^*_v$. By the definition of $t^*_v$, vertex $v$ reduces its $a$ value from $[\lambda,\lambda^2)$ to $[0,\lambda)$ in round $t^*_v$. Hence, by \Cref{alg:phase2-stage2}, $|M_{t^*_v}(v)|\leq\Delta^{1/4}$. Next, we argue $|\overline{M}_{t^*_v}(v)|\leq\Delta^{1/4}$. To see this, choose an arbitrary vertex $u\in\overline{M}_{t^*_v}(v)$. By the definition of $\overline{M}_{t^*_v}(v)$, we know $a_{t^*_v-1}(u)=a_{t^*_v-1}(v)\geq\lambda$. By repeatedly applying \Cref{claim:phase2-stage2_last_round_equal_a}, we know $\phi_{r*+1}(u),\phi_{r^*+1}(v)\in I_2$ and $a_{r^*+1}(u)=a_{r^*+1}(v)\geq\lambda$. Due to \Cref{lemma:phase2-stage1-property}, we know the number of neighbors of $v$ satisfying $a_{r^*+1}(u)=a_{r^*+1}(v)$ cannot exceed $\Delta^{1/4}$. Therefore, $|\overline{M}_{t^*_v}(v)|\leq\Delta^{1/4}$. At this point, we conclude $|A_{t^*_v}(v)|\leq|\{u\mid u\in N(v),\phi_t(u)\in I_2,a_{t^*_v}(u)=a_{t^*_v}(v)\}|\leq|M_{t^*_v}(v)\cup\overline{M}_{t^*_v}(v)|\leq 2\cdot\Delta^{1/4}$. This completes the proof of the base case.

Assume $|A_{i}(v)|\leq 2\cdot\Delta^{1/4}$ holds for round $i\geq t^*_v$, consider round $i+1$. Since $i\geq t^*_v$, we have $a_{i}(v)\in [0,\lambda)$. Thus in round $i+1\geq r^*+2$, by \Cref{alg:phase2-stage2}, vertex $v$ does not change its $a,b,c$ values. Particularly, $a_{i+1}(v)=a_i(v)$ and $c_{i+1}(v)=c_i(v)$. On the other hand, for any vertex $u\in N(v)$ satisfying $\phi_{i+1}(u)\in I_2$ and $a_{i+1}(u)=a_{i+1}(v)<\lambda$, by the definition of $t^*_u$, it holds that $t^*_u\leq i+1$. If $t^*_u<i+1$, then we have $a_{i+1}(u)=a_{i}(u)$ and $c_{i+1}(u)=c_{i}(u)$. Thus, in the case $t^*_u< i+1$, if $u\in A_{i+1}(v)$ then $u\in A_{i}(v)$. Otherwise, consider the case $t^*_u=i+1$. Since $t^*_u=i+1\geq r^*+2$, in round $i+1$, vertex $u$ runs \Cref{alg:phase2-stage2} and reduces its $a$ value from $[\lambda,\lambda^2)$ to $[0,\lambda)$. By the method \Cref{alg:phase2-stage2} updates vertices' $c$ values, it must be $c_{i+1}(u)> c_{i+1}(v)$. Thus, in the case $t^*_u=i+1$, vertex $u\notin A_{i+1}(v)$. At this point, we can conclude $A_{i+1}(v)\subseteq A_{i}(v)$. By the induction hypothesis, $|A_{i+1}(v)|\leq 2\cdot\Delta^{1/4}$. This completes the proof of the inductive step.
\end{proof}

We conclude this part by bounding the duration of the core stage: starting from round $r^*+1$, within $\lambda+2=O(\Delta^{3/4}\log\Delta)$ rounds, all vertices have their $a$ values in $[\lambda]$. Recall this guarantee is summarized in \Cref{lemma:phase2-stage2-time-complexity}.

\begin{proof}[Proof of \Cref{lemma:phase2-stage2-time-complexity}]\label{proof:lemma:phase2-stage2-time-complexity}
By \Cref{lemma:phase2-stage1-property}, every vertex $v\in V$ has $\phi_{r*+1}(v)\in I_2$. If vertex $v$ has $a_{r^*+1}(v)\in[0,\lambda)$, then trivially $t^*_v=r^*+1$ and we are done. Otherwise, vertex $v$ has $a_{r^*+1}(v)\in [\lambda,\lambda^2)$, and runs \Cref{alg:phase2-stage2} from round $r^*+2$ to $t^*_v$ (both inclusive). To bound the value of $t^*_v$ when $a_{r^*+1}(v)\in [\lambda,\lambda^2)$, consider a vertex $u\in N(v)$ such that $\phi_{r^*+1}(u)\in I_2$.

\smallskip Our first claim is, if $a_{r^*+1}(u)\neq a_{r^*+1}(v)$, then in rounds $[r^*+2,\min\{r^*+1+\lambda,t^*_v-1\}]$, there are at most two rounds such that $u$ and $v$ both have their colors in $I_2$ and have identical $\tilde{a}$ value (by the end of those rounds). To prove this claim, consider three scenarios depending on the value of $t^*_u$.

\textsc{Scenario I:} $t^*_u=r^*+1$. Consider a round $t\in[r^*+2,\min\{r^*+1+\lambda,t^*_v-1\}]$. Since $t\leq t^*_v-1$, vertex $v$ updates its $a$ value in rounds $r^*+2$ to $t$ (both inclusive) using Line \ref{alg-line:phase2-stage2-a-update-rule} of \Cref{alg:phase2-stage2}. This implies $\tilde{a}_t(v)=(\tilde{a}_{r^*+1}(v)+(t-r^*-1)\cdot\hat{a}_{r^*+1}(v))\bmod\lambda$. Since $t^*_u=r^*+1$, in all rounds from $r^*+2$ to $t$ (both inclusive) in which $\phi(u)\in I_2$ (at the beginning of those rounds), $a(u)$ always equal to $a_{r^*+1}(u)\in[\lambda]$ (at the end of those rounds). In particular, $\hat{a}_t(u)=0$ and $\tilde{a}_t(v)=\tilde{a}_{r^*+1}(u)$. Now, to satisfy $\tilde{a}_t(v)=\tilde{a}_t(u)$, the following equality must hold:
$$(t-r^*-1)\cdot\hat{a}_{r^*+1}(v)+(\tilde{a}_{r^*+1}(v)-\tilde{a}_{r^*+1}(u))\equiv 0 \pmod \lambda.$$
Recall that we have assumed $a_{r^*+1}(u)\neq a_{r^*+1}(v)$, also recall that $\hat{a}_{r^*+1}(u)=0\neq\hat{a}_{r^*+1}(v)$, so in the above expression $\tilde{a}_{r^*+1}(v)$ may or may not equal to $\tilde{a}_{r^*+1}(u)$. Nonetheless, recall that $\hat{a}_{r^*+1}(v)$, $\tilde{a}_{r^*+1}(v)$, $\tilde{a}_{r^*+1}(u)$ are all in $[\lambda]$, also recall that $\lambda$ is a prime number, so when $t\in[r^*+2,\min\{r^*+1+\lambda,t^*_v-1\}]$, there is at most one choice of $t$ that satisfies the above expression.

\textsc{Scenario II:} $t^*_u\geq t^*_v$. Consider a round $t\in[r^*+2,\min\{r^*+1+\lambda,t^*_v-1\}]$. Since $t\leq t^*_v-1$, vertex $v$ updates its $a$ value in rounds $r^*+2$ to $t$ (both inclusive) using Line \ref{alg-line:phase2-stage2-a-update-rule} of \Cref{alg:phase2-stage2}. This implies $\tilde{a}_t(v)=(\tilde{a}_{r^*+1}(v)+(t-r^*-1)\cdot\hat{a}_{r^*+1}(v))\bmod\lambda$. Since $t\leq t^*_v-1\leq t^*_u-1$, we can similarly conclude $\tilde{a}_t(u)=(\tilde{a}_{r^*+1}(u)+(t-r^*-1)\cdot\hat{a}_{r^*+1}(u))\bmod\lambda$. Now, to satisfy $\tilde{a}_t(v)=\tilde{a}_t(u)$, the following equality must hold:
$$(t-r^*-1)\cdot (\hat{a}_{r^*+1}(v)-\hat{a}_{r^*+1}(u))+(\tilde{a}_{r^*+1}(v)-\tilde{a}_{r^*+1}(u))\equiv 0 \pmod \lambda.$$
Recall that we have assumed $a_{r^*+1}(u)\neq a_{r^*+1}(v)$, also recall that $\hat{a}_{r^*+1}(v)$, $\hat{a}_{r^*+1}(u)$, $\tilde{a}_{r^*+1}(v)$, $\tilde{a}_{r^*+1}(u)$ are all in $[\lambda]$ and $\lambda$ is a prime number. If $\hat{a}_{r^*+1}(v)=\hat{a}_{r^*+1}(u)$ and $\tilde{a}_{r^*+1}(v)\neq\tilde{a}_{r^*+1}(u)$, then the above expression cannot be satisfied. Otherwise, if $\hat{a}_{r^*+1}(v)\neq\hat{a}_{r^*+1}(u)$, then there is at most one choice of $t\in[r^*+2,\min\{r^*+1+\lambda,t^*_v-1\}]$ that satisfies the above expression.

\textsc{Scenario III:} $r^*+2\leq t^*_u\leq t^*_v-1$. In this scenario, in rounds $r^*+2$ to $\min\{r^*+1+\lambda,t^*_u-1\}$ (both inclusive), by an argument similar to \textsc{Scenario II}, there is at most one round in which $u$ and $v$ both have their colors in $I_2$ and have identical $\tilde{a}$ value (by the end of that round). In round $t^*_u$, the value of $a(u)$ reduces from $[\lambda,\lambda^2)$ to $[0,\lambda)$. Hence, either $\tilde{a}_{t^*_u}(v)=\tilde{a}_{t^*_u}(u)$, or $\tilde{a}_{t^*_u}(v)\neq\tilde{a}_{t^*_u}(u)$.
\begin{itemize}
	\item If $\tilde{a}_{t^*_u}(v)=\tilde{a}_{t^*_u}(u)$, then $t^*_u$ is another round in which $u$ and $v$ both have their colors in $I_2$ and have identical $\tilde{a}$ value (by the end of that round). Next, consider a round $t\in[t^*_u+1,\min\{r^*+1+\lambda,t^*_v-1\}]$. Since $t\leq t^*_v-1$, vertex $v$ updates its $a$ value in rounds $[t^*_u+1,t]$ using Line \ref{alg-line:phase2-stage2-a-update-rule} of \Cref{alg:phase2-stage2}. This implies $\tilde{a}_t(v)=(\tilde{a}_{t^*_u}(v)+(t-t^*_u)\cdot\hat{a}_{t^*_u}(v))\bmod\lambda$. Since $t\geq t^*_u+1$, in all rounds $[t^*_u+1,t]$ in which $\phi(u)\in I_2$, $a(u)$ always equal to $a_{t^*_t}(u)\in[\lambda]$. In particular, $\hat{a}_t(u)=0$ and $\tilde{a}_t(v)=\tilde{a}_{t^*_u}(u)$. If $\tilde{a}_t(v)=\tilde{a}_t(u)$ is satisfied, then it must be the case:
	$$(t-t^*_u)\cdot\hat{a}_{t^*_u}(v)+(\tilde{a}_{t^*_u}(v)-\tilde{a}_{t^*_u}(u))\equiv 0 \pmod \lambda.$$
	Recall that $\tilde{a}_{t^*_u}(v)=\tilde{a}_{t^*_u}(u)$ and $\hat{a}_{t^*_u}(v)\neq 0$. Since $\lambda$ is a prime number, the above expression can only be satisfied when $t-t^*_u\equiv 0 \pmod \lambda$. However, since $r^*+2\leq t^*_u$ and $t\in[t^*_u+1,\min\{r^*+1+\lambda,t^*_v-1\}]$, we know $1\leq t-t^*_u\leq \lambda-1$, implying $t-t^*_u\equiv 0 \pmod \lambda$ cannot be satisfied. At this point, we conclude, if $\tilde{a}_{t^*_u}(v)=\tilde{a}_{t^*_u}(u)$, then in rounds $[r^*+2,\min\{r^*+1+\lambda,t^*_v-1\}]$, there are at most two rounds in which $u$ and $v$ both have their colors in $I_2$ and have identical $\tilde{a}$ value (by the end of those rounds).
	
	\item If $\tilde{a}_{t^*_u}(v)\neq\tilde{a}_{t^*_u}(u)$, then consider a round $t\in[t^*_u+1,\min\{r^*+1+\lambda,t^*_v-1\}]$. By an analysis identical to the above, we know if $\tilde{a}_t(v)=\tilde{a}_t(u)$ is satisfied, then it must be the case:
	$$(t-t^*_u)\cdot\hat{a}_{t^*_u}(v)+(\tilde{a}_{t^*_u}(v)-\tilde{a}_{t^*_u}(u))\equiv 0 \pmod \lambda.$$
	Recall that $\tilde{a}_{t^*_u}(v)\neq\tilde{a}_{t^*_u}(u)$ and $\hat{a}_{t^*_u}(v)\neq 0$. Since $\lambda$ is a prime number, and since $1\leq t-t^*_u\leq \lambda-1$, we know there is at most one choice of $t\in[t^*_u+1,\min\{r^*+1+\lambda,t^*_v-1\}]$ that satisfies the above expression. At this point, we conclude, if $\tilde{a}_{t^*_u}(v)\neq\tilde{a}_{t^*_u}(u)$, then in rounds $[r^*+2,\min\{r^*+1+\lambda,t^*_v-1\}]$, there are at most two rounds in which $u$ and $v$ both have their colors in $I_2$ and have identical $\tilde{a}$ value (by the end of those rounds).
\end{itemize}

By now, we have proved our first claim. That is, for any pair of vertices $u,v$ such that $u\in N(v)$ and $\phi_{r^*+1}(u)\in I_2, \phi_{r^*+1}(v)\in I_2$, if $a_{r^*+1}(u)\neq a_{r^*+1}(v)\in[\lambda,\lambda^2)$, then in rounds $r^*+2$ to $\min\{r^*+1+\lambda,t^*_v-1\}$ (both inclusive), there are at most two rounds such that $u$ and $v$ both have their colors in $I_2$ and have identical $\tilde{a}$ value (by the end of those rounds).

\smallskip Our second claim is, for any pair of vertices $u,v$ such that $u\in N(v)$ and $\phi_{r^*+1}(u)\in I_2, \phi_{r^*+1}(v)\in I_2$, if $a_{r^*+1}(u)=a_{r^*+1}(v)\in[\lambda,\lambda^2)$, then in rounds $r^*+2$ to $\min\{r^*+1+\lambda,t^*_v-1\}$ (both inclusive), there are at most two rounds such that by the end of each such round, $u$ and $v$ both have their colors in $I_2$, $u$ and $v$ have identical $\tilde{a}$ value but different $a$ values.

To prove this claim, consider two complement scenarios depending on the value of $t^*_u$. In scenario one in which $t^*_u\geq t^*_v$, by \Cref{alg:phase2-stage2}, for any round $t\in[r^*+2,\min\{r^*+1+\lambda,t^*_v-1\}]$, we have $\phi_{t}(u)=\phi_{t}(v)\in I_2$ and $a_t(u)=a_t(v)$. The second scenario concerns with the case $t^*_u< t^*_v$. In this scenario, by \Cref{alg:phase2-stage2}, for any round $t\in[r^*+2,t^*_u-1]$, we have $\phi_{t}(u)=\phi_{t}(v)\in I_2$ and $a_t(u)=a_t(v)$. Then, in round $t^*_u$, vertex $u$ reduces its $a$ value from $[\lambda,\lambda^2)$ to $[0,\lambda)$, whereas $v$ keeps its $a$ value in $[\lambda,\lambda^2)$. Thus, by \Cref{alg:phase2-stage2}, $\phi_{t^*_u}(u)=\phi_{t^*_u}(v)\in I_2$, $a_{t^*_u}(u)\neq a_{t^*_u}(v)$, but $\tilde{a}_{t^*_u}(u)=\tilde{a}_{t^*_u}(v)$. Lastly, for every round $t\in[t^*_u+1,\min\{r^*+1+\lambda,t^*_v-1\}]$, by an argument identical to {Scenario III} in the preceding claim, we know $\tilde{a}_{t^*_u}(u)\neq\tilde{a}_{t^*_u}(v)$. This completes the proof of our second claim.

\smallskip Combining the two claims, we conclude, for any pair of neighbors $u,v$ such that $\phi_{r^*+1}(u)$ and $\phi_{r^*+1}(v)$ are both in $I_2$, in rounds $r^*+2$ to $\min\{r^*+1+\lambda,t^*_v-1\}$ (both inclusive), there are at most two rounds such that by the end of each such round, $u$ and $v$ both have their colors in $I_2$, $u$ and $v$ have identical $\tilde{a}$ value but different $a$ values.

Now, since vertex $v$ has at most $\Delta$ neighbors, and since $\lambda\geq 2\cdot\Delta^{3/4}$, by the pigeonhole principle, starting from round $r^*+2$, within $\lambda$ rounds, there must exist a round $t$ in which, by the end of that round, the number of neighbors of $v$ satisfying both $a_t(u)\neq a_t(v)$ and $\tilde{a}_t(u)=\tilde{a}_t(v)$ is at most $\Delta^{1/4}$. In other words, in round $t+1$, we have $|M_{t+1}(v)|\leq\Delta^{1/4}$. As a result, by \Cref{alg:phase2-stage2}, at the end of round $t+1\leq r^*+2+\lambda$, we have $a_{t+1}(v)\in[\lambda]$. This completes the proof of the lemma.
\end{proof}

\subsection{Transition-out stage}\label{subsec:phase2-stage3}

There are no missing details on the description of the transition-out stage, so we proceed to the analysis directly. We begin by showing that the parameter $\hat{k}$ defined in Line \ref{alg-line:phase2-stage3-k-rule} of \Cref{alg:phase2-stage3} must exist and is of bounded value.

\begin{claim}\label{claim:phase2-stage3-bounded-k}
For any vertex $v\in V$, let $t^+_v$ be the smallest round number such that $\phi_{t^+_v}(v)\in I_2$ and $d_{t^+_v}(v)\neq\mu$.
For any round $t\geq t^+_v+1$, if $\phi_{t-1}(v)\in I_2$, then the following set is non-empty:
$$L_{(t-1,d_{t-1}(v))}(v) \setminus \left( L_{t-1}(v) \bigcup \left(\cup_{u\in\{w\mid w\in A_{t-1}(v),d_{t-1}(w)\neq\mu\}}L_{(t-1,d_{t-1}(u))}(u)\right) \right).$$
Let $\hat{k}\in[\mu]$ be the smallest integer such that $P_{(t-1,v,d_{t-1}(v))}(\hat{k})+\mu\cdot{\hat{k}}$ is in the above set, then:
$$0\leq\hat{k} \leq \Delta/\mu + 4\cdot\Delta^{1/4}.$$
\end{claim}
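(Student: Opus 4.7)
The plan is to bound the number of forbidden elements in $L_{(t-1,d_{t-1}(v))}(v)$---that is, those lying in $L_{t-1}(v) \cup \bigcup_{u \in A'_{t-1}(v)} L_{(t-1,d_{t-1}(u))}(u)$, where $A'_{t-1}(v) \triangleq \{w \in A_{t-1}(v) : d_{t-1}(w) \neq \mu\}$---by $\Delta/\mu + 4\Delta^{1/4}$. Since $|L_{(t-1,d_{t-1}(v))}(v)| = \mu$ dominates this quantity in our parameter regime, this simultaneously gives non-emptiness of the target set and the claimed upper bound on $\hat k$ (the minimum good index is at most the number of bad ones).

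For the contribution from $L_{t-1}(v)$, I would first observe that \Cref{alg:phase2-stage3} modifies $d$ exactly once (at round $t^+_v$) and never touches $a, b, c$ during transition-out, so $L_{(t-1,d_{t-1}(v))}(v) = L_{(t^+_v-1,d_{t^+_v}(v))}(v)$. A pigeonhole argument on the $\mu$ disjoint sets $\{L_{(t^+_v-1,i)}(v)\}_{i \in [\mu]}$, the bound $|L_{t^+_v-1}(v)| \leq \Delta$, and the minimizing choice made on Line~\ref{alg-line:intermediate-to-reduction-d-rule} together give $|L_{(t-1,d_{t-1}(v))}(v) \cap L_{t^+_v-1}(v)| \leq \Delta/\mu$. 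The extra contribution comes from the set $N_{\mathrm{new}}$ of neighbors that first entered $I_3$ during $[t^+_v,t-1]$. A crucial sub-claim is that every $u \in N_{\mathrm{new}}$ satisfies $a_{t^+_v-1}(u) = a_{t^+_v-1}(v)$: at $u$'s transition round $t''$, $v$ is an $I_2$-neighbor of $u$, so the condition on Line~\ref{alg-line:phase2-stage3-if-cond-2} applied to $u$ forces $a(u) \leq a(v)$; symmetrically, $v$'s own condition at round $t^+_v$ forces $a(v) \leq a(u)$; combined with $a$ being frozen after the core stage, this forces equality. I would then split $N_{\mathrm{new}}$ by the $c$-relation: whenever $c_{t''-1}(u) \geq c_{t''-1}(v)$ (using the tie-breaking convention of footnote~2), $v \in A_{t''-1}(u)$, so $u$'s condition on Line~\ref{alg-line:phase2-stage3-if-cond-4} demands $d_{t''-1}(v) \neq \mu$, which forces $t'' \geq t^+_v+1$ and hence $v \in A'_{t''-1}(u)$; $u$'s own rule on Line~\ref{alg-line:phase2-stage3-k-rule} then explicitly excludes $L_{(t''-1,d_{t''-1}(v))}(v) = L_{(t-1,d_{t-1}(v))}(v)$, so $u$ contributes $0$. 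Only $u$'s with $c(u) < c(v)$ can contribute, each at most $1$, and all such $u$'s sit inside $A_{t^+_v-1}(v)$.

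For the contribution from the $A'$-union, each $u \in A'_{t-1}(v)$ has $a_{t-1}(u) = a_{t-1}(v)$, so \Cref{lemma:phase2-stage2-proper-coloring} forces $b_{t-1}(u) \neq b_{t-1}(v)$; the parameter choices $\tau^2 > m_3 > b$ and $\mu > \tau$ ensure that $(\lfloor b/\tau\rfloor, b\bmod\tau) \in [\tau]^2 \subseteq [\mu]^2$ faithfully encodes $b$ without wrap-around modulo $\mu$, so $P_{(\cdot,u,\cdot)} - P_{(\cdot,v,\cdot)}$ is a nonzero polynomial of degree at most $2$ over $GF(\mu)$, contributing at most two shared elements per such $u$. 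The final combinatorial step is to note that $N_{\mathrm{new}} \cap A_{t^+_v-1}(v)$ (which sits in $I_3$ at round $t-1$) and $A'_{t-1}(v)$ (which sits in $I_2$ at round $t-1$) are disjoint subsets of $A_{t^+_v-1}(v)$, whose size is at most $2\Delta^{1/4}$ by \Cref{lemma:phase2-stage2-bounded-arboricity}; bounding uniformly by $2$ per vertex over the union yields $2 \cdot 2\Delta^{1/4} = 4\Delta^{1/4}$, and adding the earlier $\Delta/\mu$ completes the bound. The main obstacle will be the temporal case analysis tracking what $L_{t-1}(v)$ looks like relative to $L_{t^+_v-1}(v)$ and how each new neighbor's own transition rule already suppresses conflicts with $v$; the disjointness-within-$A_{t^+_v-1}(v)$ observation is the specific combinatorial trick that shaves the bound down from a naive $6\Delta^{1/4}$ to the stated $4\Delta^{1/4}$.
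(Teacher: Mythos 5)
Your proposal is correct and follows essentially the same route as the paper's proof: the same split of the forbidden set into the $L_{t-1}(v)$ part (pigeonhole giving $\Delta/\mu$ at round $t^+_v$, then a case analysis showing only neighbors in $A_{t^+_v-1}(v)$ with smaller $c$ can add to it) and the $A'$-union part (distinct $b$ values giving distinct degree-$2$ polynomials over $GF(\mu)$, hence at most $2$ collisions each). The only cosmetic difference is that you phrase the final amortization as disjointness of the transitioned and not-yet-transitioned subsets of $A_{t^+_v-1}(v)$, whereas the paper phrases it as the upper bound being non-increasing as each such neighbor transitions; these are the same accounting.
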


\begin{proof}
Before diving into the details, we outline the high-level proof strategy. Recall the claim statement, for the ease of presentation, we define:
\begin{align*}
\mathfrak{A}_{\phantom{0}} & \triangleq L_{(t-1,d_{t-1}(v))}(v),\\
\mathfrak{B}_1 & \triangleq L_{t-1}(v),\\
\mathfrak{B}_2 & \triangleq \bigcup_{u\in\{w\mid w\in A_{t-1}(v),d_{t-1}(w)\neq\mu\}}L_{(t-1,d_{t-1}(u))}(u).
\end{align*}
To prove $\mathfrak{A}\setminus(\mathfrak{B}_1\cup\mathfrak{B}_2)\neq\emptyset$, we will show $|\mathfrak{A}\cap(\mathfrak{B}_1\cup\mathfrak{B}_2)|\leq|\mathfrak{A}\cap\mathfrak{B}_1|+|\mathfrak{A}\cap\mathfrak{B}_2|<|\mathfrak{A}|$, as $\mathfrak{A}\setminus(\mathfrak{B}_1\cup\mathfrak{B}_2)=\mathfrak{A}\setminus(\mathfrak{A}\cap(\mathfrak{B}_1\cup\mathfrak{B}_2))$. On the other hand, recall that $L_{(t-1,d_{t-1}(v))}(v)$, or $\mathfrak{A}$ equivalently, denotes the set $\{P_{(t-1,v,d_{t-1}(v))}(x)+x\cdot\mu\mid x\in[\mu]\}$. Moreover, the value of $P_{(t-1,v,d_{t-1}(v))}(x)+x\cdot\mu$ strictly increases as $x$ increases. As a result, to find $\hat{k}\in[\mu]$, which is the smallest integer such that $P_{(t-1,v,d_{t-1}(v))}(\hat{k})+\mu\cdot{\hat{k}}$ is in $\mathfrak{A}\setminus(\mathfrak{B}_1\cup\mathfrak{B}_2)$, it suffices to bound the size of the set $\mathfrak{A}\cap(\mathfrak{B}_1\cup\mathfrak{B}_2)$. In particular, $\hat{k}$ is at most the $(|\mathfrak{A}\cap(\mathfrak{B}_1\cup\mathfrak{B}_2)|+1)$-th smallest element in $[\mu]$; in other words, $\hat{k}\leq|\mathfrak{A}\cap(\mathfrak{B}_1\cup\mathfrak{B}_2)|$. To sum up, to prove the claim, it suffices to show $|\mathfrak{A}\cap(\mathfrak{B}_1\cup\mathfrak{B}_2)| \leq \Delta/\mu+4\cdot\Delta^{1/4}$, since by then we can conclude: (a) by definition $\Delta/\mu+4\cdot\Delta^{1/4}<\mu$, thus $|\mathfrak{A}\cap(\mathfrak{B}_1\cup\mathfrak{B}_2)|<\mu=|\mathfrak{A}|$, implying $\mathfrak{A}\setminus(\mathfrak{B}_1\cup\mathfrak{B}_2)\neq\emptyset$; and (b) $\hat{k}\leq|\mathfrak{A}\cap(\mathfrak{B}_1\cup\mathfrak{B}_2)|\leq\Delta/\mu+2\cdot\Delta^{1/4}$.

We now proceed to prove $|\mathfrak{A}\cap(\mathfrak{B}_1\cup\mathfrak{B}_2)| \leq \Delta/\mu+4\cdot\Delta^{1/4}$, and we do so by bounding the size of $\mathfrak{A}\cap\mathfrak{B}_1$ and $\mathfrak{A}\cap\mathfrak{B}_2$.

Consider a vertex $v$ and a round $t\geq t^+_v+1$ with $\phi_{t-1}(v)\in I_2$. By the definition of $t^+_v$ , the definition of $L_{(t-1,d_{t-1}(v))}(v)$, and algorithm description, it holds that:
$$L_{(t-1,d_{t-1}(v))}(v)=L_{\left(t-1,d_{t^+_v}(v)\right)}(v)=L_{\left(t^+_v-1,d_{t^+_v}(v)\right)}(v).$$
As a result:
$$|\mathfrak{A}\cap\mathfrak{B}_1| = |L_{(t-1,d_{t-1}(v))}(v) \cap L_{t-1}(v)| = \left| L_{\left(t^+_v-1,d_{t^+_v}(v)\right)}(v) \cap L_{t-1}(v) \right|.$$
Observe that as time proceeds from round $t^+_v$ to round $t$, more and more neighbors of $v$ may have done the transition-out stage and start running the third phase; in other words, $L_{t-1}(v)$ may increase as $t$ increases. More precisely, we have:
\begin{align*}
|\mathfrak{A}\cap\mathfrak{B}_1| &= \left| L_{\left(t^+_v-1,d_{t^+_v}(v)\right)}(v) \cap L_{t-1}(v) \right| \\
&= \left| L_{\left(t^+_v-1,d_{t^+_v}(v)\right)}(v) \cap L_{t^+_v-1}(v) \right| + \left| L_{\left(t^+_v-1,d_{t^+_v}(v)\right)}(v) \cap \left(L_{t-1}(v) \setminus L_{t^+_v-1}(v)\right) \right| \\
&\leq \Delta/\mu + \left| L_{\left(t^+_v-1,d_{t^+_v}(v)\right)}(v) \cap \left(L_{t-1}(v) \setminus L_{t^+_v-1}(v)\right) \right| ,
\end{align*}
where the last inequality is due to the fact that $| L_{\left(t^+_v-1,d_{t^+_v}(v)\right)}(v) \cap L_{t^+_v-1}(v) |\leq\Delta/\mu$ (recall we have argued why this is the case when describing \Cref{alg:phase2-stage3}).

On the other hand, notice that:
\begin{align*}
|\mathfrak{A}\cap\mathfrak{B}_2| &= \left| L_{(t-1,d_{t-1}(v))}(v) \bigcap \left(\cup_{u\in\{w\mid w\in A_{t-1}(v),d_{t-1}(w)\neq\mu\}}L_{(t-1,d_{t-1}(u))}(u)\right) \right| \\
&\leq \sum_{u\in\{w\mid w\in A_{t-1}(v),d_{t-1}(w)\neq\mu\}}|L_{(t-1,d_{t-1}(v))}(v)\cap L_{(t-1,d_{t-1}(u))}(u)|.
\end{align*}
Recall that $L_{(t-1,d_{t-1}(v))}(v)=\{P_{(t-1,v,d_{t-1}(v))}(x)+x\cdot\mu\mid x\in[\mu]\}$. For an element to be in both $L_{(t-1,d_{t-1}(v))}(v)$ and $L_{(t-1,d_{t-1}(u))}(u)$, it must be the case that $P_{(t-1,v,d_{t-1}(u))}(x)=P_{(t-1,v,d_{t-1}(v))}(x)$ for some $x\in[\mu]$. Recall that $P_{(t-1,v,d_{t-1}(v))}(x) = (\lfloor b_{t-1}(v)/\tau \rfloor \cdot x^2 + (b_{t-1}(v)\bmod\tau)\cdot x + d_{t-1}(v))\bmod\mu$ is a polynomial of degree (at most) two defined over finite field $GF(\mu)$. Since $u\in A_{t-1}(v)$, it must be the case that $b_{t-1}(u)\neq b_{t-1}(v)$, implying $P_{(t-1,v,d_{t-1}(v))}(x)$ and $P_{(t-1,v,d_{t-1}(u))}(u)$ are two distinct polynomials of degree (at most) two. Hence, there are at most two choices of $x\in[\mu]$ satisfying $P_{(t-1,v,d_{t-1}(u))}(x)=P_{(t-1,v,d_{t-1}(v))}(x)$, implying $|L_{(t-1,d_{t-1}(v))}(v)\cap L_{(t-1,d_{t-1}(u))}(u)|\leq 2$. As a result, we conclude:
$$|\mathfrak{A}\cap\mathfrak{B}_2| \leq 2\cdot|\{u\mid u\in A_{t-1}(v),d_{t-1}(u)\neq\mu\}| \leq 2\cdot|A_{t-1}(v)|,$$
which leads to the following upper bound on $|\mathfrak{A}\cap(\mathfrak{B}_1\cup\mathfrak{B}_2)|$:
$$|\mathfrak{A}\cap(\mathfrak{B}_1\cup\mathfrak{B}_2)| \leq \Delta/\mu + \left| L_{\left(t^+_v-1,d_{t^+_v}(v)\right)}(v) \cap \left(L_{t-1}(v) \setminus L_{t^+_v-1}(v)\right) \right| + 2\cdot|A_{t-1}(v)|.$$

Observe that as time proceeds from round $t^+_v$ to round $t$, more and more neighbors of $v$ may have done the transition-out stage and start running the third phase. To bound the above expression, consider a neighbor $u$ of $v$ with $\phi_{t^+_v-1}(u)\in I_2$.

It cannot be the case that $a_{t^+_v-1}(v)>a_{t^+_v-1}(u)$, since by the definition of $t^+_v$ we have $d_{t^+_v}(v)\neq\mu$, yet by \Cref{alg:phase2-stage3} updating $d(v)$ in round $t^+_v$ requires $a_{t^+_v-1}(v)\leq a_{t^+_v-1}(u)$. If $a_{t^+_v-1}(v)<a_{t^+_v-1}(u)$, then by \Cref{alg:phase2-stage3}, vertex $u$ will not start the transition to phase three until the transition of vertex $v$ is done, thus the behavior of $u$ will not change the above upper bound of $|\mathfrak{A}\cap(\mathfrak{B}_1\cup\mathfrak{B}_2)|$ in rounds $[t^+_v,t]$, as in round $t$ we still have $\phi_t(v)\in I_2$ (meaning by the end of round $t$ vertex $v$ has not completed the transition-out stage).

Now let us focus on the case $a_{t^+_v-1}(v)=a_{t^+_v-1}(u)$. If $c_{t^+_v-1}(v)<c_{t^+_v-1}(u)$, then $u\notin A_{t^+_v-1}(v)$. Notice that by the definition of $A_{t-1}(v)$ we have $A_{t^+_v-1}(v)=A_{t-1}(v)$, thus the behavior of $u$ will not change $|A_{t-1}(v)|$ in rounds $[t^+_v,t]$. On the other hand, if indeed $u$ obtains its phase three color in some round in $[t^+_v,t]$, then by Line \ref{alg-line:phase2-stage3-k-rule} of \Cref{alg:phase2-stage3}, when $u$ chooses its color, it will avoid all colors that might be used by $v$. This means the phase three color used by $u$ will not appear in $L_{\left(t^+_v-1,d_{t^+_v}(v)\right)}(v)$, implying the behavior of $u$ will not change $| L_{\left(t^+_v-1,d_{t^+_v}(v)\right)}(v) \cap (L_{t-1}(v) \setminus L_{t^+_v-1}(v))|$ in rounds $[t^+_v,t]$. By now, we conclude that if $a_{t^+_v-1}(v)=a_{t^+_v-1}(u)$ and $c_{t^+_v-1}(v)<c_{t^+_v-1}(u)$, then the behavior of $u$ will not change the upper bound of $|\mathfrak{A}\cap(\mathfrak{B}_1\cup\mathfrak{B}_2)|$ in rounds $[t^+_v,t]$.

As a result, the only scenario that the behavior of $u$ might change the upper bound of $|\mathfrak{A}\cap(\mathfrak{B}_1\cup\mathfrak{B}_2)|$ is when $a_{t^+_v-1}(v)=a_{t^+_v-1}(u)$ and $c_{t^+_v-1}(v)\geq c_{t^+_v-1}(u)$. That is, $u\in A_{t^+_v-1}(v)$. For each such vertex $u$, observe that as it transits to the third phase in some round, $|A_t(v)|$ decreases by one, while $|L_t(v)|$ increases by one. Recall the expression of the upper bound of $|\mathfrak{A}\cap(\mathfrak{B}_1\cup\mathfrak{B}_2)|$, the above discussion implies, as vertex $u$ transits to phase three, the value of the upper bound decreases. As a result, we conclude:
\begin{align*}
|\mathfrak{A}\cap(\mathfrak{B}_1\cup\mathfrak{B}_2)| &\leq \Delta/\mu + \left| L_{\left(t^+_v-1,d_{t^+_v}(v)\right)}(v) \cap \left(L_{t-1}(v) \setminus L_{t^+_v-1}(v)\right) \right| + 2\cdot|A_{t-1}(v)|\\
&\leq \Delta/\mu + \left| L_{\left(t^+_v-1,d_{t^+_v}(v)\right)}(v) \cap \left(L_{t^+_v-1}(v) \setminus L_{t^+_v-1}(v)\right) \right| + 2\cdot \left|A_{t^+_v-1}(v)\right|\\
&\leq \Delta/\mu + 0 + 2\cdot\left(2\cdot\Delta^{1/4}\right),
\end{align*}
where the last inequality is due to \Cref{lemma:phase2-stage2-bounded-arboricity}.

This completes the proof of the claim.
\end{proof}

We are now ready to bound the time cost of the transition-out stage (i.e., \Cref{lemma:phase2-stage3-time-cost}).

\begin{proof}[Proof of \Cref{lemma:phase2-stage3-time-cost}]
For each vertex $v$, let $t^*_v$ be the smallest round number such that $\phi_{t^*_v}(v)\in I_2$ and $a_{t^*_v}(v)\in[\lambda]$ are both satisfied, and let $t^+_v$ be the smallest round number such that $\phi_{t^+_v}(v)\in I_2$ and $d_{t^*_v}(v)\neq\mu$ are both satisfied. By algorithm description, $t^*_v < t^+_v < t^{\#}_v$.

To prove the lemma, it suffices to prove the following stronger stronger claim: for every vertex $v\in V$, it holds that $t^+_v\leq r^*+1+\lambda+2(a_{t^*_v}(v)+1)$, and that $t^{\#}_v\leq r^*+2+\lambda+2(a_{t^*_v}(v)+1)$.

To prove the claim, we do an induction on the value of $a_{t^*_v}(v)\in[\lambda]$.
First consider the base case, fix a vertex $v$ with the smallest $a_{t^*_v}$ value.
Due to \Cref{lemma:phase2-stage2-time-complexity}, for vertex $v$, as well as every vertex $u\in N(v)$, we have $t^*_v\leq r^*+2+\lambda$ and $t^*_u\leq r^*+2+\lambda$. Thus in round $r^*+3+\lambda$, if $\phi_{r^*+2+\lambda}(v)\in I_2$ and $d_{r^*+2+\lambda}(v)=\mu$, then for vertex $v$, the ``if'' condition in Line \ref{alg-line:phase2-stage3-if-cond-1} and Line \ref{alg-line:phase2-stage3-if-cond-2} of \Cref{alg:phase2-stage3} will both be satisfied. Moreover, the ``if'' condition in Line \ref{alg-line:phase2-stage3-if-cond-3} of \Cref{alg:phase2-stage3} will also be satisfied in this round. As a result, by the end of round $r^*+3+\lambda$, if $\phi_{r^*+3+\lambda}(v)\in I_2$, it must be the case that $d_{r^*+3+\lambda}(v)\neq\mu$. In other words, $t^+_v\leq r^*+3+\lambda$. Apply the same argument for every vertex $u\in A_{r^*+2+\lambda}(v)\supseteq A_{r^*+3+\lambda}(v)$, it holds that $t^+_u\leq r^*+3+\lambda$. Therefore, in round $r^*+4+\lambda$, if $\phi_{r^*+3+\lambda}(v)\in I_2$, then for vertex $v$, the ``if'' condition in Line \ref{alg-line:phase2-stage3-if-cond-1} and Line \ref{alg-line:phase2-stage3-if-cond-2} of \Cref{alg:phase2-stage3} will both be satisfied. Moreover, the ``if'' condition in Line \ref{alg-line:phase2-stage3-if-cond-4} of \Cref{alg:phase2-stage3} will also be satisfied in this round. As a result, by \Cref{claim:phase2-stage3-bounded-k}, by the end of round $r^*+4+\lambda$, vertex $v$ must have obtained a color in $I_3$. This completes the proof of the base case.

Assume our claim holds for all vertices $u$ with $a_{t^*_u}(u)\leq i\in[\mu-1]$, consider a vertex $v$ with $a_{t^*_v}(v)=i+1\in[\mu]$. The proof for the inductive step generally follow the same path as in the base case. Specifically, for every vertex $u\in N(v)$ with $a_{t^*_u}(u)\leq i$, by the induction hypothesis, it must be the case that $\phi_{r^*+2+\lambda+2(i+1)}(u)\in I_3$. Thus in round $r^*+3+\lambda+2(i+1)$, every vertex $u\in N(v)$ with $\phi_{r^*+2+\lambda+2(i+1)}(u)\in I_2$ must have $a_{t^*_u}\geq i+1$. Hence, in round $r^*+3+\lambda+2(i+1)$, if $\phi_{r^*+2+\lambda+2(i+1)}(v)\in I_2$ and $d_{r^*+2+\lambda+2(i+1)}(v)=\mu$, then for vertex $v$, the ``if'' condition in Line \ref{alg-line:phase2-stage3-if-cond-1} and Line \ref{alg-line:phase2-stage3-if-cond-2} of \Cref{alg:phase2-stage3} will both be satisfied. Moreover, the ``if'' condition in Line \ref{alg-line:phase2-stage3-if-cond-3} of \Cref{alg:phase2-stage3} will also be satisfied in this round. As a result, by the end of round $r^*+3+\lambda+2(i+1)$, if $\phi_{r^*+3+\lambda+2(i+1)}(v)\in I_2$, it must be the case that $d_{r^*+3+\lambda+2(i+1)}(v)\neq\mu$. In other words, $t^+_v \leq r^*+3+\lambda+2(i+1) = r^*+1+\lambda+2((i+1)+1)$. Apply the same argument for every vertex $u\in A_{r^*+2+\lambda+2(i+1)}(v)\supseteq A_{r^*+3+\lambda+2(i+1)}(v)$, it holds that $t^+_u \leq r^*+3+\lambda+2(i+1)$. Therefore, in round $r^*+4+\lambda+2(i+1)$, if $\phi_{r^*+3+\lambda+2(i+1)}(v)\in I_2$, then for vertex $v$, the ``if'' condition in Line \ref{alg-line:phase2-stage3-if-cond-1} and Line \ref{alg-line:phase2-stage3-if-cond-2} of \Cref{alg:phase2-stage3} will both be satisfied. Moreover, the ``if'' condition in Line \ref{alg-line:phase2-stage3-if-cond-4} of \Cref{alg:phase2-stage3} will also be satisfied in this round. As a result, by \Cref{claim:phase2-stage3-bounded-k}, by the end of round $r^*+4+\lambda+2(i+1)=r^*+2+\lambda+2((i+1)+1)$, vertex $v$ must have obtained a color in $I_3$. This completes the proof of the inductive step.
\end{proof}

We conclude by showing the correctness of the transition-out stage. In particular, when a vertex $v$ finishes the transition in round $t^{\#}_v$ and obtained a color in $I_3$, that color $\phi_{t^{\#}_v}(v)$ will not conflict with any neighbor $u\in N(v)$ that also have its color $\phi_{t^{\#}_v}(u)$ in $I_3$. In other words, \Cref{lemma:phase2-stage3-proper-color} is true.

\begin{proof}[Proof of \Cref{lemma:phase2-stage3-proper-color}]
By \Cref{alg:phase2-stage3}, vertex $v$ sets $\phi_{t^{\#}_v}(v)$ as the minimum elements in:
$$L_{t^{\#}_v-1,d_{t^{\#}_v-1}(v)}(v) \setminus \left( L_{t^{\#}_v-1}(v) \bigcup \left(\cup_{u\in\left\{w\mid w\in A_{t^{\#}_v-1}(v),d_{t^{\#}_v-1}(w)\neq\mu\right\}} L_{t^{\#}_v-1,d_{t^{\#}_v-1}(u)}(u)\right) \right).$$
Consider a neighbor $u\in N(v)$. If $\phi_{t^{\#}_v-1}(u)\in I_1$, then obviously $\phi_{t^{\#}_v}(u)\notin I_3$ as the transition-stage of vertex $u$ takes at least two rounds, thus $\phi_{t^{\#}_v}(u)$ will not conflict with $\phi_{t^{\#}_v}(v)$. If $\phi_{t^{\#}_v-1}(u)\in I_3$, then by \Cref{alg:phase3}, we have $\phi_{t^{\#}_v}(u)=\phi_{t^{\#}_v-1}(u)$. Moreover, when $v$ chooses $\phi_{t^{\#}_v}(v)$ it will not consider $\phi_{t^{\#}_v-1}(u)$ as $\phi_{t^{\#}_v-1}(u)\in L_{t^{\#}_v-1}(v)$. Hence, when $\phi_{t^{\#}_v-1}(u)\in I_3$, we also have $\phi_{t^{\#}_v}(u)\neq\phi_{t^{\#}_v}(v)$. Lastly, if $\phi_{t^{\#}_v-1}(u)\in I_2$, then there are four scenarios:
\begin{itemize}
	\item \textsc{Scenario I:} Vertex $u$ has $a_{t^{\#}_v-1}(u)<a_{t^{\#}_v-1}(v)$. This scenario cannot happen, since by \Cref{alg:phase2-stage3}, vertex $v$ will only set $d(v)$ to a value other than $\mu$ after all its neighbors with smaller $a$ values have done the transition to the third phase. Therefore, if $a_{t^{\#}_v-1}(u)<a_{t^{\#}_v-1}(v)$, then in round $t^{\#}_v$ vertex $v$ must have already started the third phase, a contradiction.
	
	\item \textsc{Scenario II:} Vertex $u$ has $a_{t^{\#}_v-1}(u)>a_{t^{\#}_v-1}(v)$. By \Cref{alg:phase2-stage3}, vertex $u$ cannot complete the transition-out stage in round $t^{\#}_v$, as $a_{t^{\#}_v-1}(u)>a_{t^{\#}_v-1}(v)$. Therefore, $\phi_{t^{\#}_v}(u)\in I_2$, implying it will not conflict with the color chosen by vertex $v$.
	
	\item \textsc{Scenario III:} Vertex $u$ has $a_{t^{\#}_v-1}(u)=a_{t^{\#}_v-1}(v)$ and $u\in A_{t^{\#}_v-1}(v)$. In such scenario, if indeed $u$ finishes the transition-out stage and obtains a color in $I_3$ by the end of round $t^{\#}_v$, then this color $\phi_{t^{\#}_v}(u)\in L_{t^{\#}_v-1,d_{t^{\#}_v-1}(u)}(u)$. On the other hand, by \Cref{alg:phase2-stage3}, the initial phase three color chosen by vertex $v$, which is $\phi_{t^{\#}_v}(v)$, will not appear in $L_{t^{\#}_v-1,d_{t^{\#}_v-1}(u)}(u)$. Hence, if indeed $u$ finishes the transition-out stage and obtains a color in $I_3$ by the end of round $t^{\#}_v$, then $\phi_{t^{\#}_v}(u)\neq\phi_{t^{\#}_v}(v)$. Otherwise, if $\phi_{t^{\#}_v}(u)\in I_2$, then obviously $\phi_{t^{\#}_v}(u)\neq\phi_{t^{\#}_v}(v)$, as $\phi_{t^{\#}_v}(v)\in I_3$ by the definition of $t^{\#}_v$.
	
	\item \textsc{Scenario IV:} Vertex $u$ has $a_{t^{\#}_v-1}(u)=a_{t^{\#}_v-1}(v)$ and $u\notin A_{t^{\#}_v-1}(v)$. In such scenario, we have $v\in A_{t^{\#}_v-1}(u)$. By an analysis similar to \textsc{Scenario III} (but from the perspective of vertex $u$), we conclude that $\phi_{t^{\#}_v}(u)\neq\phi_{t^{\#}_v}(v)$.
\end{itemize}
This completes the proof of the lemma.
\end{proof}

\subsection{Proof of the main lemma of the quadratic reduction phase}\label{subsec:phase2-proof-of-main-lemma}

\begin{proof}[Proof of \Cref{lemma:phase2-property-simple}]
By \Cref{lemma:phase2-stage3-time-cost}, we know by the end of round $r^*+2+3\lambda$, every vertex have completed the quadratic reduction phase and obtained a color in $I_3$. By the definition of $I_3$, we know $\phi_{r^*+2+3\lambda}$ is a $(\Delta+O(\Delta^{3/4}\log\Delta))$-coloring.

Next, we prove for every round $t\in[r^*+1,r^*+2+3\lambda]$, the coloring $\phi_{t}$ is proper.

By \Cref{lemma:phase2-stage1-property}, we know by the end of round $r^*+1$, the coloring $\phi_{r^*+1}$ is proper. From round $r^*+2$, every vertex starts running the core stage. For every vertex $v$, by \Cref{lemma:phase2-stage2-proper-coloring}, for every round $t\in[r^*+2,t^{\#}_v-1]$, its color $\phi_t(v)$ will not conflict with any of its neighbor. Then, in round $t^{\#}_v$, when vertex $v$ chooses its phase three color, by \Cref{lemma:phase2-stage3-proper-color}, we know $\phi_{t^{\#}_v}(v)$ will also not conflict with any of its neighbor. At this point, we have proved, for every vertex $v$, for every round $t\in[r^*+1,t^{\#}_v]$, its color $\phi_t(v)$ will not conflict with any of its neighbor.

Now consider a round $t\geq t^{\#}_v+1$, and a neighbor $u\in N(v)$. If $\phi_{t-1}(u)\in I_1$, then obviously $\phi_t(u)\notin I_3$ as $u$ can not map a color from $I_1$ to $I_3$ in one round, implying $\phi_t(v)\neq\phi_t(u)$. If $\phi_{t-1}(u)\in I_2$ and $\phi_{t}(u)\in I_2$, then trivially $\phi_t(v)\neq\phi_t(u)$. If $\phi_{t-1}(u)\in I_2$ but $\phi_{t}(u)\in I_3$, then apply \Cref{lemma:phase2-stage3-proper-color} from the perspective of $u$, we still have $\phi_t(v)\neq\phi_t(u)$. If $\phi_{t-1}(u)\in I_3$, then by the standard reduction phase algorithm, in round $t$, at most one of $u,v$ will change its color, and the updated color of that vertex will not conflict with the other vertex. Once again, we have $\phi_t(v)\neq\phi_t(u)$.

By now, we can conclude, for every round $t\in[r^*+1,r^*+2+3\lambda]$, the coloring $\phi_{t}$ is proper.
\end{proof}

\section{Omitted Proofs of The Standard Reduction Phase of the Locally-iterative Algorithm}\label{sec-app:standard-reduction-phase}

\begin{proof}[Proof of \Cref{lemma:phase3-time-cost}]
By \Cref{lemma:phase2-property-simple}, by the end of round $r^*+2+3\lambda$, every vertex has a color in $I_3$, and will run the standard reduction phase algorithm in round $r^*+3+3\lambda$. Starting from round $r^*+3+3\lambda$, in each round, every vertex with the maximum color value in its one-hop neighborhood will change its color to another one in $[\Delta+1]$. That is, starting from round $r^*+3+3\lambda$, in each round, the maximum color value will be reduced by at least one. Recall that $I_3=[0,\ell_3)$ where $\ell_3=\Delta+(2\sqrt{m_3}+1)\cdot\mu$. Therefore, by the end of round $r^*+2+3\lambda+(\ell_3-(\Delta+1))$, every vertex has its color in $[\Delta+1]$.
\end{proof}

\begin{proof}[Proof of \Cref{lemma:phase3-proper-color}]
Consider two neighboring vertices $u$ and $v$, we prove the lemma by an induction on $t$. In the base case in which $t=r^*+3+3\lambda$, by \Cref{lemma:phase2-property-simple}, we have $\phi_{r^*+2+3\lambda}(v)\in I_3$, $\phi_{r^*+2+3\lambda}(u)\in I_3$, and $\phi_{r^*+2+3\lambda}(v)\neq\phi_{r^*+2+3\lambda}(u)$. By the standard reduction phase algorithm, in round $r^*+3+3\lambda$, at most one of $u,v$ will change its color, and the updated color of that vertex will not conflict with the other vertex. Hence, we have $\phi_{r^*+3+3\lambda}(u)\neq\phi_{r^*+3+3\lambda}(v)$. This completes the proof of the base case.
The inductive step can be proved by a similar argument as in the base case.
\end{proof}

\section{Omitted Details and Proofs of The Self-stabilizing Coloring Algorithm}\label{sec-app:alg-stab}

In this section, we give the complete and detailed description of the self-stabilizing algorithm, along with the proofs for the key lemmas stated in the main body of the paper.

\subsection{The Linial phase and the transition-in stage of the quadratic reduction phase}

At the beginning of a round $t$, if a vertex $v$ has its color $\phi_{t-1}(v)$ not in interval $I_2\cup I_3$, then it should run either the Linial phase or the transition-in stage of the quadratic reduction phase. Nonetheless, before proceeding, it will do error-checking to see if any of the following conditions is satisfied:
\begin{itemize}
	\item Its color collide with some neighbor.
	\item Its color $\phi_{t-1}(v)$ is not in $\left(\bigcup_{i=1}^{r^*}I_1^{(i)}\right)\cup I_2\cup I_3$ (which means $v$ should be running the first iteration of the Linial phase), but that color is not $\ell_3+\ell_2+\sum_{i=1}^{r^*}n_i+id(v)$.
\end{itemize}
If any of these conditions is satisfied, then vertex $v$ treats itself in an improper state and resets its color to $\ell_3+\ell_2+\sum_{i=1}^{r^*}n_i+id(v)$. That is, it sets $\phi_t(v)=\ell_3+\ell_2+\sum_{i=1}^{r^*}n_i+id(v)$.

Otherwise, vertex $v$ first determines which interval $I_1^{(t')}$ it is in.

If $0\leq t'< r^*$, then it computes a $\Delta$-cover-free set family $\mathcal{F}_{t'}$ as in the locally-iterative algorithm, and sets its new color as the smallest number in $S_{t'}^{(\phi_{t-1}(v))}$, excluding all elements of $S_{t'}^{(\phi_{t-1}(v))}$ for all $v$'s neighbors $u\in N(v)$ satisfying $\phi_{t-1}(u)\in I_1^{(t')}$.
Due to the $\Delta$-cover freeness of set family $\mathcal{F}_{t'}$, such element must exist.

If $t'=r^*$, then vertex $v$ transforms its color from interval $I_1$ to $I_2$, effectively running the transition-in stage of the quadratic reduction phase.

To do the transformation, vertex $v$ first constructs a $\Delta$-union-$(\Delta^{1/4}+1)$-cover-free set family $\mathcal{F}_a$. Let $q$ be a prime such that $\frac{\Delta+1}{\Delta^{1/4}+1}\log(n_{r^*}) < q\leq 2\cdot \frac{\Delta+1}{\Delta^{1/4}+1}\log(n_{r^*})$, set family $\mathcal{F}_a$ is of size $n_{r^*}$ with $[q^2]\subseteq [m_1]$ as its ground set. More specifically, for every integer $i\in[n_{r^*}]$, we associate a unique polynomial $P_i$ of degree (at most) $\log(n_{r^*})$ over finite field $GF(q)$ to it. Then $\mathcal{F}_a\triangleq \{S_a^{(\ell_3+\ell_2)},\cdots,S_a^{(\ell_3+\ell_2+n_{r^*}-1)}\}$, where $S_a^{(i)}=\{x\cdot q + P_i(x)\mid x\in [q]\}$ for every $i\in[\ell_3+\ell_2,\ell_3+\ell_2+n_{r^*})$. Since the degree of the polynomials is (at most) $\log(n^{r^*})$, the intersection of any two sets $S_a^{(\cdot)}$ contains at most $\log(n_{r^*})$ elements. Recall that every set contains $q>\frac{\Delta+1}{\Delta^{1/4}+1}\log(n_{r^*})$ elements. To cover a set $S_a^{(i)}$ for any $i\in [n_{r^*}]$, we need at least $\Delta^{1/4}+1$ other set in $\mathcal{F}_a$. Thus $\mathcal{F}_a$ is a $\Delta$-union-$(\Delta^{1/4}+1)$-cover-free set family.

Then, define two sets $N_1'(v,x)$ and $N_2'(v,x)$:
\begin{align*}
N_1'(v,x) &\triangleq \{u\mid u\in N(v), \phi_{t-1}(u)\in I_1^{(t')}, x\in S_a^{(\phi_{t-1}(u))}\},\\
N_2'(v,x) &\triangleq \{ u\mid u\in N(v), \phi_{t-1}(u) \in I_2, x+\lambda=\hat{a}_{t-1}(u)\cdot \lambda + (\hat{a}_{t-1}(u)+ \tilde{a}_{t-1}(u))\bmod \lambda  \}.
\end{align*}

Let $\hat{x}$ be the smallest element in $S_a^{(\phi_{t-1}(v)-\ell_2-\ell_3)}$ satisfying $|N_1'(v,\hat{x})\cup N_2'(v,\hat{x})|\leq\Delta^{1/4}$, vertex $v$ then assigns $a(v)=\hat{x}+\lambda \in[\lambda^2]$.
Later in the analysis (particularly, in the proof of \Cref{lemma:self-stabilizing-correctness-I2-large-a}), via a counting argument, we will show such $\hat{x}$ must exist when there are no errors in the system.
Then, vertex $v$ computes $b(v)$ using the same method as in the locally-iterative algorithm. In particular, vertex $v$ sets value $b(v)$ as the smallest elements in the following set:
$$S_b^{(\phi_{t-1}(v))} \setminus \left( \left(\cup_{u\in N_1'(v,\hat{x})}S_b^{(\phi_{t-1}(u))}\right) \bigcup \{b_{t-1}(u)\mid u\in N_2'(v,\hat{x})\} \right).$$
Again, later in the analysis (particularly, in the proof of \Cref{lemma:self-stabilizing-correctness-I2-large-a}), we will argue the above set is non-empty when there are no errors in the system.

Lastly, vertex $v$ sets $c(v)=0$ and $d(v)=\mu$. It also sets $T_v[u]$ to $1$ if its neighbor $u$ is in the set $N_1'(v,\hat{x})\cup N_2'(x,\hat{x})$, otherwise $T_v[u]=0$. We recall that the orientation of edge $(u,v)$ is determined by $T_v[u]$ and $T_u[v]$: vertex $v$ points to vertex $u$ if and only if $T_v[u]=1$.

The pseudocode of the Linial phase and the transition-in stage of the quadratic reduction phase in the self-stabilizing setting is given in \Cref{alg:self-stabilized-phase1-to-phase2-stage1}.

\begin{algorithm}[t!]
\caption{The Linial phase and the transition-in stage at vertex $v$ in round $t$}\label{alg:self-stabilized-phase1-to-phase2-stage1}
\begin{algorithmic}[1]
\State Send $\langle \phi_{t-1}(v), T_v[u] \rangle$ to neighbor $u\in N(v)$, where $T_v[u]$ is the entry corresponds to $u$ in $T_v$.
\If {($\phi_{t-1}(v)\notin I_2 \cup I_3$)}
	\If{(($\exists u\in N(v), \phi_{t-1}(v) = \phi_{t-1}(u)$) \Comment{Error-checking.}
		\Statex \hspace{\algorithmicindent}\quad \textbf{or} ($\phi_{t-1}(v)\geq \ell_3+\ell_2+\sum_{i=1}^{r^*}n_i$ \textbf{and} $\phi_{t-1}(v)\neq \ell_3+\ell_2+\sum_{i=1}^{r^*}n_i+id(v)$))}
		\State $\phi_t(v)\gets \ell_3+\ell_2+\sum_1^{r^*} n_i+id(v)$.
	\Else
		\State Determine the interval $I_1^{(t^{\prime})}$ that  $\phi_{t-1}(v)$ is in.
		\If{($0 \leq t'< r^*$)} \Comment{Run the Linial phase.}
			\State $\phi_t(v)\gets\min S_{t'}^{(\phi_{t-1}(v))}\setminus\bigcup_{u\in N(v) \text{ and }\phi_{t-1}(u) \in I_1^{(t^{\prime})}} S_{t'}^{(\phi_{t-1}(u))}$.
		\Else \Comment{Run the transition-in stage.}
			\For {(every element $x\in S_a^{(\phi_{t-1}(v))}$)}
				\State \begin{small}$N_1'(v,x) \gets \left\{u \mid u\in N(v),\phi_{t-1}(u) \in I_1^{(t')}, x\in S_a^{(\phi_{t-1}(u))} \right\}$.\end{small}
				\State \begin{small}$N_2'(v,x) \gets \{ u\mid u\in N(v), \phi_{t-1}(u) \in I_2, x+\lambda=\hat{a}_{t-1}(u)\cdot \lambda + (\hat{a}_{t-1}(u)+ \tilde{a}_{t-1}(u))\bmod \lambda  \}$.\end{small}
			\EndFor
		\State $\hat{x}\gets\min\left\{ x \mid x\in S_a^{(\phi_{t-1}(v))}, \left|N_1'(v,x)\cup N_2'(v,x) \right|\leq\Delta^{1/4} \right\}$.
		\State $a_{t}(v)\gets\hat{x}+\lambda$.
		\State $b_t(v) \gets \min S_b^{(\phi_{t-1}(v))} \setminus \left( \left(\cup_{u\in N_1'(v,\hat{x})}S_b^{(\phi_{t-1}(u))}\right) \bigcup \{b_{t-1}(u)\mid u\in N_2'(v,\hat{x})\} \right)$.
		\State $c_{t}(v)\gets 0$, $d_{t}(v)\gets\mu$.
		\State $\phi_{t}(v) \gets \ell_3 + \langle a_{t}(v), b_{t}(v), c_{t}(v), d_{t}(v) \rangle$.
		\State Initialize $T_v[u]$ to $0$ for all $u\in N(v)$.
		\For{every element $u\in N_1'(v,\hat{x})\cup N_2'(v,\hat{x})$}
			\State $T_v[u]\gets 1$.
		\EndFor
		\EndIf
	\EndIf
\EndIf
\end{algorithmic}
\end{algorithm}

\subsection{The core stage of the quadratic reduction phase}

\begin{algorithm}[t!]
\caption{The core stage of the quadratic reduction phase at vertex $v$ in round $t$}\label{alg:self-stabilizing-phase2-stage2}
\begin{algorithmic}[1]
\State Send $\langle \phi_{t-1}(v), T_v[u] \rangle$ to neighbor $u\in N(v)$, where $T_v[u]$ is the entry corresponds to $u$ in $T_v$.
\If{($\phi_{t-1}(v)\in I_2$ \textbf{and} $a_{t-1}(v)\geq\lambda$)}
	\If{(($\exists u\in N(v), \phi_{t-1}(u)\in I_2, a_{t-1}(v)=a_{t-1}(u), b_{t-1}(u)=b_{t-1}(v)$) \Comment{Error-checking.}
	\Statex \hspace{\algorithmicindent}\quad \textbf{or} ($\exists u\in N(v), \phi_{t-1}(u)\in I_2,  a_{t-1}(v)=a_{t-1}(u), T_u[v]+T_v[u]=0$)
	\Statex \hspace{\algorithmicindent}\quad \textbf{or} ($|\{u\mid u\in N(v), T_v[u]=1\}|>\Delta^{1/4}$)
	\Statex \hspace{\algorithmicindent}\quad \textbf{or} ($\exists u\in N(v)\cup \{v\}, \phi_{t-1}(u)\in I_2, a_{t-1}(u)\geq\lambda, b_{t-1}(u)\notin [m_2]$))}
		\State $\phi_t(v)\gets \ell_3+\ell_2+\sum_1^{r^*}{n_i}+id(v)$.
	\Else \Comment{Run the core stage.}
		\State $M_t(v)\gets\left\{u\mid u\in N(v),\phi_{t-1}(u)\in I_2,\hat{a}_{t-1}(u)\neq \hat{a}_{t-1}(v),\tilde{a}_{t-1}(u)= \tilde{a}_{t-1}(v)\right\}$. \label{alg-line:self-stabilizing-phase2-stage2-start}
		\State $\overline{M}_t(v)\gets\left\{u\mid u\in N(v),\phi_{t-1}(u)\in I_2,\hat{a}_{t-1}(u)=\hat{a}_{t-1}(v),\tilde{a}_{t-1}(u)=\tilde{a}_{t-1}(v), T_v[u]=1\right\}$.
		\If{($|M_t(v)|\leq\Delta^{1/4}$)}
			\State $M'_{t}(v) \gets \left\{ u\mid u\in N(v),\phi_{t-1}(u)\in I_2,\hat{a}_{t-1}(u)=0,\tilde{a}_{t-1}(u)=\tilde{a}_{t-1}(v) \right\}$.
			\State $\overline{M}'_{t}(v) \gets (M_t(v)\cup\overline{M_t}(v))\setminus M_t'(v)$.
			\State $a_t(v) \gets \tilde{a}_{t-1}(v)$.
			\State $b_t(v) \gets \min S_c^{(a_{t-1}(v)\cdot m_2 + b_{t-1}(v))}\setminus \left( \{b_{t-1}(u)\mid u\in M'_{t}(v)\} \cup \bigcup_{u\in \overline{M}'_{t}(v)} S_c^{(a_{t-1}(u)\cdot m_2+b_{t-1}(u))} \right)$.
			\State $\phi_{t}(v) \gets \ell_3 + \langle a_t(v), b_t(v), c_t(v), d_t(v) \rangle$.
			\State Initialize $T_v[u]$ to $0$ for all $u\in N(v)$.
			\For{every elements in $M_t(v)\cup \overline{M_t}(v)$}
				\State $T_v[u]\gets 1$.
			\EndFor
		\Else
			\State $a_t(v) \gets \hat{a}_{t-1}(v)\cdot\lambda + ((\hat{a}_{t-1}(v)+\tilde{a}_{t-1}(v))\bmod\lambda)$.\label{alg-line:intermediate-a-update-rule}
			\State $\phi_{t}(v) \gets \ell_3 + \langle a_t(v), b_t(v), c_t(v), d_t(v) \rangle$.
		\EndIf \label{alg-line:self-stabilizing-phase2-stage2-end}
	\EndIf
\EndIf
\end{algorithmic}
\end{algorithm}

Recall that in the locally-iterative coloring algorithm, during the quadratic reduction phase, for a vertex $v$ with $\phi(v)\in I_2$, if $a(v)$ is already in $[\lambda]$, then its core stage is done, and may proceed to the transition-out stage. This is still the case in the self-stabilizing settings: a vertex runs the core stage only if it finds $a(v)\geq\lambda$. (See \Cref{alg:self-stabilizing-phase2-stage2} for the pseudocode.) Moreover, in case $a(v)\geq\lambda$, before proceeding, vertex $v$ will do error-checking to see if any of the following conditions is satisfied:
\begin{itemize}
	\item There exists a neighbor $u$ of $v$ such that $a(v)=a(u)$ and $b(v)=b(u)$, effectively implying $u$ and $v$ have identical color.
	\item There exists a neighbor $u$ of $v$ such that $a(v)=a(u)$ yet $T_v[u]+T_u[v]=0$, implying that the orientation of edge $(u,v)$ is still undetermined when $a(v)=a(u)$.
	\item The number of neighbors $u\in N(v)$ with $T_v[u]=1$ is larger than $\Delta^{1/4}$, violating the bounded arboricity assumption during the core stage.
	\item There exists a vertex $u\in N(v)\cup \{v\}$ has its color in $I_2$ and $a(u)\geq\lambda$, yet $b(u)\geq m_2$, violating the range of $b$ values during the core stage.
\end{itemize}
If any of these conditions is satisfied, then vertex $v$ treats itself in an improper state and resets its color to $\ell_3+\ell_2+\sum_{i=1}^{r^*}n_i+id(v)$. Otherwise, it executes Line \ref{alg-line:self-stabilizing-phase2-stage2-start} to Line \ref{alg-line:self-stabilizing-phase2-stage2-end} of \Cref{alg:self-stabilizing-phase2-stage2} to try to reduce its $a$ value from $[\lambda,\lambda^2)$ to $[0,\lambda)$.

The procedure we use in the self-stabilizing settings to transform a $\Delta^{1/4}$-arbdefective $\lambda^2$-coloring to a $(2\cdot\Delta^{1/4})$-arbdefective $\lambda$-coloring is almost identical to the one we use in the locally-iterative settings. The only difference is that we alter the definition of $\overline{M}_t(v)$ by adding an extra constraint $T_v[u]=1$, which means the orientation of edge $(u,v)$ is $v$ pointing to $u$. This ensures that $|\overline{M}_t(v)|$ is still bounded by $\Delta^{1/4}$.

Once the reduction occurs in some round $t$, vertex $v$ obtains an $a_t(v)\in[\lambda]$, and updates $b_t(v)$ to differentiate itself from the neighbors that may have colliding $a$ value.
By an analysis similar to the locally-iterative setting, such $b_t(v)$ must exist.
It also sets $T_v[u]=1$ if and only if $u\in M_t(v)\cup \overline{M}_t(v)$, recording the orientation of such edges.

\subsection{The transition-out stage of the quadratic reduction phase}

\begin{algorithm}[t!]
\caption{The transition-out stage of the quadratic reduction phase at vertex $v$ in round $t$}\label{alg:self-stabilizing-phase2-stage3}
\begin{algorithmic}[1]
\State Send $\langle \phi_{t-1}(v), T_v[u] \rangle$ to neighbor $u\in N(v)$, where $T_v[u]$ is the entry corresponds to $u$ in $T_v$.
\If{($\phi_{t-1}(v)\in I_2$ \textbf{and} $a_{t-1}(v)<\lambda$)}
	\If{(($\exists u\in N(v), \phi_{t-1}(u)\in I_2,a_{t-1}(v)=a_{t-1}(u),b_{t-1}(u)=b_{t-1}(v)$) \Comment{Error-checking.}
	\Statex \hspace{\algorithmicindent}\quad \textbf{or} ($\exists u\in N(v), \phi_{t-1}(u)\in I_2, a_{t-1}(v)=a_{t-1}(u), T_u[v]+T_v[u]=0$)
	\Statex \hspace{\algorithmicindent}\quad \textbf{or} ($|\{u\mid u\in N(v), T_v[u]=1\}|>2\cdot \Delta^{1/4}$))}
		\State $\phi_t(v)\gets \ell_3+\ell_2+\sum_1^{r^*} n_i+id(v)$.
		\Else \Comment{Run the transition-out stage.}
			\If{($\forall u\in N(v)$, either ($\phi_{t-1}(u)\in I_2$ and $a_{t-1}(v) \leq a_{t-1}(u)<\lambda$) or ($\phi_{t-1}(u)\in I_3$))}\label{alg-line:self-stabilizing-phase2-stage3-if-cond-2}\label{alg-line:self-stabilizing-phase2-stage3-start}
				\State $L_{t-1}(v)\gets\{\phi_{t-1}(u)\mid u\in N(v),\phi_{t-1}(u)\in I_3\}$.
				\State $A_{t-1}(v)\gets\{u\mid u\in N(v),\phi_{t-1}(u)\in I_2, a_{t-1}(u)=a_{t-1}(v), T_v[u]=1\}$.
				\For{(every $i\in[\mu]$)}
					\State $L_{(t-1,i)}(v)\gets\{ (\lfloor{b_{t-1}(v)/\tau}\rfloor\cdot{x^2}+(b_{t-1}(v)\bmod\tau)\cdot{x}+i)\bmod\mu + x\cdot\mu \mid x\in[\mu] \}$.
				\EndFor
				\If{($d_{t-1}(v)= \mu$)}\label{alg-line:self-stabilizing-phase2-stage3-if-cond-3}
					\State $d_t(v)\gets$ the integer $\hat{i}$ in $[\mu]$ that minimizes $|L_{(t-1,\hat{i})}(v)\cap L_{t-1}(v)|$.\label{alg-line:self-stabilizing-phase2-stage3-d-rule}
					\State $\phi_t(v) \gets \ell_3 + \langle a_{t}(v),b_{t}(v),c_{t}(v),d_{t}(v)\rangle$.
				\Else
					\State $L_{t-1}'(v)\gets\{\phi_{t-1}(u)\mid u\in N(v), \phi_{t-1}(u)\in I_3, T_v[u]=0\}$.
					\If{($|L_{t-1}'(v)\cap L_{(t-1,d_{t-1}(v))}|>\Delta/\mu$)}\label{alg-line:self-stabilizing-phase2-stage3-proper-d}\label{alg-line:self-stabilizing-phase2-stage3-if-cond-4}
						\State $d_t=\mu$.\label{alg-line:self-stabilizing-phase2-stage3-reset-d}
						\State $\phi_t(v) \gets \ell_3 + \langle a_{t}(v),b_{t}(v),c_{t}(v),d_{t}(v)\rangle$.
					\ElsIf{(every $u\in A_{t-1}(v)$ has $d_{t-1}(u)\neq\mu$)}\label{alg-line:self-stabilizing-phase2-stage3-if-cond-5}
						\State $\phi_{t}(v)\gets \min L_{(t-1,d_{t-1}(v))}(v) \setminus \left( L_{t-1}(v) \bigcup \left(\cup_{u\in A_{t-1}(v)} L_{(t-1,d_{t-1}(u))}(u)\right) \right)$.
					\EndIf
				\EndIf
			\EndIf
	   \EndIf \label{alg-line:self-stabilizing-phase2-stage3-end}
\EndIf
\end{algorithmic}
\end{algorithm}

At the beginning of a round $t$, if vertex $v$ has color $\phi_{t-1}(v)\in I_2$ and $a(v)\in[\lambda]$, then it is in the transition-out stage. Once again, it does error-checking before proceeding. (See \Cref{alg:self-stabilizing-phase2-stage3} for the pseudocode.) Specficially, vertex $v$ checks if any of the following conditions is satisfied:
\begin{itemize}
	\item There exists a neighbor $u$ of $v$ such that $a(v)=a(u)$ and $b(v)=b(u)$, effectively implying $u$ and $v$ have identical color.
	\item There exists a neighbor $u$ of $v$ such that $a(v)=a(u)$ yet $T_v[u]+T_u[v]=0$, implying that the orientation of edge $(u,v)$ is still undetermined when $a(v)=a(u)$.
	\item The number of neighbors $u\in N(v)$ with $T_v[u]=1$ is larger than $2\cdot\Delta^{1/4}$, violating the bounded arboricity assumption during the transition-out stage.
\end{itemize}
If any of these conditions is satisfied, then vertex $v$ treats itself in an improper state and resets its color to $\ell_3+\ell_2+\sum_{i=1}^{r^*}n_i+id(v)$. Otherwise, it executes Line \ref{alg-line:self-stabilizing-phase2-stage3-start} to Line \ref{alg-line:self-stabilizing-phase2-stage3-end} of \Cref{alg:self-stabilizing-phase2-stage3} to transform its color from $I_2$ to $I_3$. The transformation is similar to the transition-out stage of the locally-iterative algorithm, except that we redefine $A_t(v)\triangleq \{u\mid u\in N(v),\phi_{t-1}(u)\in I_2, a_{t-1}(u)=a_{t-1}(v), T_v[u]=1\}$, replacing the constraint on $c$ values with a constraint on $T_v$.

More specifically, in a round $t$ in the transition-out stage of the self-stabilizing algorithm, for a vertex $v$ in proper state with $\phi_{t-1}(v)\in I_2$ and $a_{t-1}(v)\in [\lambda]$, if every $u\in N(v)$ satisfies either ``$\phi_{t-1}(u)\in I_2$ and $a_{t-1}(v)\leq a_{t-1}(u)<\lambda$'', or ``$\phi_{t-1}(u)\in I_3$'', then it is ready to transform from interval $I_2$ to $I_3$.
In such scenario, if $d_{t-1}(v)=\mu$, then it updates $d(v)$ in the same manner as in the locally-iterative algorithm.
Otherwise, if $d_{t-1}(v)\neq\mu$, then vertex $v$ makes sure $d_{t-1}(v)$ is proper for further operations by examining whether $|L_{t-1}'(v)\cap L_{(t-1,d_{t-1}(v))}|\leq\Delta/\mu$ is satisfied, where $L'_{t-1}(v)\triangleq \{\phi_{t-1}(u)\mid u\in N(v), \phi_{t-1}(v)\in I_3, T_v[u]=0\}$.
If $|L_{t-1}'(v)\cap L_{(t-1,d_{t-1}(v))}|\leq\Delta/\mu$, then vertex $v$ finds a color in $I_3$ by first finding the smallest integer $\hat{k}\in[\mu]$ satisfying: $$P_{(t-1,v,d_{t-1}(v))}(\hat{k}) + \mu\cdot{\hat{k}}~~\in~~L_{(t-1,d_{t-1}(v))}(v) \setminus \left( L_{t-1}(v) \bigcup \left(\cup_{u\in A_{t-1}(v)} L_{(t-1,d_{t-1}(u))}(u)\right) \right),$$
and then sets $\phi_t(v)=P_{(t-1,v,d_{t-1}(v))}(\hat{k}) + \mu\cdot{\hat{k}}$.
Otherwise, it resets its $d$ value to $\mu$, so that later it can obtain a proper $d(v)$.

Before proceeding to the next part, we note that once there are no errors occurring in the system: (1) the above mentioned $\hat{k}$ must exist, which is proved in the following claim; and (2) the above mentioned mechanism of resetting of $d$ occurs at most once for each vertex, which will be shown in the proof of \Cref{lemma:self-stabilizing-time-complexity-I2-part2}.

\begin{claim}\label{claim:self-stabilizing-phase2-stage3-bounded-k}
Consider a round $t$, consider a vertex $v$ that passes the error-checking procedure at the beginning of round $t$, and satisfies ``$\phi_{t-1}(v)\in I_2$, $a_{t-1}(v)\in [\lambda]$ and $d_{t-1}(v)\neq \mu$''. If no error occurs in round $t$ and $|L_{(t-1,d_{t-1}(v))}(v)\cap L'_{t-1}(v)|\leq \Delta/\mu$, then let $\hat{k}\in[\mu]$ be the smallest integer satisfying
$$P_{(t-1,v,d_{t-1}(v))}(k_t) + \mu\cdot{k_t} \in L_{(t-1,d_{t-1}(v))}(v) \setminus \left( L_{t-1}(v) \bigcup \left(\cup_{u\in \{w\mid w\in  A_{t-1}(v), d_{t-1}(w)\neq \mu \}} L_{(t-1,d_{t-1}(u))}(u)\right) \right),$$
it holds that
$$0\leq\hat{k}\leq \Delta/\mu +4\cdot\Delta^{1/4}.$$
\end{claim}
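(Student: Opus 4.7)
The plan is to mirror the structure of the proof of Claim \ref{claim:phase2-stage3-bounded-k}, replacing the arguments that relied on ``clean'' transition-in/core-stage invariants by the corresponding error-checking invariants available in the self-stabilizing algorithm. Let $\mathfrak{A} \triangleq L_{(t-1,d_{t-1}(v))}(v)$, $\mathfrak{B}_1 \triangleq L_{t-1}(v)$, and $\mathfrak{B}_2 \triangleq \bigcup_{u\in\{w\in A_{t-1}(v):d_{t-1}(w)\neq\mu\}} L_{(t-1,d_{t-1}(u))}(u)$. As before, $\mathfrak{A}$ is the image of the monotone-in-$x$ map $x \mapsto P_{(t-1,v,d_{t-1}(v))}(x)+x\mu$ over $x\in[\mu]$, so it suffices to show $|\mathfrak{A}\cap(\mathfrak{B}_1\cup\mathfrak{B}_2)| \leq \Delta/\mu+4\Delta^{1/4}$: this upper bound is strictly less than $\mu=|\mathfrak{A}|$ (hence $\hat{k}$ exists), and it directly bounds $\hat{k}$.

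For $|\mathfrak{A}\cap\mathfrak{B}_1|$, I would split $L_{t-1}(v)$ into $L'_{t-1}(v)$ and the remainder $R \triangleq \{\phi_{t-1}(u) : u\in N(v), \phi_{t-1}(u)\in I_3,\ T_v[u]=1\}$. The hypothesis of the claim gives $|\mathfrak{A}\cap L'_{t-1}(v)| \leq \Delta/\mu$ directly, and the trivial bound $|\mathfrak{A}\cap R| \leq |R|$ together with the vertices contributing to $R$ all having $T_v[u]=1$ reduces the task to counting the number of such neighbors. The error-checking procedure at the beginning of round $t$ guarantees $|\{u\in N(v):T_v[u]=1\}| \leq 2\Delta^{1/4}$; writing $S \triangleq \{u\in N(v):\phi_{t-1}(u)\in I_3, T_v[u]=1\}$, so that $|R| \leq |S|$, and noting that $A_{t-1}(v)$ (which requires $\phi_{t-1}(u)\in I_2$) is disjoint from $S$ while both sit inside $\{u\in N(v):T_v[u]=1\}$, we obtain the crucial inequality $|S|+|A_{t-1}(v)| \leq 2\Delta^{1/4}$. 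This yields $|\mathfrak{A}\cap\mathfrak{B}_1| \leq \Delta/\mu + |S|$.

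For $|\mathfrak{A}\cap\mathfrak{B}_2|$, the key step is to show that for every $u\in A_{t-1}(v)$ with $d_{t-1}(u)\neq\mu$, the two degree-(at most)-$2$ polynomials $P_{(t-1,v,d_{t-1}(v))}$ and $P_{(t-1,u,d_{t-1}(u))}$ over $GF(\mu)$ are \emph{distinct}, so that their graphs intersect in at most two points and $|\mathfrak{A}\cap L_{(t-1,d_{t-1}(u))}(u)|\leq 2$. Distinctness follows because $u\in A_{t-1}(v)$ implies $a_{t-1}(u)=a_{t-1}(v)$, and the first bullet of the error-checking procedure would have reset $v$ if in addition $b_{t-1}(u)=b_{t-1}(v)$; hence $b_{t-1}(u)\neq b_{t-1}(v)$, and the coefficients $(\lfloor b/\tau\rfloor, b\bmod\tau)$ of the two polynomials differ. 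Summing over $u\in A_{t-1}(v)$ gives $|\mathfrak{A}\cap\mathfrak{B}_2|\leq 2|A_{t-1}(v)|$.

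Combining the two bounds with the inequality $|S|+|A_{t-1}(v)|\leq 2\Delta^{1/4}$ (so $|S|+2|A_{t-1}(v)|\leq 4\Delta^{1/4}$) yields $|\mathfrak{A}\cap(\mathfrak{B}_1\cup\mathfrak{B}_2)| \leq \Delta/\mu+|S|+2|A_{t-1}(v)| \leq \Delta/\mu+4\Delta^{1/4}$, from which existence and the quantitative bound on $\hat{k}$ follow. The main conceptual subtlety, and the principal departure from the proof of Claim \ref{claim:phase2-stage3-bounded-k}, is that in the self-stabilizing setting we can no longer appeal to the ``pigeonhole at the moment $d(v)$ was chosen'' argument, since the adversary may have tampered with $d_{t-1}(v)$ or with neighbors' phase-three colors in the meantime; the claim's hypothesis $|\mathfrak{A}\cap L'_{t-1}(v)|\leq \Delta/\mu$ is precisely the invariant that the error-checking in Lines \ref{alg-line:self-stabilizing-phase2-stage3-proper-d}--\ref{alg-line:self-stabilizing-phase2-stage3-reset-d} of \Cref{alg:self-stabilizing-phase2-stage3} is designed to enforce, and the distinctness of the neighboring polynomials likewise rests on the corresponding error-check rather than on a dynamic invariant.
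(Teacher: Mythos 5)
Your proposal is correct and follows essentially the same route as the paper's proof: the same decomposition into $\mathfrak{A}$, $\mathfrak{B}_1=L'_{t-1}(v)\cup\{\phi_{t-1}(u):u\in N(v),\phi_{t-1}(u)\in I_3,T_v[u]=1\}$, and $\mathfrak{B}_2$, the same use of the error-check bound $|\{u:T_v[u]=1\}|\leq 2\Delta^{1/4}$ to control $|S|+|A_{t-1}(v)|$, and the same distinct-quadratics argument giving $|\mathfrak{A}\cap\mathfrak{B}_2|\leq 2|A_{t-1}(v)|$. The only difference is that you spell out the disjointness of $S$ and $A_{t-1}(v)$ inside $\{u:T_v[u]=1\}$, which the paper leaves implicit.
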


\begin{proof}
For simplicity, let $\mathfrak{A}$ denote set $L_{(t-1,d_{t-1}(v))}(v)$, let $\mathfrak{B}_1$ denote set $L_{t-1}(v)$ and let $\mathfrak{B}_2$ denote set $\bigcup_{u\in \{w\mid w\in  A_{t-1}(v), d_{t-1}(w)\neq \mu \}} L_{(t-1,d_{t-1}(u))}(u)$. Similar to the proof of \Cref{claim:phase2-stage3-bounded-k}, we bound $\hat{k}$ by bounding $|\mathfrak{A}\cap(\mathfrak{B}_1\cup \mathfrak{B}_2)|$.

Let $\mathfrak{B}_{1,0}$ denote $L_{t-1}'(v)$ and let $\mathfrak{B_{1,1}}$ denote $\mathfrak{B}_1\setminus\mathfrak{B}_{1,0}$. That is, $\mathfrak{B}_{1,1}\triangleq\{\phi_{t-1}(u)\mid u\in N(v), \phi_{t-1}(u)\in I_3, T_v[u]=1\}$. Since vertex $v$ passes the error-checking at the beginning of round $t$, we have $|\mathfrak{B}_{1,1}|+|A_{t-1}(v)|\leq 2\cdot \Delta^{1/4}$. Moreover, for every neighbor $u$ of $v$ with $a_{t-1}(u)=a_{t-1}(v)$, we have $b_{t-1}(u)\neq b_{t-1}(v)$, which leads to $|\mathfrak{A}\cap \mathfrak{B}_2|\leq 2\cdot |A_{t-1}(v)|$. Hence, we have:
\begin{align*}
|\mathfrak{A}\cap(\mathfrak{B}_1\cup \mathfrak{B}_2)| &= |\mathfrak{A}\cap(\mathfrak{B}_{1,0}\cup \mathfrak{B}_{1,1} \cup \mathfrak{B}_2)| \\
&\leq |\mathfrak{A}\cap\mathfrak{B}_{1,0}| +|\mathfrak{A}\cap\mathfrak{B}_{1,1}| + |\mathfrak{A}\cap\mathfrak{B}_{2}| \\
&\leq \Delta/\mu+ |\mathfrak{B}_{1,1}| + 2\cdot |A_{t-1}(v)| \\
&\leq \Delta/\mu+4\cdot \Delta^{1/4},
\end{align*}
which implies $0\leq \hat{k}\leq \Delta/\mu+4\cdot \Delta^{1/4}$.
\end{proof}

\subsection{The standard reduction phase}

For a vertex $v$ with its color in $I_3$, it considers itself in the standard reduction phase, whose error-checking procedure is very simple: if the color of itself collides with any neighbor, then it resets $\phi(v)$ to $\ell_3+\ell_2+\sum_1^{r^*} n_i+id(v)$. Otherwise, vertex $v$ considers itself in a proper state, and runs the standard reduction procedure described in the locally-iterative settings: if all neighbors of $v$ have colors in $I_3$, and if $v$ has the maximum color value in its inclusive one-hop neighborhood, then $v$ sets its color to be the minimum value in $[\Delta+1]$ that has not been used by any of its neighbors yet. The pseudocode of the standard reduction phase in the self-stabilizing setting is given in \Cref{alg:self-stabilizing-phase3}.

\begin{algorithm}[t!]
\caption{The standard reduction phase at vertex $v$ in round $t$}\label{alg:self-stabilizing-phase3}
\begin{algorithmic}[1]
\State Send $\langle \phi_{t-1}(v), T_v[u] \rangle$ to neighbor $u\in N(v)$, where $T_v[u]$ is the entry corresponds to $u$ in $T_v$.
\If{($\phi_{t-1}(v)\in I_3$)}
	\If{($\exists u\in N(v)$ with $\phi_{t-1}(u)=\phi_{t-1}(v)$)} \Comment{Error-checking.}
		\State $\phi_t(v)\gets \ell_3+\ell_2+\sum_1^{r^*} n_i+id(v)$.
	\Else \Comment{Run the standard reduction phase.}
		\If {(for every $u\in N(v)$ it holds that $\phi_{t-1}(u)\in I_3$)}
			\If {(for every $u\in N(v)$ it holds that $\phi_{t-1}(u)< \phi_{t-1}(v)$)}
				\State $\phi_t(v)\gets\min([\Delta+1]\setminus \{\phi_{t-1}(u)\mid u\in N(v)\})$.
			\EndIf
		\EndIf
	\EndIf
\EndIf
\end{algorithmic}
\end{algorithm}

\subsection{Algorithm analysis}

We now argue the correctness and the stabilization time of our algorithm.

\subsubsection{Correctness}

Recall that if $T_0$ is the last round in which the adversary corrupts vertices' states, our algorithm guarantees, the error-checking procedure will detect any anomalies at the beginning of round $T_0+1$, and resets the colors of those vertices. Moreover, staring from round $T_0+2$, the error-checking procedure will always pass, allowing the algorithm to make progress without disruption.

This property is summarized in \Cref{lemma:self-stabilizing-correctness}. To prove the it, we divide vertices into different categories according their their color values at the end of round $T_0+1$. We first consider vertices with color values in $I_1$ by the end of round $T_0+1$.

\begin{lemma}\label{lemma:self-stabilizing-correctness-I1}
If $T_0$ is the last round in which the adversary makes any changes to the RAM areas of vertices, then for every round $t\geq T_0+1$, for every vertex $v$ with $\phi_{t}(v)\in I_1$, the error-checking procedure will not reset vertex $v$'s color in the next round.
\end{lemma}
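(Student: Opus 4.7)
The plan is to analyze the two triggers of the error-checking procedure for vertex $v$ at the beginning of round $t+1$ separately, exploiting the disjointness of the sub-intervals $I_1^{(0)},I_1^{(1)},\ldots,I_1^{(r^*)}$ and the $\Delta$-cover-freeness of $\mathcal{F}_{i-1}$.

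I would first establish a structural invariant by induction on $t\geq T_0+1$: for every vertex $u$, if $\phi_t(u)\in I_1^{(0)}$ then $\phi_t(u)=\ell_3+\ell_2+\sum_{i=1}^{r^*}n_i+id(u)$, i.e., $u$'s canonical initial color. The key observation is that in every round $t\geq T_0+1$ the only mechanism that can deposit an $I_1^{(0)}$-color into vertex $u$ is a reset triggered by the error-check of \Cref{alg:self-stabilized-phase1-to-phase2-stage1}: resetting writes exactly this canonical value, while Linial iteration $t'=0$ outputs a color in $I_1^{(1)}$, transition-in outputs a color in $I_2$, and the core stage, transition-out stage, and standard reduction phase all output colors in $I_2\cup I_3$. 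The induction then proceeds uniformly, since after $T_0$ the adversary no longer interferes.

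With the invariant in hand, fix $t\geq T_0+1$ and $v$ with $\phi_t(v)\in I_1$. If $\phi_t(v)\in I_1^{(0)}$, the invariant gives $\phi_t(v)=\ell_3+\ell_2+\sum_{i=1}^{r^*}n_i+id(v)$, immediately defeating the second (``ill-formed initial color'') error-check; and any neighbor $u$ with $\phi_t(u)=\phi_t(v)$ would also satisfy $\phi_t(u)\in I_1^{(0)}$, so the invariant would force $id(u)=id(v)$, contradicting uniqueness of identifiers and defeating the collision check. If instead $\phi_t(v)\in I_1^{(i)}$ for some $i\geq 1$, the second error-check is vacuous, and for the collision check one notes that $v$ cannot have been reset in round $t$ (else $\phi_t(v)\in I_1^{(0)}$), so $v$ executed Linial iteration $t'=i-1$ with $\phi_{t-1}(v)\in I_1^{(i-1)}$. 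Any neighbor $u$ with $\phi_t(u)=\phi_t(v)\in I_1^{(i)}$ must, by the same ``only this path produces $I_1^{(i)}$-colors'' argument, have executed the same Linial iteration with $\phi_{t-1}(u)\in I_1^{(i-1)}$. If $\phi_{t-1}(u)=\phi_{t-1}(v)$, the round-$t$ collision check would have reset both vertices, contradicting that both executed Linial; otherwise $\phi_{t-1}(u)\neq\phi_{t-1}(v)$, and the Linial update rule combined with $\Delta$-cover-freeness of $\mathcal{F}_{i-1}$ yields $\phi_t(v)\in S_{i-1}^{(\phi_{t-1}(v))}\setminus S_{i-1}^{(\phi_{t-1}(u))}$ while $\phi_t(u)\in S_{i-1}^{(\phi_{t-1}(u))}$, forcing $\phi_t(v)\neq\phi_t(u)$.

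The main obstacle I anticipate is the bookkeeping around the invariant: one must verify, across every branch of Algorithms \ref{alg:self-stabilized-phase1-to-phase2-stage1}, \ref{alg:self-stabilizing-phase2-stage2}, \ref{alg:self-stabilizing-phase2-stage3}, and \ref{alg:self-stabilizing-phase3}, that only the reset mechanism deposits an $I_1^{(0)}$-color. This is routine but pedantic. Everything else reduces cleanly to uniqueness of identifiers, disjointness of the sub-intervals, and the cover-freeness guarantee already used in the locally-iterative Linial phase analysis.
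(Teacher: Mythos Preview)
Your proposal is correct and follows essentially the same argument as the paper. The paper organizes the proof by a case split on what $v$ did in round $t$ (either $v$ reset and hence $\phi_t(v)$ is the canonical $I_1^{(0)}$ value, or $v$ executed a Linial step and hence $\phi_t(v)\in I_1^{(t'+1)}$), and then argues in each case that no neighbor can share $\phi_t(v)$; your version instead first packages the ``only a reset produces an $I_1^{(0)}$-color, and that color is canonical'' observation as an explicit invariant and then case-splits on the sub-interval containing $\phi_t(v)$. These two organizations are in bijection and use the same three ingredients you identify: uniqueness of identifiers, disjointness of the $I_1^{(i)}$, and $\Delta$-cover-freeness. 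One minor remark: your invariant does not actually require induction on $t$, since for each fixed $t\ge T_0+1$ it follows directly from inspecting the round-$t$ transitions (the adversary being silent from round $T_0+1$ onward), so you can drop that piece of scaffolding.
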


\begin{proof}
According to algorithm description, $v$ has $\phi_t(v) \in I_1$ if and only if it is in some improper state at the beginning of round $t$ or it is in some proper state and $\phi_{t-1}\in I_1^{(t')}$ for some $t'\in[r^*]$.
\begin{itemize}
	\item If a vertex $v\in V$ find itself in some improper state at the beginning of round $t$, then it resets $\phi_{t}(v)=\ell_3+\ell_2+\sum_1^{r^*} n_i+id(v)\in I_1$. For every neighbor $u\in N(v)$, in round $t$, either $u$ resets $\phi_{t}(u)=\ell_3+\ell_2+\sum_1^{r^*} n_i+id(u)$, or $u$ obtains a color $\phi_t(u)<\ell_3+\ell_2+\sum_1^{r^*} n_i$. In both cases, $\phi_t(v)\neq\phi_t(u)$. Hence, by \Cref{alg:self-stabilized-phase1-to-phase2-stage1}, the error-checking procedure will not reset $v$'s color in the next round.

	\item If vertex $v$ finds itself in some proper state and $\phi_{t-1}\in I_1^{(t')}$ for some $t'\in[r^*]$, then it computes its new color $\phi_t(v)\in I_1^{(t'+1)}$ based on set family $\mathcal{F}_{t'}$. For every neighbor $u\in N(v)$, in round $t$, if the error-checking fails at $u$, then $u$ resets $\phi_{t}(u)=\ell_3+\ell_2+\sum_1^{r^*} n_i+id(u)$, implying $\phi_t(v)\neq\phi_t(u)$. Otherwise, if the error-checking passes at $u$ and $\phi_{t-1}(u)\notin I_1^{(t')}$, by algorithm description we know $\phi_{t}(u)\notin I_1^{(t'+1)}$, implying $\phi_t(v)\neq\phi_t(u)$. Lastly, if the error-checking passes at $u$ and $\phi_{t-1}(u)\in I_1^{(t')}$, then due to the $\Delta$-cover-freeness of $\mathcal{F}_{t'}$, we know $\phi_t(v)\neq\phi_t(u)$. Hence, by \Cref{alg:self-stabilized-phase1-to-phase2-stage1}, the error-checking procedure will not reset $v$'s color in the next round.
\end{itemize}
This completes the proof of the lemma.
\end{proof}

Next, we consider vertices with color values in $I_2$ by the end of round $T_0+1$, and we further divide vertices in this category into two sub-categories: ones with $a_t(v)\geq\lambda$, and ones with $a_t(v)<\lambda$.

\begin{lemma}\label{lemma:self-stabilizing-correctness-I2-large-a}
If $T_0$ is the last round in which the adversary makes any changes to the RAM areas of vertices, then for every round $t\geq T_0+1$, for every vertex $v$ with $\phi_{t}(v)\in I_2$ and $a_{t}(v)\geq\lambda$, the error-checking procedure will not reset vertex $v$'s color in the next round.
\end{lemma}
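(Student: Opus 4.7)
The plan is to fix a round $t\ge T_0+1$ and a vertex $v$ satisfying $\phi_t(v)\in I_2$ and $a_t(v)\ge\lambda$, and then verify that none of the four error-checking conditions of \Cref{alg:self-stabilizing-phase2-stage2} fires at $v$ at the start of round $t+1$. Since $t\ge T_0+1$, the round-$t$ execution of $v$ is adversary-free, and inspecting the algorithms shows that the only two branches that can simultaneously produce $\phi_t(v)\in I_2$ and $a_t(v)\ge\lambda$ are the transition-in branch of \Cref{alg:self-stabilized-phase1-to-phase2-stage1} (hereafter \emph{T-in}; requires $\phi_{t-1}(v)\in I_1^{(r^*)}$ and passing the round-$t$ error check) and the ``else'' branch of the core stage that applies line \ref{alg-line:intermediate-a-update-rule} (hereafter \emph{C-keep}; requires $\phi_{t-1}(v)\in I_2$, $a_{t-1}(v)\ge\lambda$, passing the round-$t$ error check, and $|M_t(v)|>\Delta^{1/4}$). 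The same dichotomy applies to any neighbor $u$ with $\phi_t(u)\in I_2$ and $a_t(u)\ge\lambda$.

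Before using the T-in branch, I would first justify the promised counting claim that in T-in the element $\hat{x}$ does exist and the set from which $b_t(v)$ is drawn is non-empty. The argument combines the $\Delta$-union-$(\Delta^{1/4}+1)$-cover-freeness of $\mathcal{F}_a$ (which handles the up-to-$\Delta$ neighbors contributing through $N_1'$) with the observation that every neighbor contributing through $N_2'$ does so at a single value of $x$; a small slack in the construction of $\mathcal{F}_a$ absorbs these singleton contributions and yields some $\hat{x}\in S_a^{(\phi_{t-1}(v))}$ with $|N_1'(v,\hat{x})\cup N_2'(v,\hat{x})|\le\Delta^{1/4}$. Non-emptiness of the set from which $b_t(v)$ is selected then follows from the $\Delta^{1/4}$-cover-freeness of $\mathcal{F}_b$ together with the at-most-$\Delta^{1/4}$ singleton exclusions coming from $N_2'(v,\hat{x})$.

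With that preliminary in hand, the two ``local'' error conditions---the arboricity bound $|\{u\in N(v):T_v[u]=1\}|\le\Delta^{1/4}$ and the range check $b_t(u)\in[m_2]$ for every $u\in N(v)\cup\{v\}$ with $\phi_t(u)\in I_2$ and $a_t(u)\ge\lambda$---are immediate. In T-in both $T_v$ and $b_t(v)$ are freshly written: $T_v$ has at most $\Delta^{1/4}$ ones by the choice of $\hat{x}$, and $b_t(v)\in S_b^{(\phi_{t-1}(v))}\subseteq[m_2]$. In C-keep neither $T_v$ nor $b_{t-1}(v)$ is modified in round $t$, so the bounds are inherited from the round-$t$ error check that $v$ had to pass in order to reach line \ref{alg-line:intermediate-a-update-rule}. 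The same reasoning applied to each relevant neighbor $u$ handles the neighbor part of the $b$-range condition.

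The two ``relational'' conditions require looking at an arbitrary neighbor $u\in N(v)$ with $\phi_t(u)\in I_2$ and $a_t(u)=a_t(v)$, and I would split into four sub-cases according to which of T-in or C-keep each of $v$ and $u$ used in round $t$. For T-in/T-in, both vertices pass their own round-$t$ error checks, hence $\phi_{t-1}(v)\neq\phi_{t-1}(u)$; the equality $a_t(v)=a_t(u)$ then forces their chosen $\hat{x}$'s to coincide, which places $u\in N_1'(v,\hat{x})$, giving $T_v[u]=1$ and $b_t(v)\notin S_b^{(\phi_{t-1}(u))}\ni b_t(u)$. For the mixed T-in/C-keep and C-keep/T-in sub-cases, the definition of $N_2'$ is arranged precisely so that the C-keep vertex lies in the T-in vertex's $N_2'$ exactly when their post-update $a$ values coincide; this yields $T_v[u]+T_u[v]\ge 1$ and $b_t(v)\neq b_t(u)$. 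For C-keep/C-keep, a short algebraic computation mirroring \Cref{claim:phase2-stage2_last_round_equal_a} shows that $a_t(v)=a_t(u)$ forces $a_{t-1}(v)=a_{t-1}(u)$; the round-$t$ error check of $v$ then already guarantees $b_{t-1}(v)\neq b_{t-1}(u)$ and $T_v[u]+T_u[v]\ge 1$, and neither quantity is altered along C-keep. The main obstacle will be the counting justification for existence of $\hat{x}$ together with the careful bookkeeping in the mixed T-in/C-keep sub-cases, where the two vertices have structurally different histories and $N_2'$ is the only device gluing them together.
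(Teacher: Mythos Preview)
Your proposal is correct and follows essentially the same approach as the paper: the same two-way split for $v$ (transition-in versus the $|M_t(v)|>\Delta^{1/4}$ branch of the core stage), the same counting argument for the existence of $\hat{x}$, and the same use of \Cref{claim:phase2-stage2_last_round_equal_a} for the C-keep/C-keep sub-case. The only difference is organizational---you split the relational conditions into four $(v,u)$ sub-cases, whereas the paper organizes by $v$'s scenario and handles the neighbor cases inside---but the logical content is identical.
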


\begin{proof}
According to algorithm description, $v$ has $\phi_t(v)\in I_2$ and $a_t(v)\geq \lambda$ iff it is in some proper state at the beginning of round $t$, and either ``$\phi_{t-1}(v)\in I_1^{(r^*)}$'', or ``$\phi_{t-1}(v)\in I_2$ and $a_{t-1}(v)\geq \lambda$''.

\smallskip\textsc{Scenario I}: vertex $v$ is in some proper state at the beginning of round $t$ and has $\phi_{t-1}(v)\in I_1^{(r^*)}$. In such case, $v$ transforms its color from interval $I_1$ to $I_2$ in round $t$. In particular, vertex $v$ first selects the smallest element $\hat{x}\in S_a^{(\phi_{t-1}(v))}$ satisfying $|N_1'(v,\hat{x})\cup N_2'(v,\hat{x})|\leq \Delta^{1/4}$, and sets $a_t(v)=\hat{x}+\lambda\in[\lambda^2]$.

Notice, such $\hat{x}$ must exist. To see this, recall the definition of $N_1'(v,x)$ and $N_2'(v,x)$:
\begin{align*}
N_1'(v,x) &\triangleq \{u\mid u\in N(v), \phi_{t-1}(u)\in I_1^{(t')}, x\in S_a^{(\phi_{t-1}(u))}\},\\
N_2'(v,x) &\triangleq \{ u\mid u\in N(v), \phi_{t-1}(u) \in I_2, x+\lambda=\hat{a}_{t-1}(u)\cdot \lambda + (\hat{a}_{t-1}(u)+ \tilde{a}_{t-1}(u))\bmod\lambda \}.
\end{align*}
Since $\phi_{t-1}(v)\in I_1^{(r^*)}$, we have $t'=r^*$. We say a neighbor $u\in N(v)$ creates a ``collision'' for some element $x\in S_a^{(\phi_{t-1}(v))}$ if ``$\phi_{t-1}(u)\in I_1^{(t')}\text{ and }x\in S_a^{(\phi_{t-1}(u))}$'' or ``$\phi_{t-1}(u) \in I_2\text{ and }x+\lambda=\hat{a}_{t-1}(u)\cdot \lambda + (\hat{a}_{t-1}(u)+ \tilde{a}_{t-1}(u))\bmod \lambda$''. Call the former as type one collision, and the latter as type two collision. If $\hat{x}$ cannot be found, then for each $x\in S_a^{(\phi_{t-1}(v))}$, the number of collisions created by by all neighbors for $x$ must reach $\Delta^{1/4}+1$; furthermore, the total number of collisions created by all neighbors for all elements in $S_a^{(\phi_{t-1}(v))}$ must reach $|S_a^{(\phi_{t-1}(v))}|\cdot(\Delta^{1/4}+1)>(\Delta+1)\log(n_{r^*})$. Now, for every $u\in N(v)$ with $\phi_{t-1}(u)\in I_1^{(t')}$, vertex $u$ can create at most $\log(n_{r^*})$ (type one) collisions for all elements in $S_a^{(\phi_{t-1}(v))}$, as $|S_a^{(\phi_{t-1}(v))}\cap S_a^{(\phi_{t-1}(u))}|\leq\log(n_{r^*})$. Moreover, for every $u\in N(v)$ with $\phi_{t-1}(u)\in I_2$, it can create at most one (type two) collision. Thus, the total number of collisions that can be created by the $\Delta$ neighbors of $v$ is bounded by $\Delta\log(n_{r^*})$. Therefore, $\hat{x}$ must exist.

By algorithm description, it is easy to verify that the neighbors in $N_1'(v,\hat{x})\cup N_2'(v,\hat{x})$ are all the neighbors that might have colliding $a$ value with $v$ by the end of round $t$.

Vertex $v$ then selects $b_t(v)$ which will not conflict with any neighbor in $N_1'(v,\hat{x})\cup N_2'(v,\hat{x})$, and sets $T_v[u]=1$ if and only if $u\in N_1'(v,\hat{x})\cup N_2'(v,\hat{x})$. Since $|N_1'(v,\hat{x})\cup N_2'(v,\hat{x})|\leq\Delta^{1/4}$ and $\mathcal{F}_b$ is a $\Delta$-union-$(\Delta^{1/4}+1)$-cover-free, $b_t(v)$ must exist. Moreover, $b_t(v)\in [m_2]$ by the definition of $\mathcal{F}_b$.

At this point, we can conclude: (1) every neighbor $u\in N(v)$ has either $a_t(u)\neq a_t(v)$ or $b_t(u)\neq b_t(v)$; (2) every neighbor $u\in N(v)$ that may have $a_t(u)=a_t(v)$ satisfies $T_v[u]=1$, which leads to $T_v[u]+T_u[v]\neq 0$; (3) the number of neighbors $u$ with $T_v[u]=1$ is bounded by $|N_1'(v,\hat{x})\cup N_2'(v,\hat{x})|<\Delta^{1/4}$; and (4) $b_t(v)\in [m_2]$.

\smallskip\textsc{Scenario II}: vertex $v$ is in some proper state at the beginning of round $t$ and satisfies: $\phi_{t-1}(v)\in I_2$ and $a_{t-1}(v)\geq \lambda$. Then, by algorithm description, for every $u\in N(v)$ with $\phi_t(u)\in I_2$ and $a_t(u)=a_t(v)\geq \lambda$, it must be in some proper state at the beginning of round $t$. Moreover, for each such $u$, either ``$\phi_{t-1}(u)\in I_1^{(r^*)}$'' or ``$\phi_{t-1}(u)\in I_2$ and $a_{t-1}(u)\geq \lambda$''.
\begin{itemize}
	\item If it is the case ``$\phi_{t-1}(u)\in I_1^{(r^*)}$'', then by the same argument as in \textsc{Scenario I} (but from the perspective of $u$), vertex $u$ must select a $b_t(u)\in [m_2]$ not equal to $b_t(v)$, and sets $T_u[v]=1$, which leads to $T_v[u]+T_u[v]\neq 0$. Moreover, since $v$ is in some proper state at the beginning of round $t$, the error-checking procedure in \Cref{alg:self-stabilizing-phase2-stage2} passes, which implies: (1) $b_t(v)=b_{t-1}(v)\in[m_2]$; and (2) $|\{w\mid w\in N(v), T_v[w]=1\}|\leq \Delta^{1/4}$ at the beginning of round $t$. Since $T_v$ stays unchanged in round $t$, we know $|\{w\mid w\in N(v), T_v[w]=1\}|\leq \Delta^{1/4}$ still holds at the end of round $t$.
	\item If it is the case ``$\phi_{t-1}(u)\in I_2$ and $a_{t-1}(u)\geq \lambda$'', by \Cref{claim:phase2-stage2_last_round_equal_a} and the assumption that $u,v$ are both in proper states at the beginning of round $t$, we have $a_{t-1}(u)=a_{t-1}(v)\geq\lambda$. Since $u,v$ are both in proper states at the beginning of round $t$, the error-checking procedure in \Cref{alg:self-stabilizing-phase2-stage2} passes, which further implies: (1) $b_t(u)=b_{t-1}(u)\neq b_t(v)=b_{t-1}(v)$, as well as $b_t(u)\in[m_2]$ and $b_t(v)\in[m_2]$; (2) $T_v[u]+T_u[v]\neq 0$ at the beginning of round $t$; and (3) $|\{w\mid w\in N(v), T_v[w]=1\}|\leq \Delta^{1/4}$ at the beginning of round $t$. Since vectors $T_v$ and $T_u$ stay unchanged in round $t$, we know $T_v[u]+T_u[v]\neq 0$ and $|\{w\mid w\in N(v), T_v[w]=1\}|\leq \Delta^{1/4}$ are both true at the end of round $t$.
\end{itemize}

\smallskip Finally, notice that according to the analysis for the two scenarios, every vertex with $\phi_{t}(v)\in I_2$ and $a_t(v)\geq\lambda$ has $b_t(v)\in [m_2]$. Hence, for every vertex $v$, every $u\in \{v\}\cup N(v)$ with $\phi_t(u)\in I_2$ and $a_t(u)\geq \lambda$ has $b_t(u)\in [m_2]$. By now, we conclude that, at the beginning of round $t+1$, the error-checking procedure in \Cref{alg:self-stabilizing-phase2-stage2} will not reset vertex $v$'s color.
\end{proof}

\begin{lemma}\label{lemma:self-stabilizing-correctness-I2-small-a}
If $T_0$ is the last round in which the adversary makes any changes to the RAM areas of vertices, then for every round $t\geq T_0+1$, for every vertex $v$ with $\phi_{t}(v)\in I_2$ and $a_{t}(v)<\lambda$, the error-checking procedure will not reset vertex $v$'s color in the next round.
\end{lemma}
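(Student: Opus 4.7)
The plan is to mirror the case-analysis used in the proof of \Cref{lemma:self-stabilizing-correctness-I2-large-a}, now targeted at the three predicates of the transition-out error check in \Cref{alg:self-stabilizing-phase2-stage3}. A vertex $v$ with $\phi_t(v)\in I_2$ and $a_t(v)<\lambda$ must have been in a proper state at the start of round $t$, otherwise it would have reset its color to a value in $I_1$. I would split into two scenarios according to $v$'s color at the end of round $t-1$. \textsc{Scenario I}: $\phi_{t-1}(v)\in I_2$ with $a_{t-1}(v)\ge\lambda$, so $v$ just ran the reducing branch of \Cref{alg:self-stabilizing-phase2-stage2}. \textsc{Scenario II}: $\phi_{t-1}(v)\in I_2$ with $a_{t-1}(v)<\lambda$, so $v$ ran \Cref{alg:self-stabilizing-phase2-stage3} but, since we assume $\phi_t(v)\in I_2$, either kept everything fixed (the line~\ref{alg-line:self-stabilizing-phase2-stage3-if-cond-2} condition failed), only updated $d(v)$, or reset $d(v)$ to $\mu$. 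In each case I must establish that at the start of round $t+1$: (a) no $u\in N(v)$ has $\phi_t(u)\in I_2$ with $(a_t(u),b_t(u))=(a_t(v),b_t(v))$; (b) every such $u$ satisfies $T_v[u]+T_u[v]\ne0$; and (c) $|\{u\in N(v):T_v[u]=1\}|\le 2\cdot\Delta^{1/4}$.

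For \textsc{Scenario I}, the reducing branch picks $b_t(v)$ inside $S_c^{(a_{t-1}(v)\cdot m_2+b_{t-1}(v))}$ while excluding the $b_{t-1}$ values of $M'_t(v)$ and the full sets $S_c^{(a_{t-1}(u)\cdot m_2+b_{t-1}(u))}$ for $u\in\overline M'_t(v)$. Because $v$ passed the round-$t$ error check, the indices $a_{t-1}(u)\cdot m_2+b_{t-1}(u)$ lie in $[\lambda^2 m_2]$ (using $b_{t-1}(u)\in[m_2]$), and $|M_t(v)|+|\overline M_t(v)|\le 2\cdot\Delta^{1/4}$ (the branch guard plus the arboricity invariant propagated by \Cref{lemma:self-stabilizing-correctness-I2-large-a} at round $t-1$). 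By the same cover-freeness argument used in \Cref{subsec:phase2-stage2} for \Cref{lemma:phase2-stage2-proper-coloring}, the excluded set has size at most $2\cdot\Delta^{1/4}\log(\lambda^2 m_2)<\tau$, so $b_t(v)$ exists and collides with none of the listed $u$. The orientation vector is then reinitialised with $T_v[u]=1$ for $u\in M_t(v)\cup\overline M_t(v)$, which yields (c) immediately, and guarantees (b) against any neighbour $u$ with $\phi_t(u)\in I_2$, $a_t(u)=a_t(v)$ (since such $u$ must lie in $M_t(v)\cup\overline M_t(v)$).

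For \textsc{Scenario II}, $v$ changes neither $a_t(v)$, $b_t(v)$, nor $T_v$, so predicates (b) and (c) inherit directly from the error check that passed in round $t$. Predicate (a) is also automatic for every neighbour $u$ that was static in round $t$. The main obstacle, and the step I expect to take the most care, is handling neighbours $u$ whose state \emph{did} change in round $t$: either $u$ completed its own reduction from $[\lambda,\lambda^2)$ into $[0,\lambda)$ via Scenario~I applied to $u$, or $u$ transitioned to $I_3$ (in which case $u$ is no longer an obstruction to (a)--(c) for $v$). For the former, I would invoke the Scenario~I analysis from $u$'s viewpoint to conclude that $v\in M_t(u)\cup\overline M_t(u)$, hence $u$ picks $b_t(u)\ne b_t(v)$ and sets $T_u[v]=1$; combined with the already-existing value of $T_v[u]$ (which was legitimised in the round where the orientation of $(u,v)$ was first recorded, either in \Cref{alg:self-stabilized-phase1-to-phase2-stage1} or in a prior reducing step), this validates (a) and (b). Traceback through this chain of orientation-setting rounds, using \Cref{claim:phase2-stage2_last_round_equal_a} in the self-stabilizing setting starting from round $T_0+1$, is the bookkeeping that powers the argument; once it is in place the lemma follows by verifying each predicate holds against all three classes of neighbours (static, freshly-reduced, and transitioned-to-$I_3$).
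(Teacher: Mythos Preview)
Your overall decomposition into \textsc{Scenario I} (vertex $v$ just reduced) and \textsc{Scenario II} ($v$ was already in the transition-out stage) matches the paper's proof, and your treatment of \textsc{Scenario II} is essentially the same as the paper's. However, there is a genuine gap in your \textsc{Scenario I} argument.

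You claim that any neighbour $u$ with $\phi_t(u)\in I_2$ and $a_t(u)=a_t(v)$ ``must lie in $M_t(v)\cup\overline{M}_t(v)$''. This is false. Consider a neighbour $u$ with $a_{t-1}(u)=a_{t-1}(v)\ge\lambda$ but $T_v[u]=0$ (the paper denotes such neighbours by $\overline{M}_{t,0}(v)$). If $u$ also reduces in round $t$, then $a_t(u)=\tilde{a}_{t-1}(u)=\tilde{a}_{t-1}(v)=a_t(v)$, yet $u\notin M_t(v)$ (since $\hat a_{t-1}(u)=\hat a_{t-1}(v)$) and $u\notin\overline{M}_t(v)$ (since $T_v[u]=0$). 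Your cover-freeness argument therefore does not force $b_t(v)\neq b_t(u)$, and after $v$ reinitialises $T_v$ you have $T_v[u]=0$, so you cannot conclude predicate~(b) from $v$'s side alone.

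The fix, which the paper uses, is to exploit the orientation invariant that $v$ already verified at the start of round $t$: because $a_{t-1}(u)=a_{t-1}(v)$ and $v$ passed the core-stage error check, we have $T_v[u]+T_u[v]\neq 0$; since $T_v[u]=0$ this forces $T_u[v]=1$. Hence from $u$'s viewpoint $v\in\overline{M}_t(u)\subseteq\overline{M}'_t(u)$, so when $u$ reduces it excludes the full set $S_c^{(a_{t-1}(v)\cdot m_2+b_{t-1}(v))}\ni b_t(v)$, giving $b_t(u)\neq b_t(v)$, and it sets $T_u[v]=1$, giving $T_v[u]+T_u[v]\neq 0$ at the end of round $t$. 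Once you insert this case, no ``traceback through a chain of orientation-setting rounds'' is needed: the single error check at the start of round $t$ already carries all the information required.
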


\begin{proof}
According to the algorithm description, a vertex $v$ has $\phi_t(v)\in I_2$ and $a_t(v)<\lambda$ if and only if it is in some proper state at the beginning of round $t$ and has either ``$\phi_{t-1}(v)\in I_2$ and $a_{t-1}(v)\geq \lambda$'' or ``$\phi_{t-1}(v)\in I_2$ and $a_{t-1}(v)< \lambda$''.

\smallskip\textsc{Scenario I}: vertex $v$ is in some proper state at the beginning of round $t$ and has $\phi_{t-1}(v)\in I_2$ and $a_{t-1}(v)\geq \lambda$. In this case, $v$ runs \Cref{alg:self-stabilizing-phase2-stage2} and reduces its $a$ value from $[\lambda,\lambda^2)$ to $[\lambda]$ in round $t$. Define $\overline{M}_{t,0}\triangleq \{ u\mid u\in N(v),\phi_{t-1}(u)\in I_2,\hat{a}_{t-1}(u)=\hat{a}_{t-1}(v),\tilde{a}_{t-1}(u)=\tilde{a}_{t-1}(v), T_v[u]=0\}$.

By definition, $M_t(v)\cup \overline{M}_t(v)\cup\overline{M}_{t,0}(v)$ contains all neighbors that may have colliding $a$ value with $u$ by the of round $t$. For neighbors in $M_t(v)\cup \overline{M}_t(v)$, vertex $v$ selects a $b$ value that will not be used by any of them. Vertex $v$ also sets $T_v[u]=1$ for every $u\in M_t(v)\cup \overline{M}_t(v)$ by algorithm description, hence $T_v[u]+T_u[v]\neq 0$ by the end of round $t$. For every neighbor $u$ in $\overline{M}_{t,0}(v)$, since vertex $v$ is in some proper state at the beginning of round $t$, we have $T_u[v]=1$ at the beginning of round $t$. Thus, if indeed $u$ reduces its $a$ value to $[\lambda]$ in round $t$ which leads to $a_t(u)=a_t(v)$, we have $v\in \overline{M}_t(u)$ and vertex $u$ will select a $b_t(u)$ not equal to $b_t(v)$ and set $T_u(v)=1$. By now, we know that every neighbor $u$ of $v$ with $\phi_t(u)\in I_2$ has either $a_t(u)\neq a_t(v)$ or $b_t(u)\neq b_t(v)$. Moreover, every neighbor $u$ with $\phi_t(u)\in I_2$ and $a_t(u)=a_t(v)$ has $T_v[u]+T_u[v]=1$.

Since vertex $v$ reduces its $a$ value in round $t$, by the description of the algorithm, we have $|M_t(v)|\leq \Delta^{1/4}$. Since vertex $v$ is in proper state in round $t-1$ (otherwise it cannot be the case that $\phi_{t-1}(v)\in I_2$), we have the number of neighbors $u$ with $T_v[u]=1$ is bounded by $\Delta^{1/4}$ at the end of round $t-1$; that is, $|\overline{M}_t(v)|\leq \Delta^{1/4}$. Therefore, by the end of round $t$, the number of neighbors of $v$ with $T_v[u]=1$ is bounded by $|M_t(v)\cup \overline{M}_t(v)|\leq |M_t(v)| + |\overline{M}_t(v)| \leq 2\cdot \Delta^{1/4}$.

At this point, we conclude that, in \textsc{Scenario I}, at the beginning of round $t+1$, the error-checking procedure in \Cref{alg:self-stabilizing-phase2-stage3} will not reset vertex $v$'s color.

\smallskip\textsc{Scenario II}: vertex $v$ is in some proper state at the beginning of round $t$ and has $\phi_{t-1}(v)\in I_2$ and $a_{t-1}(v)< \lambda$. In this case, it maintains values $a$ and $b$, and vector $T_v$ unchanged in round $t$. For any neighbor $u$ with $a_t(v)=a_t(u)$, vertex $u$ must in some proper state at the beginning of round $t$ and either satisfies $a_{t-1}(u)=a_{t-1}(v)<\lambda$ or reduces its $a$ value from $[\lambda,\lambda^2)$ to $[\lambda]$ in round $t$.
\begin{itemize}
	\item For a neighbor $u\in N(v)$ that is in some proper state at the beginning of round $t$ and satisfies $a_{t-1}(u)=a_{t-1}(v)<\lambda$: we have  $a_{t-1}(v)=a_t(v)=a_t(u)=a_{t-1}(u)$,  $b_{t-1}(u)=b_t(u)$, and $T_u$ stays unchanged in round $t$. Since vertex $v$ is in proper state in round $t-1$ (otherwise it cannot be the case that $\phi_{t-1}(v)\in I_2$), we have $b_{t-1}(u)\neq b_{t-1}(v)$ and $T_v[u]+T_u[v]\neq 0$ at the beginning of round $t$. Thus, we have $b_t(u)=b_{t-1}(u)\neq b_{t-1}(u)=b_{t}(v)$, and $T_u[v]+T_v[u]\neq 0$ still holds at the end of round $t$.
	
	\item For a neighbor $u\in N(v)$ that is in some proper state at the beginning of round $t$ and reduces its $a$ value from $[\lambda,\lambda^2)$ to $[\lambda]$ in round $t$: we have $v\in M_t(u)$. By an analysis similar to \textsc{Scenario I} but from the perspective of $u$, vertex $u$ will select a $b$ value different from $b_t(v)$ and set $T_u[v]=1$.
\end{itemize}
Lastly, since $v$ is in proper state in round $t$, and since vector $T_v$ stays unchanged in round $t$, we know the number of neighbors with  $T_v[u]=1$ at the end of round $t$ is still bounded by $2\cdot \Delta^{1/4}$.

At this point, we conclude that, in \textsc{Scenario II}, at the beginning of round $t+1$, the error-checking procedure in \Cref{alg:self-stabilizing-phase2-stage3} will not reset vertex $v$'s color.
\end{proof}

We continue to consider vertices with color values in $I_3$ by the end of round $T_0+1$.

\begin{lemma}\label{lemma:self-stabilizing-correctness-I3}
If $T_0$ is the last round in which the adversary makes any changes to the RAM areas of vertices, then for every round $t\geq T_0+1$, for every vertex $v$ with $\phi_{t}(v)\in I_3$, the error-checking procedure will not reset vertex $v$'s color in the next round.
\end{lemma}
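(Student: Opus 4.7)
The plan is to fix $t\geq T_0+1$ and a vertex $v$ with $\phi_t(v)\in I_3$, and to show for every neighbor $u\in N(v)$ that $\phi_t(u)\neq\phi_t(v)$; since the error-checking of the standard reduction phase only flags a color collision, this suffices. A first observation is that only \Cref{alg:self-stabilized-phase1-to-phase2-stage1} acts non-trivially on a vertex with $\phi_{t-1}\in I_1$, and its output lies in $I_1\cup I_2$; so any vertex whose color is in $I_3$ at time $t$ must have had $\phi_{t-1}\in I_2\cup I_3$. Because $\phi_t(v)\in I_3$, the error-checking at the start of round $t$ must have passed at $v$, and I may assume $\phi_t(u)\in I_3$ since otherwise the colors differ trivially.

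I then split on $(\phi_{t-1}(v),\phi_{t-1}(u))\in\{I_2,I_3\}^2$. When both lie in $I_3$, the error-checking at $v$ in round $t$ certifies $\phi_{t-1}(u)\neq\phi_{t-1}(v)$; \Cref{alg:self-stabilizing-phase3} updates a vertex's color only when it strictly dominates its closed neighborhood, so at most one of $u,v$ performs an update in round $t$, the updater draws from $[\Delta+1]\setminus\{\phi_{t-1}(w):w\in N(\cdot)\}$, and the non-updater's color is preserved and already distinct. In the mixed case with $\phi_{t-1}(v)\in I_2$ and $\phi_{t-1}(u)\in I_3$ (the other is symmetric), vertex $u$ fails the predicate \emph{every neighbor has color in $I_3$} in round $t$ because of $v$, so $\phi_t(u)=\phi_{t-1}(u)$; meanwhile, when $v$ executes the final step of the transition-out stage, its color is chosen from $L_{(t-1,d_{t-1}(v))}(v)\setminus L_{t-1}(v)$ and $\phi_{t-1}(u)\in L_{t-1}(v)$ by construction.

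The most delicate case is when $\phi_{t-1}(v),\phi_{t-1}(u)\in I_2$ and both complete the transition-out in round $t$. Applied symmetrically at $u$ and $v$, \Cref{alg-line:self-stabilizing-phase2-stage3-if-cond-2} of \Cref{alg:self-stabilizing-phase2-stage3} forces $a_{t-1}(v)\leq a_{t-1}(u)$ and $a_{t-1}(u)\leq a_{t-1}(v)$, hence $a_{t-1}(v)=a_{t-1}(u)$. The bullet of the error-checking that inspects $T_v[u]+T_u[v]$ then guarantees at least one of $v,u$ points at the other; say $u\in A_{t-1}(v)$. Because $u$ actually transitions into $I_3$, it must have entered the \textbf{ElsIf} branch at \Cref{alg-line:self-stabilizing-phase2-stage3-if-cond-5}, which requires $d_{t-1}(u)\neq\mu$, so $L_{(t-1,d_{t-1}(u))}(u)$ is a well-defined set that $v$ explicitly removes from its own candidate colors; since $\phi_t(u)\in L_{(t-1,d_{t-1}(u))}(u)$, we obtain $\phi_t(v)\neq\phi_t(u)$. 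This is the self-stabilizing analogue of \Cref{lemma:phase2-stage3-proper-color}, with the bit-vector orientation taking the role that the $c$-values played there.

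The main obstacle will be precisely this simultaneous-transition case, because at round $T_0$ the adversary could in principle install an arbitrary "passing" state that looks consistent to a casual inspection but encodes bad orientations. The three bullets of the error-checking at the start of the transition-out stage (the $\langle a,b\rangle$-collision check, the $T_v[u]+T_u[v]$ check, and the arboricity check) are exactly what close this gap: the enforced $a$-equality together with the orientation check force one of the two vertices to treat the other as a constraint via $A_{t-1}(\cdot)$, and the transition-out update rule then makes that vertex avoid the other's chosen color.
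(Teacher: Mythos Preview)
Your proposal is correct and follows essentially the same case split as the paper: the paper divides into $\phi_{t-1}(v)\in I_2$ (transitioning) versus $\phi_{t-1}(v)\in I_3$, and in the former case simply cites the proof of \Cref{lemma:phase2-stage3-proper-color} as ``identical,'' whereas you spell out that argument explicitly and correctly adapt it to the self-stabilizing setting (using the $T_v[u]+T_u[v]\neq 0$ check in place of the $c$-value comparison). Your handling of the simultaneous-transition subcase---forcing $a_{t-1}(u)=a_{t-1}(v)$ via the symmetric \Cref{alg-line:self-stabilizing-phase2-stage3-if-cond-2} condition, then invoking the orientation check to place one vertex in the other's $A_{t-1}(\cdot)$---is exactly the right adaptation.
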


\begin{proof}
According to algorithm description, a vertex $v$ has $\phi_t(v)\in I_3$ if and only if it is in some proper state at the beginning of round $t$ and has either ``$\phi_{t-1}(v)\in I_2$ and transforms to $I_3$ in round $t$'' or ``$\phi_{t-1}(v)\in I_3$''.
\begin{itemize}
	\item \textsc{Scenario I}: vertex $v$ is in some proper state at the beginning of round $t$ with $\phi_{t-1}(v)\in I_2$ and transforms to $I_3$ in round $t$. In this case, any neighbor $u\in N(v)$ with $\phi_t(u)\in I_3$ must be in some proper state at the beginning of round $t$. Since faults no longer occur in round $t$, by an identical argument as in the proof of \Cref{lemma:phase2-stage3-proper-color}, it holds that $\phi_t(u)\neq \phi_t(v)$.

	\item \textsc{Scenario II}: vertex $v$ is in some proper state at the beginning of round $t$ with $\phi_{t-1}(v)\in I_3$. Any neighbor $u\in N(v)$ that may have $\phi_t(u)\in I_3$ must be in some proper state at the beginning of round $t$. Moreover, either ``$\phi_{t-1}(u)\in I_2$ and vertex $u$ transforms its color to $I_3$ in round $t$'' or ``$\phi_{t-1}(u)\in I_3$''. In the former case, by an analysis similar to \textsc{Scenario I} (but swapping the role of $u$ and $v$), it holds that $\phi_t(u)\neq \phi_t(v)$. In the latter case, we have $\phi_{t-1}(u)\neq\phi_{t-1}(v)$ since $u,v$ are both in proper states at the beginning of round $t$. Moreover, by \Cref{alg:self-stabilizing-phase3}, in round $t$, at most one of $u,v$ will change its color, and the updated color of that vertex will not conflict with the other vertex, thus $\phi_t(u)\neq \phi_t(v)$.
\end{itemize}
We conclude that at the beginning of round $t+1$, the error-checking procedure in \Cref{alg:self-stabilizing-phase3} will not reset vertex $v$'s color.
\end{proof}

The following lemma is the last missing piece before we can prove \Cref{lemma:self-stabilizing-correctness}.

\begin{lemma}\label{lemma:self-stabilizing-correctness-color-in-I123}
If $T_0$ is the last round in which the adversary makes any changes to the RAM areas of vertices, then for every round $t\geq T_0+1$, for every vertex $v$, it holds that $\phi_t(v)\in I_1\cup I_2\cup I_3$.
\end{lemma}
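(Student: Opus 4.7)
The plan is to proceed by induction on $t \ge T_0+1$. For the base case $t = T_0+1$ the adversary may have left $\phi_{T_0}(v)$ at an arbitrary value, so I would argue directly by case analysis on which branch $v$ enters in round $T_0+1$ that the write to $\phi_{T_0+1}(v)$ always lands in $I_1 \cup I_2 \cup I_3$. The six possible branches are: (i) an error-check fires and $v$ resets to $\ell_3 + \ell_2 + \sum_{i=1}^{r^*} n_i + id(v) \in I_1$; (ii) a Linial-phase update picks the minimum element of some $S_{t'}^{(\cdot)}$ whose ground set is $I_1^{(t'+1)} \subseteq I_1$; (iii) a transition-in update writes $\ell_3 + \langle a,b,c,d\rangle$ with each coordinate in the ranges dictated by the quadruple encoding of $I_2$; (iv) a core-stage update again writes $\ell_3 + \langle a,b,c,d\rangle \in I_2$; (v) a transition-out update either rewrites $d$ (leaving $\phi_t(v)$ in $I_2$) or commits to $P_{(t-1,v,d_{t-1}(v))}(\hat k) + \mu \hat k$; and (vi) the standard-reduction branch either leaves the color fixed in $I_3$ or picks a value in $[\Delta+1] \subseteq I_3$. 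Each branch writes a color in $I_1 \cup I_2 \cup I_3$.

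For the inductive step I would rerun exactly the same case analysis, since the update rules depend only on the vertex's current color and the messages received from its neighbors and are insensitive to whether those inputs came from adversarial corruption or from an honest previous round; the inductive hypothesis only guarantees that the inputs are well-typed, which if anything simplifies the case work.

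The main obstacle is step (v), verifying that the color chosen when transitioning into $I_3$ actually lies in $I_3 = [0,\ell_3)$. Here I would invoke \Cref{claim:self-stabilizing-phase2-stage3-bounded-k}, whose precondition $|L'_{t-1}(v) \cap L_{(t-1,d_{t-1}(v))}(v)| \le \Delta/\mu$ is precisely the guard tested at Line~\ref{alg-line:self-stabilizing-phase2-stage3-if-cond-4} before this branch is entered, so it is applicable whenever it is needed. It yields $\hat k \le \Delta/\mu + 4\Delta^{1/4}$, hence $P_{(t-1,v,d_{t-1}(v))}(\hat k) + \mu \hat k < \mu + \mu(\Delta/\mu + 4\Delta^{1/4}) = \Delta + (4\Delta^{1/4}+1)\mu$, which stays below $\ell_3 = \Delta + (2\sqrt{m_3}+1)\mu$ because $m_3 = 16\sqrt{\Delta}\log^2(\lambda^2 m_2)$ forces $\sqrt{m_3} \ge 4\Delta^{1/4}$. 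All remaining branches land in the intended intervals by inspection of the update rules, so no further difficulty arises.
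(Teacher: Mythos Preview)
Your proposal is correct and follows essentially the same route as the paper: a direct case analysis on which branch the vertex enters, showing each branch writes a color in $I_1\cup I_2\cup I_3$. The paper's own proof is terser---it organizes the cases by which interval $\phi_{t-1}(v)$ lies in and dispatches the transitions $I_1\to I_1\cup I_2$, $I_2\to I_2\cup I_3$, $I_3\to I_3$ with ``by the description of the algorithm,'' whereas you spell out branch~(v) via \Cref{claim:self-stabilizing-phase2-stage3-bounded-k}; and your induction wrapper is, as you note yourself, not actually used and can be dropped.
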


\begin{proof}
If vertex $v$ finds itself in some improper state at the beginning of round $t$, then it resets itself by setting $\phi_{t}(v)=\ell_3+\ell_2+\sum_1^{r^*} n_i+id(v)\in I_1$. Otherwise, we have vertex $v$ in proper state with its color in $I_1\cup I_2\cup I_3$. By the description of the algorithm, if no error occurs in round $t$, for vertex $v$ with  $\phi_{t-1}(v)\in I_1$, it has $\phi_{t}(v)\in I_1\cup I_2$; for vertex $v$ with  $\phi_{t-1}(v)\in I_2$, it has $\phi_{t}(v)\in I_2\cup I_3$; for vertex $v$ with  $\phi_{t-1}(v)\in I_3$, it has $\phi_{t}(v)\in I_3$. This completes the proof of the lemma.
\end{proof}

At this point, it is easy to see \Cref{lemma:self-stabilizing-correctness-I1}, \Cref{lemma:self-stabilizing-correctness-I2-large-a}, \Cref{lemma:self-stabilizing-correctness-I2-small-a}, \Cref{lemma:self-stabilizing-correctness-I3}, \Cref{lemma:self-stabilizing-correctness-color-in-I123} together immediately lead to \Cref{lemma:self-stabilizing-correctness}.

\subsubsection{Stabilization time}

To analyze the time cost of our self-stabilizing algorithm, which is summarized in \Cref{lemma:self-stabilizing-time-complexity-I3}, we take a similar approach as in the analysis of the locally-iterative algorithm. Specifically, we will show once the adversary stops disrupting algorithm execution, the maximum amount of time for vertices to progress through each phase/stage is limited, resulting in a bounded stabilization time.

We begin with the Linial phase and the transition-in stage by proving \Cref{lemma:self-stabilizing-time-complexity-I1}.

\begin{proof}[Proof of \Cref{lemma:self-stabilizing-time-complexity-I1}]
By the correctness guarantee provided by \Cref{lemma:self-stabilizing-correctness}, we have that for every round from $T_0+2$, at the beginning of that round, every vertex $v$ has its color in $I_1\cup I_2\cup I_3$ and is in a proper state. Hence, by algorithm description, in a round $t'\geq T_0+2$, every vertex $v$ with $\phi_{t'-1}(v)\in I_1^{(j)}$ computes its new color $\phi_{t'}(v)\in I_1^{(j+1)}$, where $j\in[r^*]$; every vertex $v$ with $\phi_{t'-1}(v)\in I_1^{(r^*)}$ computes its new color $\phi_t(v)\in I_2$; and every vertex $v$ with $\phi_{t'-1}(v)\in I_2\cup I_3$ computes its new color $\phi_t(v)\in I_2\cup I_3$. Now, by an induction on $k$ from $0$ to $r^*$ (both inclusive), it is easy to see, by the end of round $T_0+1+k$, for any vertex $v$, its color $\phi_{T_0+1+k}(v)$ is in:
$$\left(\cup_{i=k}^{r^*} I_1^{(i)}\right) \cup I_2\cup I_3.$$
Therefore, for every vertex $v$, it holds that $\phi_{T_0+1+r^*}(v)\in I_1^{(r^*)}\cup I_2\cup I_3$. After one more round, for every vertex $v$, it holds that $\phi_{T_0+2+r^*}(v)\in I_2\cup I_3$.
\end{proof}

Next, we consider the core stage and the transition-out stage, and prove \Cref{lemma:self-stabilizing-time-complexity-I2-part1} and \Cref{lemma:self-stabilizing-time-complexity-I2-part2}.

\begin{proof}[Proof of \Cref{lemma:self-stabilizing-time-complexity-I2-part1}]
By \Cref{lemma:self-stabilizing-time-complexity-I1}, every vertex $v$ has $\phi_{T_0+r^*+2}\in I_2\cup I_3$. If ``$\phi_{T_0+r^*+2}(v)\in I_2$ and $a_{T_0+r^*+2}(v)\in [\lambda]$'' or ``$\phi_{T_0+r^*+2}\in I_3$'', then we are already done. Otherwise, vertex $v$ has ``$\phi_{T_0+r^*+2}(v)\in I_2$ and $a_{T_0+r^*+2}(v)\in [\lambda,\lambda^2)$'' and runs \Cref{alg-line:self-stabilizing-phase2-stage2-start} to \Cref{alg-line:self-stabilizing-phase2-stage2-end} of \Cref{alg:self-stabilizing-phase2-stage2} from round $T_0+r^*+3$ to $t_v^*$. In such case, we use the same proof strategy as in the proof of \Cref{lemma:phase2-stage2-time-complexity} (see \cpageref{proof:lemma:phase2-stage2-time-complexity}). Specifically, the first claim and the second claim in that proof still hold with an offset $(T_0+1)$ on round number. Combining the two claims, we know starting from round $T_0+r^*+3$, within $\lambda$ rounds, there must exists a round $t$ in which, the reduction condition $|M_t(v)|\leq \Delta$ is satisfied. As a result, by the end of round $t\leq T_0+r^*+3+\lambda $, we have $a_t(v)\in [\lambda]$ and $t^*_v=t$.
\end{proof}

\begin{proof}[Proof of \Cref{lemma:self-stabilizing-time-complexity-I2-part2}]
By \Cref{lemma:self-stabilizing-time-complexity-I2-part1}, every vertex $v$ has either ``$\phi_{T_0+r^*+\lambda+3}(v)\in I_2$ and $a_{T_0+r^*+\lambda+3}(v) \in [\lambda]$'' or ``$\phi_{T_0+r^*+\lambda+3}(v)\in I_3$''. If $\phi_{T_0+r^*+\lambda+3}(v)\in I_3$, then $t^{\#}_v\leq T_0+r^*+4\lambda+3$ holds trivially and we are done. So, assume this is not the case.

Consider a vertex $v$ with $\phi_{T_0+r^*+\lambda+3}(v)\in I_2$ and $a_{T_0+r^*+\lambda+3}(v) \in [\lambda]$, let $t^-_v>T_0+r^*+\lambda+3$ be the smallest round number such that every $u\in N(v)$ with $\phi_{t^-_v-1}(u)\in I_2$ has $a_{{t^-_v}-1}(v)\leq a_{{t^-_v}-1}(u)<\lambda$ or $\phi_{{t^-_v}-1}(u)\in I_3$. (That is, the ``if'' condition in Line \ref{alg-line:self-stabilizing-phase2-stage3-if-cond-2} of \Cref{alg:self-stabilizing-phase2-stage3} is first satisfied in round $t^-_v$.) Further define $t^+_v\geq t^-_v$ to be the smallest round number such that $\phi_{t^+_v}(v)\in I_2$ and $d_{t^+_v}(v)\neq \mu$ or $\phi_{t^+_v}(v)\in I_3$. By definition and the algorithm description, we have $t^*_v\leq t^-_v\leq t^+_v\leq t^{\#}_v$. Moreover, if faults no longer occur, it is easy to verify that once the ``if'' condition in Line \ref{alg-line:self-stabilizing-phase2-stage3-if-cond-2} of \Cref{alg:self-stabilizing-phase2-stage3} is satisfied for vertex $v$ in round $t^-_v$, then it is satisfied for any round $t> t^-_v$ as long as $\phi_{t-1}(v)\in I_2$.

To prove the lemma, we prove a stronger claim: for any vertex $v\in V$ with $\phi_{T_0+r^*+\lambda+3}(v)\in I_2$ and $a_{T_0+r^*+\lambda+3}(v) \in [\lambda]$, it holds that $t^-_v\leq T_0+r^*+\lambda +1+ 3(a_{T_0+r^*+\lambda+3}(v)+1)$, $t^+_v\leq T_0+r^*+\lambda +2+ 3(a_{T_0+r^*+\lambda+3}(v)+1)$ and $t^{\#}_v\leq T_0+r^*+\lambda +3 +  3(a_{T_0+r^*+\lambda+3}(v)+1)$.

We prove the claim via an induction on the value of $a$ at the end of round $T_0+r^*+\lambda+3$, which is in $[\lambda]$. For the base case, fix a vertex $w$ with the minimum $a$ value at the end of round $T_0+r^*+\lambda+3$. By \Cref{lemma:self-stabilizing-time-complexity-I2-part1} and algorithm description, every vertex $v\in V$ has either  $\phi_{T_0+r^*+\lambda+3}(v)\in I_2$ and $a_{T_0+r^*+\lambda+3}(v) \in [\lambda]$ or $\phi_{T_0+r^*+\lambda+3}(v)\in I_3$. Recall $w$ has the minimum $a$ value at the end of round $T_0+r^*+\lambda+3$, we know $t^-_w=T_0+r^*+\lambda+4$. In round $t^-_w$, there are three potential cases:
\begin{itemize}
	\item Case 1: $d_{t^-_w-1}(w)=\mu$. Then vertex $w$ selects a value $d$ in round $t^-_w$, and we have $t^+_w=t^-_w$.
	\item Case 2: $d_{t^-_w-1}(w)\neq\mu$ and the ``if'' condition in Line \ref{alg-line:self-stabilizing-phase2-stage3-proper-d} of \Cref{alg:self-stabilizing-phase2-stage3} is satisfied. Then, vertex $w$ sets $d_{t^-_w}(t)=\mu$ and in round $t^-_w+1$ it will select a new $d$ value not equaling to $\mu$. Thus, we have $t^+_w=t^-_w+1$ in this case.
	\item Case 3: $d_{t^-_w-1}(w)\neq \mu$ and the ``if'' condition in Line \ref{alg-line:self-stabilizing-phase2-stage3-proper-d} of \Cref{alg:self-stabilizing-phase2-stage3} is not satisfied. Then, in round $t^-_w$, vertex $w$ either transforms its color to $I_3$ or stay in $I_2$. In both cases, we have $t^+_w=t^-_w$.
\end{itemize}

Before proceeding, we prove an auxiliary claim, which intuitively states that once there are no errors, resetting $d$ to $\mu$ (i.e., Line \ref{alg-line:self-stabilizing-phase2-stage3-reset-d} of \Cref{alg:self-stabilizing-phase2-stage3}) occurs at most once for each vertex.

\begin{claim*}
For any round $t>t^+_w$ with $\phi_{t-1}(w)\in I_2$, it holds that
$$|L_{(t-1,d_{t-1}(w))}(v)\cap L'_{t-1}(w)|\leq \Delta/\mu.$$
\end{claim*}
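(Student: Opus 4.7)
The plan is to prove the claim by induction on $t$, simultaneously maintaining two properties: (a) $b_{t-1}(w)=b_{t^+_w}(w)$ and $d_{t-1}(w)=d_{t^+_w}(w)$, and (b) the target inequality $|L_{(t-1,d_{t-1}(w))}(w)\cap L'_{t-1}(w)|\leq \Delta/\mu$. Property (b) guarantees that Line \ref{alg-line:self-stabilizing-phase2-stage3-reset-d} of \Cref{alg:self-stabilizing-phase2-stage3} never fires, so $d(w)$ is not reset to $\mu$; since the transition-out stage never alters $b$, property (a) follows from (b) applied to all earlier rounds. Write $\mathcal{L}\triangleq L_{(t^+_w-1,d_{t^+_w}(w))}(w)$; by (a) this equals $L_{(t-1,d_{t-1}(w))}(w)$ throughout the analysis.

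To bound $|\mathcal{L}\cap L'_{t-1}(w)|$, I would split $L'_{t-1}(w)$ according to whether each contributing neighbor $u$ was already in $I_3$ at round $t^+_w-1$ (``old'') or entered $I_3$ at some round $t_u\in[t^+_w,t-1]$ (``new''). For the old set, I would first observe that such a $u$'s color is frozen during $[t^+_w,t-1]$, because $\phi_s(w)\in I_1\cup I_2$ for every $s\leq t-1$ makes $w$ a non-$I_3$ neighbor and so the standard-reduction update in \Cref{alg:self-stabilizing-phase3} cannot fire at $u$; hence the old contribution is contained in $\mathcal{L}\cap L_{t^+_w-1}(w)$. Because $d_{t^+_w}(w)$ was chosen at Line \ref{alg-line:self-stabilizing-phase2-stage3-d-rule} to minimize $|L_{(t^+_w-1,\hat{i})}(w)\cap L_{t^+_w-1}(w)|$ over $\hat{i}\in[\mu]$, and because the $\mu$ palettes $\{L_{(t^+_w-1,i)}(w)\}_{i\in[\mu]}$ are pairwise disjoint (their entries differ in the second coordinate $x\cdot\mu$ versus $(\cdots+i)\bmod\mu$) while $|L_{t^+_w-1}(w)|\leq \Delta$, pigeonhole gives an old contribution of at most $\Delta/\mu$.

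The main obstacle, and the heart of the proof, is to show that new $u$'s contribute nothing. The case $t_u=t^+_w$ is the delicate one. In that round $w$ must execute Line \ref{alg-line:self-stabilizing-phase2-stage3-if-cond-3}, so the Line \ref{alg-line:self-stabilizing-phase2-stage3-if-cond-2} condition at $w$ forces $a_{t^+_w-1}(u)\geq a_{t^+_w-1}(w)$ for every $I_2$-neighbor; simultaneously $u$'s own Line \ref{alg-line:self-stabilizing-phase2-stage3-if-cond-2} check, required for $u$ to transition, forces $a_{t^+_w-1}(u)\leq a_{t^+_w-1}(w)$. These equalize, so with $T_w[u]=0$ the transition-out error-check in \Cref{alg:self-stabilizing-phase2-stage3} forces $T_u[w]=1$, placing $w\in A_{t^+_w-1}(u)$. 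But then the Line \ref{alg-line:self-stabilizing-phase2-stage3-if-cond-5} guard at $u$ requires $d_{t^+_w-1}(w)\neq \mu$, contradicting the minimality defining $t^+_w$; hence no such $u$ exists. For $t_u\in[t^+_w+1,t-1]$, the same equalization gives $a_{t_u-1}(u)=a_{t_u-1}(w)$ and $w\in A_{t_u-1}(u)$, and by the inductive hypothesis $d_{t_u-1}(w)=d_{t^+_w}(w)\neq \mu$, so when $u$ picks its $I_3$-color it explicitly excludes $L_{(t_u-1,d_{t_u-1}(w))}(w)=\mathcal{L}$. Combining old and new contributions yields the desired $\Delta/\mu$ bound, and the base case $t=t^+_w+1$ is exactly this calculation with the set of new transitioners shown empty.
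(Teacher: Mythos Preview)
Your overall strategy---induction on $t$, splitting $L'_{t-1}(w)$ into ``old'' neighbors already in $I_3$ at round $t^+_w-1$ and ``new'' neighbors that transition later---matches the paper's proof. However, there is a genuine gap: you assert that $d_{t^+_w}(w)$ was set at Line~\ref{alg-line:self-stabilizing-phase2-stage3-d-rule} via the minimization rule, but in the self-stabilizing setting this need not hold. The surrounding argument in the paper distinguishes three cases at round $t^-_w$; in its Case~3 one has $d_{t^-_w-1}(w)\neq\mu$ and the check at Line~\ref{alg-line:self-stabilizing-phase2-stage3-proper-d} \emph{passes}, so $t^+_w=t^-_w$ and $d_{t^+_w}(w)$ is simply inherited---possibly from an adversarially corrupted state. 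No pigeonhole bound on $|\mathcal{L}\cap L_{t^+_w-1}(w)|$ is then available. What you \emph{do} know in that case is $|\mathcal{L}\cap L'_{t^+_w-1}(w)|\leq\Delta/\mu$ (precisely because Line~\ref{alg-line:self-stabilizing-phase2-stage3-proper-d} passed), and since the old part of $L'_{t-1}(w)$ is in fact contained in $L'_{t^+_w-1}(w)$ (not merely in $L_{t^+_w-1}(w)$), this still suffices.

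The same oversight breaks your ``no new transitioners at $t_u=t^+_w$'' step: your contradiction relies on $d_{t^+_w-1}(w)=\mu$, which holds in Cases~1 and~2 but fails in Case~3. There a neighbor $u$ \emph{can} transition at round $t^+_w$; the correct handling is that since $d_{t^+_w-1}(w)\neq\mu$ and $w\in A_{t^+_w-1}(u)$, vertex $u$ explicitly excludes $\mathcal{L}$ when selecting its $I_3$-color---exactly your reasoning for $t_u>t^+_w$, applied one step earlier without appealing to the inductive hypothesis. With these two adjustments for Case~3 your proof goes through and coincides with the paper's.
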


\begin{proof}
We prove by induction on $t$, and we begin with the base case $t=t^+_w+1$.
\begin{itemize}
	\item In case 1 and case 2, vertex $w$ selects a new $d$ value in round $t^+_w=t-1$. By Line \ref{alg-line:self-stabilizing-phase2-stage3-d-rule} in \Cref{alg:self-stabilizing-phase2-stage3} for setting $d_{t-1}(w)$ and the pigeonhole principle, we have $|L_{(t-1,d_{t-1}(w))}(w)\cap L_{t-2}(w)|\leq\Delta/\mu$. Since $L_{t-2}'(w)\subseteq L_{t-2}(w)$, we have $|L_{(t-1,d_{t-1}(w))}(w)\cap L'_{t-2}(w)|\leq\Delta/\mu$. Observe that, some neighbors of $w$ may map their colors from $I_2$ to $I_3$ in round $t-1$, we continued to prove that $L'_{t-2}(w)=L'_{t-1}(w)$. Consider such a neighbor $u$ of $w$, it must have $a_{t-1}(w)= a_{t-1}(u)$, as being able to map its color from $I_2$ to $I_3$ means the ``if'' condition in Line \ref{alg-line:self-stabilizing-phase2-stage3-if-cond-2} of \Cref{alg:self-stabilizing-phase2-stage3} is satisfied for $u$ in round $t-1$. Since vertex $w$ is in proper state, we have $T_w[u]+T_u[w]\neq 0$ at the beginning of round $t$. For the case $T_w[u]\neq 0$, although $u$ has a color in $I_3$, its new color will not be in $L_{t-1}'(w)$ as $T_w[u]\neq 0$. For the case $T_u[w]\neq 0$, since $d_{t-2}(w)=\mu$, the ``if'' condition at Line \ref{alg-line:self-stabilizing-phase2-stage3-if-cond-5} of \Cref{alg:self-stabilizing-phase2-stage3} will not be satisfied for $u$ in round $t-1$, meaning $u$ cannot map its color from $I_2$ to $I_3$ in round $t-1$. Thus we have $L'_{t-2}(w)=L'_{t-1}(w)$ and $|L_{(t-1,d_{t-1}(w))}(w)\cap L'_{t-1}(w)|= |L_{(t-1,d_{t-1}(w))}(w)\cap L'_{t-2}(w)| \leq \Delta/\mu$.

	\item In case 3, for vertex $w$, the ``if'' condition in Line \ref{alg-line:self-stabilizing-phase2-stage3-proper-d} of \Cref{alg:self-stabilizing-phase2-stage3} is not satisfied in round $t^+_w=t-1$, thus $|L_{(t-2,d_{t-2}(w))}(w)\cap L_{t-2}'(w)|\leq \Delta/\mu$. Since  $d_{t-1}(w)=d_{t-2}(w)$, we have $L_{(t-2,d_{t-2}(w))}(w) = L_{(t-1,d_{t-1}(w))}(w)$. Consider a neighbor $u$ of $w$ that maps its color from $I_2$ to $I_3$ in round $t-1$, it must be the case that $a_{t-1}(w)= a_{t-1}(u)$. We continued to prove that  $\phi_{t-1}(u)$  either is not in $L_{(t-1,d_{t-1}(w))}(w)$ or not in $L_{t-1}'(w)$. Since vertex $w$ is in proper state, we have $T_w[u]+T_u[w]\neq 0$ at the beginning of round $t$. For the case $T_w[u]\neq 0$, although $u$ has a color in $I_3$, its new color will not be in $L_{t-1}'(w)$ as it has $T_w[u]\neq 0$. For the case $T_u[w]\neq 0$, then the color in $I_3$ selected by $u$ in round $t-1$ is not in $L_{(t-2,d_{t-2}(v))}(v)=L_{(t-1,d_{t-1}(v))}(v)$. Thus we have $|L_{(t-1,d_{t-1}(v))}(v)\cap L'_{t-1}(v)|= |L_{(t-2,d_{t-2}(v))}(v)\cap L'_{t-2}(v)| \leq  \Delta/\mu$.
\end{itemize}
By now we have proved the base case. Notice that the inductive step can be proved by the same argument as in case 3, we conclude the claim is true.
\end{proof}

We resume the lemma proof. Due to the above claim, we know for any round $t>t^+_w$ with $\phi_{t-1}(w)\in I_2$, it holds that $d_{t-1}(w)\neq \mu$ and $|L_{(t-1,d_{t-1}(w))}(w)\cap L'_{t-2}(w)|\leq \Delta/\mu$, and its $d$ value will not change anymore. Now, recall vertex $w$ has the minimum $a$ value at the end of round $T_0+r^*+\lambda+3$, and that $t^-_w=T_0+r^*+\lambda+4$, $t^+_w\leq t^-_w+1= T_0+r^*+\lambda+5$. In round $T_0+r^*+\lambda+6$, if $\phi_{T_0+r^*+\lambda+5}\in I_2$, then the ``if" condition in Line \ref{alg-line:self-stabilizing-phase2-stage3-if-cond-2} and Line \ref{alg-line:self-stabilizing-phase2-stage3-if-cond-5} of \Cref{alg:self-stabilizing-phase2-stage3} will be satisfied, and the ``if" condition in Line \ref{alg-line:self-stabilizing-phase2-stage3-if-cond-3} and Line \ref{alg-line:self-stabilizing-phase2-stage3-if-cond-4} of \Cref{alg:self-stabilizing-phase2-stage3} will not not be satisfied. As a result, by \Cref{claim:self-stabilizing-phase2-stage3-bounded-k}, vertex $w$ will obtain a new color $\phi_{T_0+r^*+\lambda+6}\in I_3$. Hence, we have $t^{\#}_w=T_0+r^*+\lambda+6$. This completes the proof for the base case.

Assume our claim holds for every vertex $v$ with $a_{T_0+r^*+\lambda+3}(v)\leq i \in [\lambda-1]$. Consider a vertex $w$ with $a_{T_0+r^*+\lambda+3}(w)=i+1$. By the induction hypothesis, for every vertex $v$ with $a_{T_0+r^*+\lambda+3}(v)\leq i$, it holds that $t^{\#}_v\leq T_0+r^*+\lambda +3 +  3(a_{T_0+r^*+\lambda+3}(v)+1)\leq T_0+r^*+\lambda +3 + 3(i+1) = T_0+r^*+\lambda + 3((i+1)+1)$. Then, we have $t^-_w\leq T_0+r^*+\lambda + 3((i+1)+1) + 1$. Apply the same argument as in the base case, we have $t^+_w\leq T_0+r^*+\lambda  + 3((i+1)+1)  + 2$ and $t^{\#}_w\leq T_0+r^*+\lambda  + 3((i+1)+1) + 3$. This completes the proof for the inductive step.

We conclude that for every vertex $v$ with $\phi_{T_0+r^*+\lambda+3}(v)\in I_2$ and $a_{T_0+r^*+\lambda+3}(v) \in [\lambda]$, it holds that $t^{\#}_w\leq T_0+r^*+4\lambda +3$. This completes the proof of this lemma.
\end{proof}

We can now prove \Cref{lemma:self-stabilizing-time-complexity-I3}, which implies the stabilization time of our algorithm.

\begin{proof}[Proof of \Cref{lemma:self-stabilizing-time-complexity-I3}]

By \Cref{lemma:self-stabilizing-time-complexity-I2-part2}, by the end of round $T_0+r^*+4\lambda+3$, every vertex must have its color in $I_3$, and will run \Cref{alg:self-stabilizing-phase3} starting from round $T_0+r^*+4\lambda+4$. In each such round, by \Cref{lemma:self-stabilizing-correctness}, every vertex is in some proper state. Hence, by \Cref{alg:self-stabilizing-phase3}, if there still exists a vertex with color not in $[\Delta+1]$, then the maximum value of the color used by any vertex will be reduced by at least one. Recall that every vertex in interval $I_3$ has its color in $[\ell_3]$ with $\ell_3=\Delta+2(\sqrt{m_3}+1)\cdot\mu$. Therefore, by the end of round $T_0+r^*+4\lambda+3+(\ell_3-(\Delta+1))=T_0+r^*+4\lambda+2+2(\sqrt{m_3}+1)\mu$, every vertex has its color in $[\Delta+1]$. Moreover, in any later round, by \Cref{alg:self-stabilizing-phase3}, the color of any vertex will remain in $[\Delta+1]$.
\end{proof}

\end{document}